\keywords{two-player games on graphs, stochastic games, Markov decision processes, finite-memory determinacy, optimal strategies}
\tikzset{decoration={snake,amplitude=.8mm,segment length=3mm,
		post length=1mm,pre length=0mm}}
\let\oldExists\exists%
\renewcommand{\exists}{\oldExists\,}
\let\oldForall\forall%
\renewcommand{\forall}{\oldForall\,}
\newcommand{\qedEx}{\hfill$\lhd$} 
\newcommand{\IN}{\mathbb{N}}
\newcommand{\IR}{\mathbb{R}}
\newcommand*{\intervaloo}[1]{\mathopen(#1\mathclose)} 
\newcommand*{\card}[1]{\ensuremath{\lvert#1\rvert}}
\newcommand{\Dist}{\ensuremath{\mathsf{Dist}}}
\newcommand{\probSpace}{\ensuremath{\Omega}}
\newcommand*{\Cyl}[1]{\ensuremath{\mathsf{Cyl}(#1)}}
\newcommand{\oalg}{\ensuremath{\mathcal{F}}}
\newcommand{\dist}{\ensuremath{\mu}}
\newcommand{\distSet}{\ensuremath{\Lambda}}
\newcommand*{\player}[1]{\ensuremath{\mathcal{P}_{#1}}}
\newcommand{\Pone}{\ensuremath{\player{1}}}
\newcommand{\Ptwo}{\ensuremath{\player{2}}}
\newcommand{\clr}{\ensuremath{c}}
\newcommand{\colors}{\ensuremath{C}}
\newcommand{\st}{\ensuremath{s}}
\newcommand{\states}{\ensuremath{S}}
\newcommand{\initStates}{\ensuremath{\states_{\mathsf{init}}}}
\newcommand*{\histOut}[1]{\ensuremath{\mathsf{out}(#1)}}
\newcommand{\act}{\ensuremath{A}}
\newcommand{\action}{\ensuremath{a}}
\newcommand*{\actions}[1]{\ensuremath{\act(#1)}}
\newcommand{\transProb}{\ensuremath{\delta}}
\newcommand{\colSolo}{\ensuremath{\mathsf{col}}}
\newcommand{\col}[1]{\ensuremath{\colSolo(#1)}}
\newcommand{\arena}{\ensuremath{\mathcal{A}}}
\newcommand{\colHatSolo}{\ensuremath{\mathsf{\widehat{col}}}}
\newcommand*{\colHat}[1]{\ensuremath{\colHatSolo(#1)}}
\newcommand*{\supp}[1]{\ensuremath{\mathsf{Supp}(#1)}}
\newcommand{\arenaFull}{\ensuremath{(\states_1, \states_2, \act, \transProb, \colSolo)}}
\newcommand{\arenaFullIndex}[1]{\ensuremath{(\states_1^{#1}, \states_2^{#1}, \act^{#1}, \transProb^{#1}, \colSolo^{#1})}}
\newcommand{\initArena}{\ensuremath{(\arena, \initStates)}}
\newcommand{\hist}{\ensuremath{\rho}}
\newcommand{\usualHist}{\ensuremath{\st_0\action_1\st_1\ldots\action_n\st_n}}
\newcommand{\play}{\ensuremath{\pi}}
\newcommand{\Plays}{\ensuremath{\mathsf{Plays}}}
\newcommand{\Hists}{\ensuremath{\mathsf{Hists}}}
\newcommand{\word}{\ensuremath{w}}
\newcommand{\memState}{\ensuremath{m}}
\newcommand{\memStates}{\ensuremath{M}}
\newcommand{\memUpd}{\alpha_{\mathsf{upd}}}
\newcommand{\memNxt}{\alpha_{\mathsf{nxt}}}
\newcommand{\memUpdHat}{\ensuremath{\widehat{\memUpd}}}
\newcommand{\memInit}{\ensuremath{\memState_{\mathsf{init}}}}
\newcommand{\mealy}{\ensuremath{\Gamma}}
\newcommand{\memSkel}{\ensuremath{\mathcal{M}}}
\newcommand{\memSkelFull}{\ensuremath{(\memStates, \memInit, \memUpd)}}
\newcommand*{\memSkelFullIndex}[1]{\ensuremath{(\memStates^{#1}, \memInit^{#1}, \memUpd^{#1})}}
\newcommand*{\memLang}[2]{\ensuremath{L_{#1,#2}}}
\newcommand*{\prodAS}[2]{\ensuremath{#1 \ltimes #2}}
\newcommand{\memProduct}{\otimes}
\newcommand{\memSkelTriv}{\ensuremath{\memSkel_{\mathsf{triv}}}}
\newcommand{\coverFunction}{\ensuremath{\phi}}
\newcommand{\pref}{\ensuremath{\sqsubseteq}} 
\newcommand{\strictPref}{\ensuremath{\sqsubset}}
\newcommand*{\inverse}[1]{{#1}^{-1}}
\newcommand{\invPref}{\ensuremath{\inverse{\pref}}}
\newcommand{\wc}{\ensuremath{W}} 
\newcommand{\pf}{\ensuremath{f}} 
\newcommand{\game}{\ensuremath{\mathcal{G}}}
\newcommand{\initGameFull}{\ensuremath{(\arena, \initStates, \pref)}}
\newcommand{\strat}{\ensuremath{\sigma}}
\newcommand{\stratBis}{\ensuremath{\tau}}
\newcommand*{\stratsType}[3]{\ensuremath{\Sigma_{#1}^{#2}(#3)}}
\newcommand{\FM}{\ensuremath{\mathsf{FM}}}
\newcommand{\pure}{\ensuremath{\mathsf{P}}}
\newcommand{\rand}{\ensuremath{\mathsf{G}}}
\newcommand{\PFM}{\ensuremath{\pure\FM}}
\newcommand{\RFM}{\ensuremath{\rand\FM}}
\newcommand{\allStrats}{\ensuremath{\{\PFM, \pure, \RFM, \rand\}}}
\newcommand{\UCol}{\mathsf{UCol}}
\newcommand{\DCol}{\mathsf{DCol}}
\newcommand*{\prSolo}[3]{\ensuremath{\mathsf{P}_{#1, #2}^{#3}}}
\newcommand*{\pr}[4]{\ensuremath{\prSolo{#1}{#2}{#3}[#4]}}
\newcommand{\prcOperator}{\ensuremath{\mathsf{Pc}}}
\newcommand*{\prcSolo}[3]{\ensuremath{\prcOperator_{#1, #2}^{#3}}}
\newcommand*{\prcSoloLocalFix}[2]{\ensuremath{\prcOperator_{#1}^{#2}}}
\newcommand*{\prc}[4]{\ensuremath{\prcSolo{#1}{#2}{#3}[#4]}}
\newcommand{\expectation}{\ensuremath{\mathsf{E}}}
\newcommand{\stratType}{\ensuremath{\mathsf{X}}}
\newcommand*{\cl}[3]{\ensuremath{[#1]_{#2}^{#3}}}
\newcommand{\stoch}{\ensuremath{\mathsf{S}}}
\newcommand{\deter}{\ensuremath{\mathsf{D}}}
\newcommand*{\PoneArenas}[1]{\ensuremath{\arenaFam^{#1}_{\Pone}}}
\newcommand{\PoneDetArenas}{\ensuremath{\PoneArenas{\deter}}}
\newcommand{\PoneStochArenas}{\ensuremath{\PoneArenas{\stoch}}}
\newcommand{\arenaType}{\ensuremath{\mathsf{Y}}}
\newcommand{\PoneA}{\ensuremath{\PoneArenas{\arenaType}}}
\newcommand{\arenaJoin}{\ensuremath{\sqcup}}
\newcommand{\bigArenaJoin}{\ensuremath{\bigsqcup}}
\newcommand{\joinState}{\ensuremath{t}}
\newcommand*{\arenaPref}[3]{\ensuremath{#1_{{#2}\leadsto{#3}}}}
\newcommand{\blank}{\ensuremath{{\,\cdot}}}
\newcommand{\MP}{\ensuremath{\mathsf{MP}}}
\newcommand{\simPar}{\ensuremath{\wc_\mathsf{wp}}}
\newcommand{\Mmax}{\ensuremath{\memSkel_\mathsf{max}}} 
\newcommand{\pathFun}{\ensuremath{\mathcal{H}}}
\newcommand{\stratFun}{\ensuremath{f}}
\newcommand*{\bijPaths}[1]{\ensuremath{\phi^{#1}}}
\newcommand*{\bijPlays}[1]{\ensuremath{\bijPaths{#1}_{\infty}}}
\newcommand*{\bijSplit}[2]{\ensuremath{\Phi_{#1}^{#2}}}
\newcommand*{\size}[1]{\ensuremath{n_{#1}}}
\newcommand{\arenaFam}{\ensuremath{\mathfrak{A}}}
\newcommand*{\shiftStrat}[2]{\ensuremath{#1[#2]}}
\newcommand*{\shiftPref}[1]{\ensuremath{\pref_{[#1]}}}
\newcommand{\bijSt}{\ensuremath{\psi_\states}}
\newcommand*{\bijAct}[1]{\ensuremath{\psi_\act^{#1}}}
\newcommand{\arenaSplitFixedStrat}{\ensuremath{\splitArena{\arena}{t}_{\strat_2}}}
\newcommand{\disjUnion}{\ensuremath{\uplus}}
\newcommand\restr[2]{{
		\left.\kern-\nulldelimiterspace
		#1\right|_{#2}}}
\newcommand*\ud{\mathop{}\!\mathrm{d}}
\newcommand{\discFac}{\ensuremath{\lambda}}
\newcommand{\disc}{\ensuremath{\mathsf{Disc}_\discFac}}
\renewcommand{\complement}[1]{\ensuremath{{#1}^{\mathsf{c}}}}
\newcommand{\F}{\ensuremath{\lozenge}}
\newcommand{\GG}{\ensuremath{\square}}
\newcommand{\Amon}{\ensuremath{\arena_\mathsf{mon}}}
\newcommand{\Asel}{\ensuremath{\arena_\mathsf{sel}}}
\newcommand*{\splitArena}[2]{\ensuremath{\mathsf{Split}(#1, #2)}}
\newcommand*{\splitInitStates}[1]{\ensuremath{\initStates^{#1}}}
\tikzset{circle split part fill/.style  args={#1,#2}{%
 alias=tmp@name,
  postaction={%
    insert path={
     \pgfextra{%
     \pgfpointdiff{\pgfpointanchor{\pgf@node@name}{center}}%
                  {\pgfpointanchor{\pgf@node@name}{east}}%
     \pgfmathsetmacro\insiderad{\pgf@x}
      \fill[#1] (\pgf@node@name.base) ([xshift=-\pgflinewidth]\pgf@node@name.east) arc
                          (0:180:\insiderad-\pgflinewidth)--cycle;
      \fill[#2] (\pgf@node@name.base) ([xshift=\pgflinewidth]\pgf@node@name.west)  arc
                           (180:360:\insiderad-\pgflinewidth)--cycle;            
         }}}}}
\tikzstyle{rond}=[draw,circle,minimum height=7mm]
\tikzstyle{oval}=[draw,ellipse,minimum height=7mm]
\tikzstyle{carre}=[draw,minimum width=6mm,minimum height=6mm]
\def\listof#1{\expandafter\@listof#1+\@end}
\def\@listof#1+#2\@end{\def\@tempa{#1}\ifx\@tempa\@empty\else
    \langle#1\rangle\fi
  \def\@tempa{#2}\ifx\@tempa\@empty\else,\@listof#2\@end\fi}
\def\stackof#1{\begin{array}{@{}>{\scriptstyle}c@{}}\expandafter\@stackof#1+\@end}
\def\@stackof#1+#2\@end{\def\@tempa{#1}\ifx\@tempa\@empty\else
    \langle #1\rangle \fi
  \def\@tempa{#2}\ifx\@tempa\@empty\end{array}\else\\[-1mm]\@stackof#2\@end\fi}
\begin{document}

\title[Arena-Independent Finite-Memory Determinacy in Stochastic Games]{Arena-Independent Finite-Memory Determinacy in Stochastic Games\rsuper*}
\titlecomment{{\lsuper*}Research supported by the Fonds de la Recherche Scientifique -- FNRS under Grants n$^\circ$ F.4520.18 (ManySynth) and n$^\circ$ T.0188.23 (PDR ControlleRS), by ENS Paris-Saclay visiting professorship (M.~Randour, 2019), and by the ANR Project MAVeriQ (ANR-20-CE25-0012).
Mickael Randour is an F.R.S.-FNRS Research Associate and Pierre Vandenhove is an F.R.S.-FNRS Research Fellow.}

\author[P.~Bouyer]{Patricia~Bouyer\lmcsorcid{0000-0002-2823-0911}}[a]
\address{Université Paris-Saclay, CNRS, ENS Paris-Saclay, Laboratoire Méthodes Formelles, 91190, Gif-sur-Yvette, France}

\author[Y.~Oualhadj]{Youssouf~Oualhadj\lmcsorcid{0000-0003-0200-4032}}[b]
\address{Univ Paris Est Creteil, LACL, F-94010 Creteil, France}

\author[M.~Randour]{Mickael~Randour\lmcsorcid{0000-0001-8777-2385}}[c]
\address{F.R.S.-FNRS \& UMONS -- Universit\'e de Mons, Belgium}

\author[P.~Vandenhove]{Pierre~Vandenhove\lmcsorcid{0000-0001-5834-1068}}[a,c]

\begin{abstract}
	\noindent We study stochastic zero-sum games on graphs, which are prevalent tools to model decision-making in presence of an antagonistic opponent in a random environment.
	In this setting, an important question is the one of \emph{strategy complexity}: what kinds of strategies are sufficient or required to play optimally (e.g., randomization or memory requirements)?
	Our contributions further the understanding of \emph{arena-independent finite-memory (AIFM) determinacy}, i.e., the study of objectives for which memory is needed, but in a way that only depends on limited parameters of the game graphs.
	First, we show that objectives for which pure AIFM strategies suffice to play optimally also admit pure AIFM \emph{subgame perfect} strategies.
	Second, we show that we can reduce the study of objectives for which pure AIFM strategies suffice in two-player stochastic games to the easier study of one-player stochastic games (i.e., Markov decision processes).
	Third, we \emph{characterize} the sufficiency of AIFM strategies through two intuitive properties of objectives.
	This work extends a line of research started on deterministic games to stochastic ones.
\end{abstract}

\maketitle

\section{Introduction}\label{sec:intro}
\emph{Controller synthesis} consists, given a system, an environment, and a specification, in automatically generating a controller of the system that guarantees the specification in the environment.
This task is often studied through a game-theoretic lens: the system is a game, the controller is a player, the uncontrollable environment is its adversary, and the specification is a game objective~\cite{rECCS}.
A \emph{game on graph} consists of a directed graph, called an \emph{arena}, partitioned into two kinds of vertices: some of them are controlled by the system (called \emph{player~$1$}) and the others by the environment (called \emph{player~$2$}).
Player~$1$ is given a \emph{game objective} (corresponding to the specification) and must devise a \emph{strategy} (corresponding to the controller) to accomplish the objective or optimize an outcome.
The strategy can be seen as a function that dictates the decisions to make in order to react to every possible chain of events.
In case of uncertainty in the system or the environment, probability distributions are often used to model transitions in the game graph, giving rise to the \emph{stochastic game} model.
We study here \emph{stochastic turn-based zero-sum games on graphs}~\cite{Con92}, also called perfect-information stochastic games.
We also discuss the case of \emph{deterministic games}, which can be seen as a subcase of stochastic games in which only Dirac distributions are used in transitions.

\paragraph{Strategy complexity.}
A common question underlying all game objectives is the one of \emph{strategy complexity}: how \emph{complex} must optimal strategies be, and how \emph{simple} can optimal strategies be?
For each distinct game objective, multiple directions can be investigated, such as the need for randomization~\cite{CDGH10} (must optimal strategies make stochastic choices?), the need for memory~\cite{GZ05,GZ09,BLORV22} (how much information about the past must optimal strategies remember?), or what trade-offs exist between randomization and memory~\cite{CAH04,Hor09,CRR14,MPR20}.
With respect to memory requirements, three cases are typically distinguished: \emph{memoryless-determined objectives}, for which \emph{memoryless} strategies suffice to play optimally; \emph{finite-memory-determined objectives}, for which finite-memory strategies suffice (memory is then usually encoded as a deterministic finite automaton); and objectives for which infinite memory is required.
High memory requirements (such as exponential memory and obviously infinite memory) are a major drawback when it comes to implementing controllers; hence specific approaches are often developed to look for \textit{simple} strategies (e.g.,~\cite{DBLP:conf/tacas/DelgrangeKQR20}).

Many classical game objectives (reachability~\cite{Con92}, B\"uchi and parity~\cite{CJH04}, energy~\cite{BBE10}, discounted sum~\cite{Sha53}\ldots) are memoryless-determined, both in deterministic and stochastic arenas.
Nowadays, multiple general results allow for a more manageable proof for most of these objectives: we mention~\cite{Kop06,BFMM11,AR17} for sufficient conditions in deterministic games, and~\cite{Gim07,GK14} for similar conditions in one-player and two-player stochastic games.
One milestone for memoryless determinacy in deterministic games was achieved by Gimbert and Zielonka~\cite{GZ05}, who provide \emph{two characterizations} of it: the first one states two necessary and sufficient conditions (called \emph{monotony} and \emph{selectivity}) for memoryless determinacy, and the second one states that memoryless determinacy in both players' \emph{one-player games} suffices for memoryless determinacy in two-player games (we call this result the \emph{one-to-two-player lift}).
Together, these characterizations provide a theoretical and practical advance.
On the one hand, monotony and selectivity improve the high-level understanding of what conditions well-behaved objectives verify.
On the other hand, only having to consider the one-player case thanks to the one-to-two-player lift is of tremendous help in practice.
A generalization of the one-to-two-player lift to \emph{stochastic games} was shown also by Gimbert and Zielonka in an unpublished paper~\cite{GZ09} and is about memoryless strategies that are \emph{pure} (i.e., not using randomization).

\paragraph{The need for memory.}
Recent research tends to study increasingly complex settings --- such as combinations of qualitative/quantitative objectives or of behavioral models --- for which finite or infinite memory is often required; see examples in deterministic games~\cite{CD12b,DBLP:journals/iandc/VelnerC0HRR15,DBLP:journals/iandc/BruyereFRR17,DBLP:journals/acta/BouyerMRLL18,BHRR19}, Markov decision processes --- i.e., one-player stochastic games~\cite{DBLP:conf/vmcai/RandourRS15,RRS17,DBLP:journals/lmcs/ChatterjeeKK17,DBLP:conf/icalp/BerthonRR17,BDOR20}, or stochastic games~\cite{CFKSW13,CD16,DBLP:conf/concur/ChatterjeeP19,CKWW20,MSTW21}.
Motivated by the growing memory requirements of these endeavors, research about strategy complexity often turns toward \emph{finite-memory determinacy}.
Proving finite-memory determinacy is sometimes difficult (already in deterministic games, e.g.,~\cite{BHMRZ17}), and as opposed to memoryless strategies, there are few widely applicable results.
We mention~\cite{LPR18}, which provides sufficient conditions for finite-memory determinacy in Boolean combinations of finite-memory-determined objectives in deterministic games.
Results for multi-player non-zero-sum games are also available~\cite{LP18}.

\paragraph{Arena-independent finite-memory.}
An interesting middle ground between the well-understood memoryless determinacy and the more puzzling finite-memory determinacy was proposed for deterministic games in~\cite{BLORV22}: an objective is said to admit \emph{arena-independent finite-memory (AIFM) strategies} if a \emph{single finite memory structure} suffices to play optimally in any arena.
In practice, this memory structure may depend on parameters of the objective (for instance, largest weight, number of priorities), but not on parameters intrinsically linked to the arena (e.g., number of states or transitions).
AIFM strategies include as a special case memoryless strategies, since they can be implemented with a trivial memory structure with a single state.

AIFM strategies have a remarkable feature: in deterministic arenas, AIFM generalizations of \emph{both} characterizations from~\cite{GZ05} hold, including the one-to-two-player lift~\cite{BLORV22}.
From a practical point of view, it brings techniques usually linked to memoryless determinacy to many finite-memory-determined objectives.
The aim of this article is to show that this also holds true in stochastic arenas.

AIFM strategies bring therefore an interesting trade-off: they admit good structural properties that facilitate their study (whereas few such results are known about the more general finite-memory strategies), while still being applicable to many objectives (see paragraph \textbf{Applicability} below; for instance, they suffice for all \emph{$\omega$-regular objectives}, while memoryless strategies only suffice for a subclass of these).

\paragraph{Contributions.}
We provide an overview of desirable properties of objectives in which pure AIFM strategies suffice to play optimally in \emph{stochastic} games, and tools to study them.
This entails:
\begin{itemize}
	\item a proof of a specific feature of objectives for which pure AIFM strategies suffice to play optimally: for such objectives, there also exist pure AIFM \emph{subgame perfect (SP)} strategies (Theorem~\ref{prop:optAIFMimpliesSP}), which is a stronger requirement than optimality;
	\item a more general one-to-two-player lift: we show the equivalence between the existence of pure AIFM optimal strategies in two-player games for both players and the existence of pure AIFM optimal strategies in one-player games, thereby simplifying the proof of memory requirements for many objectives (Theorem~\ref{thm:1to2});
	\item two conditions generalizing monotony and selectivity in the stochastic/AIFM case; these conditions are equivalent to the existence of pure AIFM optimal strategies in \emph{one-player stochastic arenas} (Theorem~\ref{thm:monSel}) for objectives that can be encoded as \emph{real payoff functions}.
\end{itemize}
In practice, this last theorem can be used to prove memory requirements in one-player arenas, and then the second theorem can be used to lift these to the two-player case.

These results reinforce both sides on the frontier between AIFM strategies and general finite-memory strategies:
on the one hand, objectives for which pure AIFM strategies suffice indeed share interesting properties with objectives for which pure memoryless strategies suffice, rendering their analysis easier, even in the stochastic case;
on the other hand, our novel result about SP strategies does not hold for (arena-\emph{dependent}) finite-memory strategies, and therefore further distinguishes the AIFM case from the finite-memory case.

The one-to-two-player lift for pure AIFM strategies in stochastic games is not surprising, as it holds for pure memoryless strategies in stochastic games~\cite{GZ09}, and for AIFM strategies in deterministic games~\cite{BLORV22}.
Moreover, although the monotony/selectivity characterization is definitely inspired from the deterministic case~\cite{GZ05,BLORV22}, it had not been formulated for stochastic games, even in the pure memoryless case --- its proof involves new technical difficulties to which our improved understanding of subgame perfect strategies brings insight.

All our results are about the optimality of \emph{pure} AIFM strategies in various settings: they can be applied in an independent way for deterministic games and for stochastic games, and they can also consider optimality under restriction to different classes of strategies (allowing or not the use of randomization and infinite memory).

The proof technique for the one-to-two-player lift shares a similar outline in~\cite{GZ05,GZ09,BLORV22} and in this paper: it relies on an induction on the number of edges in arenas to show the existence of memoryless optimal strategies.
This edge-induction technique is frequently used in comparable ways in other works about memoryless determinacy~\cite{Kop06,Gim07,GK14,CD16}.
In the AIFM case, the extra challenge consists of applying such an induction to the right set of arenas in order for a result about memoryless strategies to imply something about AIFM strategies.
Work in~\cite{BLORV22} paved the way to neatly overcome this technical hindrance and we were able to factorize the main argument in Lemma~\ref{prop:optProdIsCov}.

Although obtaining only results about \emph{pure} strategies can be seen as a limitation, we show in Section~\ref{sec:noLiftRandomized} an example illustrating that the one-to-two-player lift does not hold if we allow for unconstrained randomization in the strategies.

\paragraph{Applicability.}
Let us discuss objectives that admit, or not, pure AIFM optimal strategies in stochastic arenas.
\begin{itemize}
	\item Objectives for which AIFM optimal strategies exist include the aforementioned memoryless-determined objectives~\cite{Con92,CJH04,Sha53,BBE10}, as explained earlier.
	Such objectives could already be studied through the lens of a one-to-two-player lift~\cite{GZ09}, but our two other main results also apply to these.
	\item Pure AIFM optimal strategies exist in lexicographic reachability-safety games~\cite[Theorem~4]{CKWW20}: the memory depends only on the number of targets to visit or avoid, but not on parameters of the arena (number of states or transitions).
	\item Muller objectives whose probability must be maximized~\cite{Cha12} also admit pure AIFM optimal strategies: the number of memory states depends only on the colors and on the Muller condition.
	\item In general, every $\omega$-regular objective admits pure AIFM optimal strategies, as it can be seen as a parity objective (for which pure memoryless strategies suffice) after taking the product of the game graph with a deterministic parity automaton accepting the objective~\cite{Mos84,CH12}.
	This parity automaton can be taken as an arena-independent memory structure.
	It is therefore possible to use our results to investigate precise memory bounds in stochastic games for multiple $\omega$-regular objectives which have been studied in deterministic games or in one-player stochastic games: generalized parity games~\cite{CHP07}, lower- and upper-bounded energy games~\cite{BFLMS08}, some window objectives~\cite{BHR16,BDOR20}, weak parity games~\cite{Tho08} (this last example is detailed in Section~\ref{sec:app:wp}).
	\item There are objectives for which finite-memory strategies suffice for some player, but with an underlying memory structure depending on parameters of the arena (an example is provided by the \emph{Gain} objective in~\cite[Theorem~6]{MSTW21}).
	Many objectives also require infinite memory, such as generalized mean-payoff games~\cite{CD16} (both in deterministic and stochastic games) and energy-parity games (only in stochastic games~\cite{CD12b,MSTW17}).
	Our characterizations provide a more complete understanding of why AIFM strategies do not suffice.
\end{itemize}

\paragraph{Deterministic and stochastic games.}
There are natural ways to extend classical objectives for deterministic games to a stochastic context: typically, for qualitative objectives, a natural stochastic extension is to maximize the probability to win.
Still, in general, memory requirements may increase when switching to the stochastic context.
To show that understanding the deterministic case is insufficient to understand the stochastic case, we outline three situations displaying different behaviors.
\begin{itemize}
	\item As mentioned above, for many classical objectives, memoryless strategies suffice both in deterministic and in stochastic games.
	\item AIFM strategies may suffice both for deterministic and stochastic games, but with a difference in the size of the required memory structure.
	One such example is provided by the weak parity objective~\cite{Tho08}, for which memoryless strategies suffice in deterministic games, but which requires memory in stochastic games (this was already noticed in~\cite[Section~4.4]{GZ09}).
	Yet, it is possible to show that pure AIFM strategies suffice in stochastic games using the results from our paper.
	This shows that to go from the deterministic to the stochastic case, a ``constant'' increase in memory may be necessary and sufficient.
	\item There are also objectives for which memoryless strategies suffice in deterministic games, but even AIFM strategies do not suffice in stochastic games.
	One such example consists in maximizing the probability to obtain a non-negative discounted sum (which is different from maximizing the expected value of the discounted sum, for which memoryless strategies suffice, as is shown in~\cite{Sha53}).
\end{itemize}
Formal proofs for these last two examples are provided in Section~\ref{sec:app}.
These three situations further highlight the significance of establishing results about memory requirements in stochastic games, even for objectives whose deterministic version is well-understood.

\paragraph{Outline.}
We introduce our framework and notations in Section~\ref{sec:prelim}.
We discuss AIFM strategies and tools to relate them to memoryless strategies in Section~\ref{sec:memory}, which allows us to prove our result about subgame perfect strategies.
The one-to-two-player lift is presented in Section~\ref{sec:1to2}, followed by the one-player characterization in Section~\ref{sec:1p}.
We provide illustrative applications of our results in Section~\ref{sec:app}.

This paper is a full version of a preceding conference version~\cite{BORV21Conf}.
This version supplements the conference version with extra examples and remarks, and contains all the detailed proofs of the statements.

\section{Preliminaries}\label{sec:prelim}

Let $\colors$ be a non-empty set of \emph{colors}.
There are no further constraints on set $\colors$; in particular, $\colors$ is allowed to be infinite ($\colors$ can for instance be $\IN$ or $\IR$).
For $B$ a set, we write $B^*$ for the set of finite sequences of elements of $B$ and $B^\omega$ for the set of infinite sequences of elements of $B$.

\paragraph{Probabilities.}
For a measurable space $(\probSpace, \oalg)$ (resp.\ a finite set $\probSpace$), we write $\Dist(\probSpace, \oalg)$ (resp.\ $\Dist(\probSpace)$) for the set of \emph{probability distributions on $(\probSpace, \oalg)$} (resp.\ \emph{on $\probSpace$}).
For $\probSpace$ a finite set and $\dist\in\Dist(\probSpace)$, we write $\supp{\dist} = \{\omega\in\probSpace \mid \dist(\omega) > 0\}$ for the \emph{support of $\dist$}.

\paragraph{Arenas.}
We consider stochastic games played by two players, called $\Pone$ (for player~$1$) and $\Ptwo$ (for player~$2$), who play in a turn-based fashion on \emph{arenas}.

\begin{defi}[Arena]
	A (two-player stochastic turn-based) \emph{arena} is a tuple $\arena = \arenaFull$, where:
	\begin{itemize}
	\item $\states_1$ and $\states_2$ are two disjoint finite sets of \emph{states}, respectively controlled by $\Pone$ and $\Ptwo$ --- we denote $\states = \states_1\uplus\states_2$ for the union of all states;
	\item $\act$ is a finite set of \emph{actions};
	\item $\transProb \colon \states \times \act \to \Dist(\states)$ is a partial function called \emph{probabilistic transition function};
	\item $\colSolo \colon \states \times \act \to \colors$ is a partial function called \emph{coloring function}.
	\end{itemize}
	For a state $\st\in\states$, we write $\actions{\st}$ for the set of actions that are \emph{available in $\st$}, that is, the set of actions for which $\transProb(\st, \action)$ is defined.
	For $\st\in\states$, function $\colSolo$ must be defined for all pairs $(\st, \action)$ such that $\action$ is available in $\st$.
	We require that for all $\st\in\states$, $\actions{\st}\neq\emptyset$.
\end{defi}
The last condition ensures that there is at least one available action in every state (i.e., arenas are \emph{non-blocking}).
For $\st, \st' \in\states$ and $\action\in\actions{\st}$, we usually denote $\transProb(\st, \action, \st')$ instead of $\transProb(\st, \action)(\st')$ for the probability to reach $\st'$ in one step by playing $\action$ in $\st$, and we write $(\st, \action, \st') \in \transProb$ if and only if $\transProb(\st, \action, \st') > 0$.
An interesting subclass of (stochastic) arenas is the class of \emph{deterministic} arenas: an arena $\arena = \arenaFull$ is \emph{deterministic} if for all $\st\in\states$, $\action\in\actions{\st}$, $\card{\supp{\transProb(\st, \action)}} = 1$.

Let $\arena = \arenaFull$ be an arena.
A \emph{play of $\arena$} is an infinite sequence of states and actions $\st_0\action_1\st_1\action_2\st_2\ldots \in (\states\act)^\omega$ such that for all $i \ge 0$, $(\st_i, \action_{i+1}, \st_{i+1}) \in \transProb$.
The set of all plays starting in a state $\st\in\states$ is denoted $\Plays(\arena, \st)$.
A prefix of a play is an element in $\states(\act\states)^*$ and is called a \emph{history}; the set of all histories starting in a state $\st\in\states$ is denoted $\Hists(\arena, \st)$.
For $\states'\subseteq\states$, we write $\Plays(\arena, \states')$ (resp.\ $\Hists(\arena, \states')$) for the unions of $\Plays(\arena, \st)$ (resp.\ $\Hists(\arena, \st)$) over all states $\st\in\states'$.
For $\hist = \usualHist$ a history, we write $\histOut{\hist}$ for $\st_n$.
For $i\in\{1, 2\}$, we write $\Hists_i(\arena, \st)$ and $\Hists_i(\arena, \states')$ for the corresponding histories $\hist$ such that $\histOut{\hist} \in \states_i$.
For $\st\in\states$ (resp.\ $\states'\subseteq \states$) and $\st'\in\states$, we write $\Hists(\arena, \st, \st')$ (resp.\ $\Hists(\arena, \states', \st')$) for the histories in $\Hists(\arena, \st)$ (resp.\ $\Hists(\arena, \states')$) such that $\histOut{\hist} = \st'$.

We extend $\colSolo$ to histories and plays with $\colHatSolo$: for a history $\hist = \usualHist$, we write $\colHat{\rho}$ for the finite sequence $\col{\st_0, \action_1}\ldots\col{\st_{n-1}, \action_n}\in\colors^*$; for $\play = \st_0\action_1\st_1\action_2\st_2\ldots$ a play, we write $\colHat{\play}$ for the infinite sequence $\col{\st_0, \action_1}\col{\st_{1}, \action_2}\ldots\in\colors^\omega$.

A \emph{one-player arena of $\player{i}$} is an arena $\arena = \arenaFull$ such that for all $\st\in \states_{3-i}$, $\card{\actions{\st}} = 1$.
A one-player arena in our context corresponds to the notion of \emph{Markov decision process (MDP)} often found in the literature~\cite{Put94,BK08}.

For technical reasons that will be further justified later, we will usually work on arenas where the set of initial states is explicitly specified.
\begin{defi}[Initialized arena]
	An \emph{initialized arena} is a pair $(\arena, \initStates)$ such that $\arena$ is an arena and $\initStates$ is a non-empty subset of the states of $\arena$, called the set of \emph{initial states}.
	We assume w.l.o.g.\ that all states of $\arena$ are reachable from $\initStates$ following transitions with positive probabilities in the probabilistic transition function of $\arena$.
\end{defi}

If an initialized arena has only one initial state $\st\in\states$, we write $(\arena, \st)$ for $(\arena, \{\st\})$.

We will often compare initialized arenas even if they are not formally defined on the same state space by using a natural definition of \emph{isomorphism}: we say that two initialized arenas $(\arenaFull, \initStates)$ and $((\states_1', \states_2', \act', \transProb', \colSolo'), \initStates')$ are \emph{isomorphic} if there exist a bijection $\bijSt\colon \states \to \states'$ and for all $\st\in\states$, a bijection $\bijAct{\st}\colon\actions{\st}\to\act'(\bijSt(\st))$ such that $\bijSt(\states_1) = \states_1'$, $\bijSt(\states_2) = \states_2'$, $\bijSt(\initStates) = \initStates'$, and for all $\st_1,\st_2\in\states$, $\action\in\act$, we have $\transProb(\st_1, \action)(\st_2) = \transProb'(\bijSt(\st_1), \bijAct{\st_1}(\action))(\bijSt(\st_2))$ and $\col{\st_1, \action} = \colSolo'(\bijSt(\st_1), \bijAct{\st_1}(\action))$.

We will consider sets (which we call \emph{classes}) of initialized arenas, which are usually denoted by the letter $\arenaFam$.
Although our results often apply to more fine-grained classes of arenas, typical classes that we will consider consist of all one-player or two-player, deterministic or stochastic initialized arenas.
We use \emph{initialized} arenas throughout the paper for technical reasons, but all of our results can be converted to results using only the more classical notion of arena.

\paragraph{Memory.}
To play in games, players use strategies, which can sometimes be efficiently implemented with \emph{finite memory}.
We define a classical notion of memory based on complete deterministic automata on \emph{colors}.
The goal of using colors instead of states/actions for transitions of the memory is to allow to define memory structures independently of arenas, so that they can be used in all arenas.

\begin{defi}[Memory skeleton]
	A \emph{memory skeleton} is a tuple $\memSkel = \memSkelFull$ where $\memStates$ is a set of \emph{memory states}, $\memInit \in \memStates$ is an \emph{initial state} and $\memUpd\colon \memStates\times\colors\to\memStates$ is an \emph{update function}.
	We add the following constraint: for all finite sets of colors $B\subseteq \colors$, the number of states reachable from $\memInit$ with transitions provided by $\restr{\memUpd}{\memStates\times B}$ is finite (where $\restr{\memUpd}{\memStates\times B}$ is the restriction of the domain of $\memUpd$ to $\memStates\times B$).
\end{defi}

We slightly relax the usual finiteness constraint for the state space by simply requiring that whenever restricted to finitely many colors, the state space of the skeleton is finite.
Memory skeletons with a finite state space are all encompassed by this definition, but this also allows some memory skeletons with infinitely many states.
For example, if $\colors = \IN$, the tuple $(\IN, 0, (\memState, n) \mapsto \max\{\memState, n\})$, which remembers the greatest color seen, is a valid memory skeleton: for any finite $B \subseteq \colors$, we only need to use memory states up to $\max B$.
However, the tuple $(\IN, 0, (\memState, n) \mapsto m + n)$ remembering the current sum of all colors seen is \emph{not} a memory skeleton, as infinitely many states are reachable from $0$, even if only $B = \{1\}$ can be used.

We denote $\memUpdHat\colon \memStates\times \colors^*\to \memStates$ for the natural extension of $\memUpd$ to finite sequences of colors.

It will often be useful to use two memory skeletons in parallel, which is equivalent to using their \emph{product}.
\begin{defi}[Product of skeletons]
	Let $\memSkel^1 = \memSkelFullIndex{1}$, $\memSkel^2 = \memSkelFullIndex{2}$ be two memory skeletons.
	We define their \emph{product $\memSkel^1 \memProduct \memSkel^2$} as the memory skeleton $(\memStates, \memInit, \memUpd)$ obtained as follows: $\memStates = \memStates^1 \times \memStates^2$, $\memInit = (\memInit^1, \memInit^2)$, and, for all $\memState^1 \in \memStates^1$, $\memState^2 \in \memStates^2$, $\clr \in \colors$, $\memUpd((\memState^1, \memState^2), \clr) = (\memUpd^1(\memState^1, \clr), \memUpd^2(\memState^2, \clr))$.
\end{defi}
The update function of the product simply updates both skeletons in parallel.

\begin{defi}[Product initialized arenas]
	Let $(\arena = \arenaFull, \initStates)$ be an initialized arena and $\memSkel = \memSkelFull$ be a memory skeleton.
	We define the \emph{product initialized arena $\prodAS{(\arena, \initStates)}{\memSkel}$} as the initialized arena $((\states_1', \states_2', \act, \transProb', \colSolo'), \initStates')$ where:
	\begin{itemize}
		\item $\initStates' = \initStates \times \{\memInit\}$,
		\item $\transProb' \colon (\states\times\memStates)\times \act \to \Dist(\states\times\memStates)$ is such that for all $(\st, \memState)\in\states\times\memStates$ and $\action\in\act$, $\transProb'((\st, \memState), \action)$ is defined if and only if $\transProb(\st, \action)$ is defined, in which case $\transProb'((\st, \memState), \action, (\st', \memState'))$ is equal to $\transProb(\st, \action, \st')$ if $\memUpd(\memState, \col{\st, \action}) = \memState'$, and is $0$ otherwise --- this implies that $\actions{(\st, \memState)} = \actions{\st}$,
		\item $\states'$ is the smallest subset of $\states\times\memStates$ such that $\initStates'\subseteq \states'$, and for all $(\st, \memState), (\st', \memState')\in\states\times\memStates$, $\action\in\act$, if $(\st, \memState)\in\states'$ and $\transProb((\st, \memState), \action, (\st', \memState'))$ is positive, then $(\st', \memState')\in\states'$; we define $\states_1' = \states' \cap (\states_1\times\memStates)$ and $\states_2' = \states' \cap (\states_2\times\memStates)$,
		\item for all $(\st, \memState)\in\states'$ and $\action\in\actions{\st}$, $\colSolo'((\st, \memState), \action) = \col{\st, \action}$.
	\end{itemize}
\end{defi}

\noindent
A product initialized arena $\prodAS{\initArena}{\memSkel}$ is an initialized arena with transitions obtained from $\arena$, with state space enriched with extra information about the current memory state, which is initialized at $\memInit$.
We only keep states that are reachable from $\initStates'$ following transitions of $\transProb'$, thereby enforcing that in initialized arenas, all states in the state space are reachable from the initial states.
Even if memory skeletons have infinitely many states or transitions, product initialized arenas are always finite, as only finitely many colors appear in an initialized arena, and only these colors appear in the product initialized arena.

\paragraph{Strategies.}
We can now define strategies, which are functions describing what each player does in response to every possible scenario.
\begin{defi}[Strategy]
	Given an initialized arena $(\arena, \initStates)$ and $i\in\{1, 2\}$, a \emph{strategy of $\player{i}$ on $(\arena, \initStates)$} is a function $\strat_i\colon \Hists_i(\arena, \initStates) \to \Dist(\act)$ such that for all $\hist\in\Hists_i(\arena, \initStates)$, $\supp{\strat_i(\hist)} \subseteq \actions{\histOut{\hist}}$.
\end{defi}

We now discuss interesting subclasses of strategies.

A strategy $\strat_i$ of $\player{i}$ on $(\arena, \initStates)$ is \emph{pure} if it does not resort to probability distributions to choose actions, that is, if for all $\hist\in\Hists_i(\arena, \initStates)$, $\card{\supp{\strat_i(\hist)}} = 1$.
If a strategy is not pure, then it is \emph{randomized}.

A strategy $\strat_i$ of $\player{i}$ on $(\arena, \initStates)$ is \emph{memoryless} if every distribution over actions it selects only depends on the current state of the arena, and not on the whole history, that is, if for all $\hist, \hist' \in \Hists_i(\arena, \initStates)$, $\histOut{\hist} = \histOut{\hist'}$ implies $\strat_i(\hist) = \strat_i(\hist')$.
A \emph{pure memoryless} strategy of $\player{i}$ can be simply specified as a function $\states_i \to \act$.

A strategy $\strat_i$ of $\player{i}$ on $(\arena, \initStates)$ is \emph{finite-memory} if it can be encoded as a \emph{Mealy machine} $\mealy = (\memSkel, \memNxt)$, with $\memSkel = \memSkelFull$ being a memory skeleton and $\memNxt\colon \states_i \times \memStates \to \Dist(\act)$ being the \emph{next-action function}, which is such that for $\st\in\states_i$, $\memState\in\memStates$, $\supp{\memNxt(\st, \memState)} \subseteq \actions{\st}$.
Strategy $\strat_i$ is encoded by $\mealy$ if for all histories $\hist\in\Hists_i(\arena, \initStates)$,
\[
	\strat_i(\hist) = \memNxt(\histOut{\hist}, \memUpdHat(\memInit, \colHat{\hist})).
\]
If $\strat_i$ can be encoded as a Mealy machine $(\memSkel, \memNxt)$, we say that $\strat_i$ is \emph{based on (memory) $\memSkel$}.
If $\strat_i$ is \emph{based on $\memSkel$} and is \emph{pure}, then the next-action function can be specified as a function $\states_i\times\memStates \to \act$.
Memoryless strategies correspond to finite-memory strategies based on the \emph{trivial memory skeleton} $\memSkelTriv = (\{\memInit\}, \memInit, (\memInit, \clr)\mapsto \memInit)$ that has a single state.

We denote by $\stratsType{i}{\PFM}{\arena, \initStates}$ (resp.\ $\stratsType{i}{\pure}{\arena, \initStates}$, $\stratsType{i}{\RFM}{\arena, \initStates}$, $\stratsType{i}{\rand}{\arena, \initStates}$) the set of pure finite-memory (resp.\ pure, finite-memory, general) strategies of $\player{i}$ on $(\arena, \initStates)$ (where the adjective \emph{general} encompasses all strategies, including those with randomization).
A \emph{type of strategies} is an element $\stratType\in\allStrats$ corresponding to these subsets.

\begin{rem}\label{rmk:stochSkel}
	Observe that our definition of finite-memory strategies with randomization only allows for randomization in the output of the function $\memNxt$.
	In general, to induce distributions as arbitrary as possible, it may be useful to allow for randomization in the skeletons (in the initial memory state and/or in the update function $\memUpd$)~\cite{MR22}.
	Here, one of our upcoming notions (\emph{coverability}) is not compatible with randomization in the memory skeleton (see Remark~\ref{rmk:stochCover}).
	In order not to make the notations heavier, we therefore prefer to only consider fully deterministic memory skeletons.
	\qedEx
\end{rem}

\paragraph{Outcomes.}
Let $(\arena = \arenaFull, \initStates)$ be an initialized arena.
For $\hist \in \Hists(\arena, \initStates)$, we denote
\[
	\Cyl{\hist} = \{\play \in \Plays(\arena, \initStates) \mid \hist \text{ is a prefix of } \play\}
\]
for the \emph{cylinder of $\hist$}, that is, the set of plays (which are infinite) starting with $\hist$.
We denote by $\oalg_{(\arena, \initStates)}$ the smallest $\sigma$-algebra generated by all the cylinders of histories in $\Hists(\arena, \initStates)$.
Hence, $(\Plays(\arena, \initStates), \oalg_{(\arena, \initStates)})$ is a measurable space.

When both players have decided on a strategy and an initial state has been chosen, the generated object is a (finite or countably infinite) Markov chain, which induces a probability distribution on the plays.
More precisely, for strategies $\strat_1$ of $\Pone$ and $\strat_2$ of $\Ptwo$ on $(\arena, \initStates)$ and $\st\in\initStates$, we denote $\prSolo{\arena}{\st}{\strat_1, \strat_2}$ for the probability distribution on $(\Plays(\arena, \st), \oalg_{(\arena, \st)})$ induced by $\strat_1$ and $\strat_2$, starting from state $\st$.
This distribution is defined on the set of cylinders as follows: for $\hist = \usualHist\in\Hists(\arena, \st)$ (so $\st = \st_0$),
\[
	\pr{\arena}{\st}{\strat_1, \strat_2}{\Cyl{\hist}} =
	\prod_{j = 1}^{n} \sigma_{i_j}(\st_0\ldots\action_{j-1}\st_{j-1})(\action_j) \cdot
	\transProb(\st_{j-1}, \action_{j}, \st_j),
\]
where $i_j = 1$ if $\st_{j-1} \in \states_1$, and $i_j = 2$ if $\st_{j-1} \in \states_2$.
This pre-measure can be uniquely extended to $(\Plays(\arena, \st), \oalg_{(\arena, \st)})$ by Carath\'eodory's extension theorem~\cite[Theorem~A.1.3]{Dur19}, as the class of cylinders is a semi-ring of sets that generates the whole $\sigma$-algebra.

Similarly, we define $\oalg$ to be the smallest $\sigma$-algebra on $\colors^\omega$ generated by the set of all cylinders on $\colors$.
We can extend $\colHatSolo$ to distributions over $(\Plays(\arena, \st), \oalg_{(\arena, \st)})$: for $\dist\in\Dist(\Plays(\arena, \st), \oalg_{(\arena, \st)})$, we write $\colHat{\dist}$ for the distribution $\dist\circ\colHatSolo^{-1}\in\Dist(\colors^\omega,\oalg)$.
In particular, every probability distribution $\prSolo{\arena}{\st}{\strat_1, \strat_2}$ naturally induces a probability distribution $\colHat{\prSolo{\arena}{\st}{\strat_1, \strat_2}}$ over $(\colors^\omega, \oalg)$ through the $\colHatSolo$ function, which we denote $\prcSolo{\arena}{\st}{\strat_1, \strat_2}$.

\paragraph{Preferences.}
To specify each player's objective, or preference, we use the general notion of \emph{preference relation}.
\begin{defi}[Preference relation]
	A \emph{preference relation $\pref$ (on $\colors$)} is a total preorder over $\Dist(\colors^\omega, \oalg)$.
\end{defi}
The idea is that $\Pone$ favors the distributions in $\Dist(\colors^\omega, \oalg)$ that are the largest for $\pref$, and as we are studying \emph{zero-sum} games, $\Ptwo$ favors the distributions that are the smallest for $\pref$.
Equivalently, $\Ptwo$'s goal is to obtain the largest distribution for the \emph{inverse preference relation $\invPref$}, defined as $\dist \mathrel{\invPref} \dist'$ if and only if $\dist' \pref \dist$.
For $\pref$ a preference relation and $\dist, \dist'\in\Dist(\colors^\omega, \oalg)$, we write $\dist\strictPref\dist'$ if $\dist\pref\dist'$ and $\dist'\not\pref\dist$.

Depending on the context, it might not be necessary to define a preference relation as \emph{total}: it is sufficient to order distributions that can arise as an element $\prSolo{\arena}{\st}{\strat_1, \strat_2}$ in the context.
For example, in the specific case of deterministic games in which only pure strategies are considered, all distributions that arise are always Dirac distributions on a single infinite word in $\colors^\omega$.
In this context, it is therefore sufficient to define a total preorder over all Dirac distributions (which we can then see as infinite words, giving a definition of preference relation similar to~\cite{GZ05,BLORV22}).
Another reasonable constraint comes from the fact that we mostly consider finite arenas, in which only finitely many colors appear: distributions that generate infinite words with infinitely many colors should therefore not be considered.

\begin{exa}\label{ex:objExamples}
	We give three examples corresponding to three different ways to encode preference relations.
	First, a preference relation can be induced by an event $\wc\in\oalg$ called a \emph{winning condition}, which consists of infinite sequences of colors.
	The objective of $\Pone$ is to maximize the probability that the event $\wc$ happens.
	An event $\wc$ naturally induces a preference relation $\pref_\wc$ such that for $\dist,\dist'\in\Dist(\colors^\omega, \oalg)$, $\dist\pref_\wc\dist'$ if and only if $\dist(\wc) \le \dist'(\wc)$.
	For $\colors = \IN$, we give the example of the \emph{weak parity} winning condition $\simPar$~\cite{Tho08}, defined as
	\[
	\simPar = \{\clr_1\clr_2\ldots\in\colors^\omega \mid
	\max_{j\ge 1} \clr_j\text{ exists and is even}\}.
	\]
	In finite arenas, the value $\max_{j\ge 1} \clr_j$ always exists, as there are only finitely many colors that appear.
	This is different from the classical parity condition, which requires the maximal color \emph{seen infinitely often} to be even, and not just the maximal color seen.
	In particular, $\simPar$ is not prefix-independent.

	A preference relation can also be induced by a Borel \emph{(real) payoff function $\pf\colon \colors^\omega \to \IR$}.
	For example, if $\colors = \IR$ and $\discFac \in \intervaloo{0,1}$, a classical payoff function~\cite{Sha53} is the \emph{discounted sum $\disc$}, defined for $\clr_1\clr_2\ldots\in\colors^\omega$ as
	\[
		\disc(\clr_1\clr_2\ldots) = \lim_n \sum_{i = 0}^n \discFac^i\cdot \clr_{i+1}.
	\]
	The goal of $\Pone$ is to maximize the expected value of $\pf$, which is defined for a probability distribution $\dist\in\Dist(\colors^\omega, \oalg)$ as $\expectation_\dist[\pf] = \int \pf \ud\dist$.
	A payoff function $\pf$ naturally induces a preference relation $\pref_\pf$: for $\dist_1, \dist_2\in\Dist(\colors^\omega, \oalg)$, $\dist_1 \pref_\pf \dist_2$ if and only if $\expectation_{\dist_1}[\pf] \le \expectation_{\dist_2}[f]$.
	Payoff functions are more general than winning conditions: for $\wc$ a winning condition, the preference relation induced by the indicator function of $\wc$, which is a payoff function, corresponds to the preference relation induced by $\wc$.

	It is also possible to specify preference relations that cannot be expressed as a payoff function.
	An example is given in~\cite{CFKSTU12}: we assume that the goal of $\Pone$ is to see color $\clr\in\colors$ with probability \emph{precisely} $\frac{1}{2}$.
	We denote the event of seeing color $\clr$ as $\F\clr \in \oalg$.
	Then for $\dist,\dist'\in\Dist(\colors^\omega, \oalg)$, $\dist\pref\dist'$ if and only if $\dist(\F\clr) \neq \frac{1}{2}$ or $\dist'(\F\clr) = \frac{1}{2}$.
	\qedEx
\end{exa}

Combining an initialized arena, describing how the players interact with each other, and a preference relation, describing both players' objectives, defines an \emph{initialized game}.
\begin{defi}[Initialized game]
	A (two-player stochastic turn-based zero-sum) \emph{initialized game} is a tuple $\game = \initGameFull$, where $\initArena$ is an initialized arena and $\pref$ is a preference relation.
\end{defi}

\paragraph{Optimality of strategies.}
Let $\game = \initGameFull$ be an initialized game and let $\stratType\in\{\PFM, \pure, \RFM, \rand\}$ be a type of strategies.
For $\st\in\initStates$, $\strat_1\in\stratsType{1}{\stratType}{\arena, \initStates}$, we define
\[
	\UCol_\pref^\stratType(\arena, \st, \strat_1) = \{\dist\in\Dist(\colors^\omega, \oalg) \mid
	\exists \strat_2\in\stratsType{2}{\stratType}{\arena, \st}, \prcSolo{\arena}{\st}{\strat_1, \strat_2} \pref \dist\}.
\]
The set $\UCol_\pref^\stratType(\arena, \st, \strat_1)$ corresponds to all the distributions that are at least as good for $\Pone$ (w.r.t.\ $\pref$) as a distribution that $\Ptwo$ can induce by playing a strategy $\strat_2$ of type $\stratType$ against $\strat_1$; this set is upward-closed w.r.t.\ $\pref$.
We can define a similar operator for strategies of $\Ptwo$: for $\st\in\initStates$, $\strat_2\in\stratsType{2}{\stratType}{\arena, \initStates}$,
\[
	\DCol_\pref^\stratType(\arena, \st, \strat_2) = \{\dist\in\Dist(\colors^\omega, \oalg) \mid
	\exists \strat_1\in\stratsType{1}{\stratType}{\arena, \st}, \dist \pref \prcSolo{\arena}{\st}{\strat_1, \strat_2}\}.
\]

For $\strat_1, \strat_1'\in\stratsType{1}{\stratType}{\arena, \initStates}$, we say that $\strat_1$ is \emph{at least as good as $\strat_1'$ from $\st\in\initStates$ under $\stratType$ strategies} if
\[
	\UCol_\pref^\stratType(\arena, \st, \strat_1) \subseteq \UCol_\pref^\stratType(\arena, \st, \strat_1').
\]
This inclusion means that the best replies of $\Ptwo$ against $\strat_1'$ yield an outcome that is at least as bad for $\Pone$ (w.r.t.\ $\pref$) as the best replies of $\Ptwo$ against $\strat_1$.

Symmetrically for $\strat_2, \strat_2'\in\stratsType{2}{\stratType}{\arena, \initStates}$, we say that $\strat_2$ is \emph{at least as good as $\strat_2'$ from $\st\in\initStates$ under $\stratType$ strategies} if
\[
	\DCol_\pref^\stratType(\arena, \st, \strat_2) \subseteq \DCol_\pref^\stratType(\arena, \st, \strat_2').
\]

\begin{defi}[Optimal strategy]
	Let $\game = \initGameFull$ be an initialized game and $\stratType\in\{\PFM, \pure, \RFM, \rand\}$ be a type of strategies.
	A strategy $\strat_i\in\stratsType{i}{\stratType}{\arena, \initStates}$ is \emph{$\stratType$-optimal in $\game$} if it is at least as good under $\stratType$ strategies as any other strategy in $\stratsType{i}{\stratType}{\arena, \initStates}$ from all $\st\in\initStates$.
\end{defi}

When the considered preference relation $\pref$ is clear from the context, we often talk about $\stratType$-optimality in an initialized arena $(\arena, \initStates)$ to refer to $\stratType$-optimality in the initialized game $(\arena, \initStates, \pref)$.
Notice that given two \emph{isomorphic} initialized arenas, there is an obvious bijection between the strategies on them, and the properties of the strategies (pure, memoryless, finite-memory, $\stratType$-optimal\ldots) are preserved through this bijection.

Our goal will be to understand, given a preference relation, a class of arenas, and a type of strategies, what kinds of strategies are sufficient to play optimally.
In the following definition, abbreviations \emph{AIFM} and \emph{FM} stand respectively for \emph{arena-independent finite-memory} and \emph{finite-memory}.
\begin{defi}[Sufficiency of strategies]
	Let $\pref$ be a preference relation, $\arenaFam$ be a class of initialized arenas, $\stratType\in\allStrats$ be a type of strategies, and $\memSkel$ be a memory skeleton.
	\begin{itemize}
		\item We say that \emph{pure memoryless strategies} (resp.\ \emph{pure strategies based on $\memSkel$}) \emph{suffice to play $\stratType$-optimally in $\arenaFam$ for $\Pone$} if for all $\initArena\in\arenaFam$, $\Pone$ has a pure memoryless strategy (resp.\ a pure strategy based on $\memSkel$) that is $\stratType$-optimal in $\initArena$.
		\item We say that \emph{pure AIFM strategies suffice to play $\stratType$-optimally in $\arenaFam$ for $\Pone$} if there exists a memory skeleton $\memSkel$ such that pure strategies based on $\memSkel$ suffice to play $\stratType$-optimally for all arenas in $\arenaFam$ for $\Pone$.
		Observe that $\memSkel$ may not depend on the arena.
		\item We say that \emph{pure FM strategies suffice to play $\stratType$-optimally in $\arenaFam$ for $\Pone$} if for all $\initArena\in\arenaFam$, there exists a memory skeleton $\memSkel$ such that $\Pone$ has a pure strategy based on $\memSkel$ that is $\stratType$-optimal in $\initArena$.
	\end{itemize}
\end{defi}

\noindent
If $\arenaFam$ is clear in the context (typically all initialized deterministic or stochastic arenas), we often omit it.
When no type of strategies is specified, it means that we consider optimality against all (general) strategies.

Since memoryless strategies are a specific kind of finite-memory strategies based on the same memory skeleton $\memSkelTriv$, the sufficiency of pure memoryless strategies is equivalent to the sufficiency of pure strategies based on $\memSkelTriv$, and is therefore just a specific case of the sufficiency of pure AIFM strategies.
Notice the difference between the order of quantifiers for AIFM and FM strategies: the sufficiency of pure AIFM strategies implies the sufficiency of pure FM strategies, but the opposite is false, as we show in the following example.

\begin{exa}
	Let us consider the \emph{energy parity} winning condition studied in deterministic arenas in~\cite{CD12b}.
	We do not explain this winning condition in detail, but comment on its memory requirements and how it illustrates the difference between AIFM and FM strategies.
	For this objective, in deterministic arenas, pure memoryless strategies suffice to play optimally for $\Ptwo$.%
	\footnote{Work on deterministic arenas only consider $\pure$-optimality, but as pure strategies suffice for Borel objectives~\cite{Mar75}, this implies $\rand$-optimality.} 
	On the other hand, $\Pone$ can play optimally with finite memory, but the memory needed \emph{depends on the number of states of the arena}; it is not arena-independent.
	Therefore, pure FM strategies suffice in deterministic games for $\Pone$, but not pure AIFM strategies: $\Pone$ needs to change its memory skeleton depending on the arena, and no single memory skeleton is sufficient to play optimally in all deterministic arenas (even with a fixed and finite number of colors).

	Interestingly, it is shown in~\cite{MSTW17} that the same winning condition needs infinite memory in (even one-player) stochastic arenas for $\Pone$, which shows that the sufficiency of pure FM strategies in deterministic arenas does not imply the sufficiency of pure FM strategies in stochastic arenas.

	Now let us reconsider the weak parity winning condition $\simPar$ introduced in Example~\ref{ex:objExamples}: the goal of $\Pone$ is to maximize the probability that the greatest color seen is even.
	As will be proven formally in Section~\ref{sec:app} thanks to the results of this article, to play optimally in any stochastic game, it is sufficient for both players to remember the greatest color already seen, which can be implemented by the memory skeleton $\Mmax = (\IN, 0, (\memState, n) \mapsto \max\{\memState, n\})$.
	As explained above, this memory skeleton has an infinite state space, but as there are only finitely many colors in every (finite) arena, only a finite part of the skeleton is sufficient to play optimally in a given arena.
	The size of the skeleton used for a fixed arena depends on the appearing colors, but for a fixed number of colors, it does not depend on parameters of the arena (such as its state and action spaces).
	Therefore pure AIFM strategies suffice to play optimally for both players, and more precisely pure strategies based on $\Mmax$ suffice for both players.
	\qedEx
\end{exa}

We define a second stronger notion related to optimality of strategies, which is the notion of \emph{subgame perfect strategy}: a strategy is \emph{subgame perfect} in a game if it reacts optimally to all histories consistent with the arena, even histories not consistent with the strategy itself, or histories that only a non-rational adversary would play~\cite{Osb04}.
This is a desirable property of strategies that is stronger than optimality, since a subgame perfect strategy is not only optimal from the initial position, but from any arbitrary stage (\emph{subgame}) of the game.
In particular, if an opponent plays non-optimally, an optimal strategy that is not subgame perfect does not always fully exploit the advantage that the opponent's suboptimal behavior provides, and may yield a result that is not optimal \emph{when starting in a subgame}.
To do so, we first need an extra definition.
\begin{defi}[Shifted distributions, strategies and preference relations]\label{def:shift}
	For $\word\in\colors^*$, $\dist\in\Dist(\colors^\omega, \oalg)$, we define the \emph{shifted distribution $\word\dist$} as the distribution such that for an event $E\in\oalg$, $\word\dist(E) = \dist(\{\word'\in\colors^\omega \mid \word\word'\in E\})$.
	Note that a shifted distribution evaluates events prepended with $\word$ as if they were evaluated by the original distribution without an extra $\word$ at the start; it is \emph{not} the distribution conditioned on the infinite words whose prefix is $\word$.

	Let $\initArena$ be an initialized arena, and $\strat_i\in\stratsType{i}{\rand}{\arena, \initStates}$.
	For $\hist = \usualHist\in\Hists(\arena, \initStates)$, we define the \emph{shifted strategy} $\shiftStrat{\strat_i}{\hist}\in\stratsType{i}{\rand}{\arena, \histOut{\hist}}$ which is such that, for $\hist' = \st_n\action_{n+1}\st_{n+1}\ldots\action_m\st_m\in\Hists_i(\arena, \histOut{\hist})$, $\shiftStrat{\strat_i}{\hist}(\hist') = \strat_i(\st_0\action_1\st_1\ldots\action_m\st_m)$.

	For $\pref$ a preference relation and $\word\in\colors^*$, we define the \emph{shifted preference relation} $\shiftPref{\word}$ as the preference relation such that for $\dist, \dist'\in\Dist(\colors^\omega, \oalg)$, $\dist \shiftPref{\word} \dist'$ if and only if $\word\dist \pref \word\dist'$.
\end{defi}

\begin{defi}[Subgame perfect strategy]
	Let $\game = \initGameFull$ be an initialized game and $\stratType\in\allStrats$ be a type of strategies.
	A strategy $\strat_i\in\stratsType{i}{\stratType}{\arena, \initStates}$ is \emph{$\stratType$-subgame perfect ($\stratType$-SP) in $\game$} if for all $\hist\in\Hists(\arena, \initStates)$, shifted strategy $\shiftStrat{\strat_i}{\hist}$ is $\stratType$-optimal in the initialized game $(\arena, \histOut{\hist}, \shiftPref{\colHat{\hist}})$.
\end{defi}
Strategies that are $\stratType$-SP are in particular $\stratType$-optimal; the converse is not true in general.

For technical reasons, we will use the notion of \emph{equilibrium} as a tool in proofs to show the existence of optimal strategies.
\begin{defi}[Nash, SP equilibrium]
	Let $\game = \initGameFull$ be an initialized game and $\stratType\in\allStrats$ be a type of strategies.
	A pair of strategies $(\strat_1, \strat_2) \in \stratsType{1}{\stratType}{\arena, \initStates}\times\stratsType{2}{\stratType}{\arena, \initStates}$ is an \emph{$\stratType$-Nash equilibrium ($\stratType$-NE) in $\game$} if for all $\strat_1'\in\stratsType{1}{\stratType}{\arena, \initStates}$, for all $\strat_2'\in\stratsType{2}{\stratType}{\arena, \initStates}$, for all $\st\in\initStates$,
	\[
	\prcSolo{\arena}{\st}{\strat_1', \strat_2} \pref
	\prcSolo{\arena}{\st}{\strat_1, \strat_2} \pref
	\prcSolo{\arena}{\st}{\strat_1, \strat_2'}.
	\]
	We say that $(\strat_1, \strat_2)$ is an \emph{$\stratType$-subgame perfect equilibrium ($\stratType$-SPE)} if for all $\hist\in\Hists(\arena, \initStates)$, the pair of strategies $(\shiftStrat{\strat_1}{\hist}, \shiftStrat{\strat_2}{\hist})$ is an $\stratType$-Nash equilibrium in $(\arena, \histOut{\hist}, \shiftPref{\colHat{\hist}})$.
\end{defi}
A pair of strategies $(\strat_1, \strat_2)$ is thus an $\stratType$-NE if no player has any interest in unilaterally deviating from its strategy (using strategies of type $\stratType$), as the induced probability distribution on the colors (or equivalently, the induced Markov chain) would not be better for this player than not changing the strategy.
In the zero-sum context, if $(\strat_1, \strat_2)$ is an $\stratType$-NE (resp.\ $\stratType$-SPE), then both $\strat_1$ and $\strat_2$ are $\stratType$-optimal (resp.\ SP).

We say that a pair of strategies $(\strat_1, \strat_2)$ is pure (resp.\ randomized, memoryless, based on $\memSkel$) if both $\strat_1$ and $\strat_2$ are pure (resp.\ randomized, memoryless, based on $\memSkel$).
Thanks to the fact that we consider zero-sum games, we can use the following handy result about Nash equilibria.

\begin{lem}\label{lem:mixNE}
	Let $\game = \initGameFull$ be a game and $\stratType\in\allStrats$ be a type of strategies.
	Let $(\strat_1^a, \strat_2^a), (\strat_1^b, \strat_2^b)\in\stratsType{1}{\stratType}{\arena, \initStates}\times\stratsType{2}{\stratType}{\arena, \initStates}$ be two $\stratType$-NE (resp.\ $\stratType$-SPE) in $\game$.
	Then $(\strat_1^a, \strat_2^b)$ is also an $\stratType$-NE (resp.\ $\stratType$-SPE) in $\game$.
\end{lem}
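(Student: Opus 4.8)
The plan is to prove the Nash-equilibrium case by hand and then deduce the subgame-perfect case from it, using that the subgame-perfect equilibrium condition is, by definition, a family of Nash-equilibrium conditions on shifted games. The NE case is the classical ``interchangeability of equilibria in zero-sum games'', and it rests only on transitivity of the preorder $\pref$ together with careful bookkeeping of the four defining inequalities of the two given equilibria.

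For the $\stratType$-NE case, fix $\st \in \initStates$. The first step is to close a $\pref$-cycle through the four combinations of the four strategies. Instantiating the right inequality of $(\strat_1^a, \strat_2^a)$ with $\strat_2' = \strat_2^b$ and the left inequality of $(\strat_1^b, \strat_2^b)$ with $\strat_1' = \strat_1^a$ gives
\[
	\prcSolo{\arena}{\st}{\strat_1^a, \strat_2^a} \pref \prcSolo{\arena}{\st}{\strat_1^a, \strat_2^b} \pref \prcSolo{\arena}{\st}{\strat_1^b, \strat_2^b};
\]
symmetrically, the right inequality of $(\strat_1^b, \strat_2^b)$ with $\strat_2' = \strat_2^a$ and the left inequality of $(\strat_1^a, \strat_2^a)$ with $\strat_1' = \strat_1^b$ give
\[
	\prcSolo{\arena}{\st}{\strat_1^b, \strat_2^b} \pref \prcSolo{\arena}{\st}{\strat_1^b, \strat_2^a} \pref \prcSolo{\arena}{\st}{\strat_1^a, \strat_2^a}.
\]
Chaining the two displays closes a cycle for $\pref$ through all four combinations, so transitivity forces them all to be $\eqPref$-equivalent; in particular $\prcSolo{\arena}{\st}{\strat_1^a, \strat_2^b} \eqPref \prcSolo{\arena}{\st}{\strat_1^a, \strat_2^a}$ and $\prcSolo{\arena}{\st}{\strat_1^a, \strat_2^b} \eqPref \prcSolo{\arena}{\st}{\strat_1^b, \strat_2^b}$.

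With these equivalences, verifying that $(\strat_1^a, \strat_2^b)$ is an $\stratType$-NE is immediate. For the left condition, take an arbitrary $\strat_1'$, apply the left inequality of $(\strat_1^b, \strat_2^b)$ to obtain $\prcSolo{\arena}{\st}{\strat_1', \strat_2^b} \pref \prcSolo{\arena}{\st}{\strat_1^b, \strat_2^b}$, and rewrite the right-hand side by the equivalence $\prcSolo{\arena}{\st}{\strat_1^b, \strat_2^b} \eqPref \prcSolo{\arena}{\st}{\strat_1^a, \strat_2^b}$. For the right condition, take an arbitrary $\strat_2'$, apply the right inequality of $(\strat_1^a, \strat_2^a)$ to obtain $\prcSolo{\arena}{\st}{\strat_1^a, \strat_2^a} \pref \prcSolo{\arena}{\st}{\strat_1^a, \strat_2'}$, and rewrite the left-hand side by $\prcSolo{\arena}{\st}{\strat_1^a, \strat_2^b} \eqPref \prcSolo{\arena}{\st}{\strat_1^a, \strat_2^a}$. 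As $\st$, $\strat_1'$ and $\strat_2'$ were arbitrary, $(\strat_1^a, \strat_2^b)$ is an $\stratType$-NE in $\game$.

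For the $\stratType$-SPE case I would reduce to the NE case one history at a time. Fix $\hist \in \Hists(\arena, \initStates)$; since $(\strat_1^a, \strat_2^a)$ and $(\strat_1^b, \strat_2^b)$ are $\stratType$-SPE, the pairs $(\shiftStrat{\strat_1^a}{\hist}, \shiftStrat{\strat_2^a}{\hist})$ and $(\shiftStrat{\strat_1^b}{\hist}, \shiftStrat{\strat_2^b}{\hist})$ are both $\stratType$-NE in the shifted game $(\arena, \histOut{\hist}, \shiftPref{\colHat{\hist}})$. Applying the already-established NE case to this shifted game yields that the mixed pair $(\shiftStrat{\strat_1^a}{\hist}, \shiftStrat{\strat_2^b}{\hist})$ is an $\stratType$-NE there as well; since this is exactly the pair of $\hist$-shifts required of $(\strat_1^a, \strat_2^b)$, and $\hist$ was arbitrary, $(\strat_1^a, \strat_2^b)$ is an $\stratType$-SPE. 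I expect no substantive obstacle: the whole argument is algebraic in the preorder. The only point needing a word of care is definitional, namely that the component strategies have the right type $\stratType$ --- for the NE case this holds by hypothesis on $\strat_1^a$ and $\strat_2^b$, and for the SPE case the shifted components are of type $\stratType$ precisely because, by the SPE hypotheses, they already occur as members of $\stratType$-NE pairs in the shifted games.
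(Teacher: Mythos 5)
Your proof is correct and follows essentially the same route as the paper's: both close a $\pref$-cycle through the four strategy combinations using the defining inequalities of the two equilibria, conclude all combinations are equivalent, and then verify the mixed pair's NE inequalities from that equivalence. Your history-by-history reduction for the SPE case is exactly what the paper means when it says the SPE claim "follows since its definition uses the notion of NE," merely spelled out in detail.
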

\begin{proof}
	A very similar proof appears in~\cite[Lemma~2.5]{BLORV22}.
	We do the proof for NE, and the result about SPE follows since its definition uses the notion of NE\@.
	We need to prove that for all $\strat_1'\in\stratsType{1}{\stratType}{\arena, \initStates}$, for all $\strat_2'\in\stratsType{2}{\stratType}{\arena, \initStates}$, for all $\st\in\initStates$,
	\begin{equation}\label{eq:mixingNEproof}
		\prcSolo{\arena}{\st}{\strat_1', \strat_2^b} \pref
		\prcSolo{\arena}{\st}{\strat_1^a, \strat_2^b} \pref
		\prcSolo{\arena}{\st}{\strat_1^a, \strat_2'}.
	\end{equation}

	Since $(\strat^a_1, \strat^a_2)$ is an $\stratType$-NE, we know that
	\[
	\prcSolo{\arena}{\st}{\strat^b_1, \strat^a_2} \pref \prcSolo{\arena}{\st}{\strat^a_1, \strat^a_2} \pref \prcSolo{\arena}{\st}{\strat^a_1, \strat^b_2},
	\]
	instantiating $\strat'_1$ and $\strat'_2$ as $\strat_1^b$ and $\strat^b_2$ in the definition of $\stratType$-NE\@.
	Similarly, since $(\strat^b_1, \strat^b_2)$ is an $\stratType$-NE, we know that
	\[
	\prcSolo{\arena}{\st}{\strat^a_1, \strat^b_2} \pref \prcSolo{\arena}{\st}{\strat^b_1, \strat^b_2} \pref \prcSolo{\arena}{\st}{\strat^b_1, \strat^a_2},
	\]
	instantiating $\strat'_1$ and $\strat'_2$ as $\strat_1^a$ and $\strat^a_2$ in the definition of $\stratType$-NE\@.

	One can see from the last two lines that all six probability distributions over sequences of colors are equivalent w.r.t.\ $\pref$ as the inequalities form a cycle.
	Now, let $\strat_1'\in\stratsType{1}{\stratType}{\arena, \initStates}$ and $\strat_2'\in\stratsType{2}{\stratType}{\arena, \initStates}$.
	Since $(\strat^a_1, \strat^a_2)$ and $(\strat^b_1, \strat^b_2)$ are both $\stratType$-NE, and since $\prcSolo{\arena}{\st}{\strat^a_1, \strat^b_2}$ is equivalent w.r.t.\ $\pref$ to both $\prcSolo{\arena}{\st}{\strat^a_1, \strat^a_2}$ and $\prcSolo{\arena}{\st}{\strat^b_1, \strat^b_2}$, we obtain
	\[
	\prcSolo{\arena}{\st}{\strat_1', \strat_2^b} \pref \prcSolo{\arena}{\st}{\strat^b_1, \strat^b_2} \pref
	\prcSolo{\arena}{\st}{\strat_1^a, \strat_2^b} \pref \prcSolo{\arena}{\st}{\strat^a_1, \strat^a_2} \pref
	\prcSolo{\arena}{\st}{\strat_1^a, \strat_2'},
	\]
	thus~\eqref{eq:mixingNEproof} is verified.
\end{proof}

\paragraph{Operations on arenas.}
We introduce two operations on arenas that we will use multiple times through the course of this article.

For $\arena$ an arena, $\word\in\colors^*$, and $\st$ a state of $\arena$, we write
\begin{equation}\label{eq:prefExt}
\arenaPref{\arena}{\word}{\st}
\end{equation}
for the \emph{prefix-extended arena} that consists of arena $\arena$ with an extra ``chain'' of states leading up to $\st$ with the same colors as $\word$.
Formally, if $\arena = \arenaFull$, $\word = \clr_1\clr_2\ldots\clr_n$, and $\st\in\states$, we define $\arenaPref{\arena}{\word}{\st}$ as the arena $\arenaFullIndex{\prime}$ where $\states'_1 = \states_1\disjUnion\{\st_0^\word,\ldots,\st_{n-1}^\word\}$, $\states'_2 = \states_2$; $\act' = \act \disjUnion \{\action^\word\}$, $\restr{\act'}{\states} = \act$, and for $i$, $0\le i \le n-1$, $\act'(\st_i^\word) = \{\action^\word\}$; $\restr{\transProb'}{\states\times\act} = \transProb$, for $i$, $0\le i < n-1$, $\transProb'(\st_i^\word, \action^\word, \st_{i+1}^\word) = 1$, and $\transProb'(\st_{n-1}^\word, \action^\word, \st) = 1$; $\restr{\colSolo'}{\states\times\act} = \colSolo$, and for $i$, $0\le i\le n-1$, $\colSolo'(\st_i^\word, \action^\word) = \clr_{i+1}$.

If $\dist = \prcSolo{\arena}{\st}{\strat_1,\strat_2}$ is induced by strategies $\strat_1$ and $\strat_2$ on some initialized arena $(\arena, \st)$, notice that the shifted distribution $\word\dist$ equals $\prcSolo{\arenaPref{\arena}{\word}{\st}}{\st_0^\word}{\strat_1',\strat_2'}$, where $\strat_1'$ plays the only available action $\action^\word$ until it reaches $\st$, and then $\strat_1'$ and $\strat_2'$ play like $\strat_1$ and $\strat_2$, ignoring they ever saw $\word$.

For two arenas $\arena_1$ and $\arena_2$ with disjoint state spaces, if $\st_1$ and $\st_2$ are two states controlled by $\Pone$ that are respectively in $\arena_1$ and $\arena_2$ with disjoint sets of available actions, we write
\begin{equation}\label{eq:merge}
(\arena_1, \st_1)\arenaJoin(\arena_2, \st_2)
\end{equation}
for the \emph{merged arena} in which $\st_1$ and $\st_2$ are merged, and everything else is kept the same.
The merged state which comes from the merge of $\st_1$ and $\st_2$ is usually called $\joinState$.
Formally, let $\arena_1 = \arenaFullIndex{1}$, $\arena_2 = \arenaFullIndex{2}$, $\st_1\in\states_1^1$, and $\st_2\in\states_1^2$.
We assume that $\states^1\cap\states^2 = \emptyset$ and that $\actions{\st_1}\cap\actions{\st_2} = \emptyset$.
We define $(\arena_1, \st_1)\arenaJoin(\arena_2, \st_2)$ as the arena $\arenaFull$ with $\states_1 = \states_1^1 \disjUnion \states_1^2 \disjUnion \{\joinState\} \setminus \{\st_1, \st_2\}$, $\states_2 = \states_2^1 \disjUnion \states_2^2$; $\actions{\joinState} = \act^1(\st_1) \disjUnion \act^2(\st_2)$ and all the other available actions are kept the same as in the original arenas;
for $i\in\{1, 2\}$, $\transProb(\joinState, \action) = \transProb^i(\joinState, \action)$ if $\action\in\actions{\st_i}$ and all the other transitions are kept the same as in the original arenas (with transitions going to $\st_1$ or $\st_2$ being directed to $\joinState$); for $i\in\{1, 2\}$, $\col{\joinState, \action} = \colSolo^i(\joinState, \action)$ if $\action\in\actions{\st_i}$ and all the other colors are kept the same as in the original arenas.
A symmetrical definition can be written if $\st_1$ and $\st_2$ are both controlled by $\Ptwo$.

In practice, we often consider classes $\arenaFam$ of initialized arenas that are \emph{closed with respect to some operation} --- we specify the exact meaning for each operation we will use here:
\begin{itemize}
	\item for $\memSkel$ a memory skeleton, $\arenaFam$ is \emph{closed by product with $\memSkel$} if for all $\initArena\in\arenaFam$, $\prodAS{\initArena}{\memSkel}\in\arenaFam$;
	\item $\arenaFam$ is \emph{closed by prefix-extension} if for all $\initArena\in\arenaFam$, for all $\word\in\colors^*$, for all states $\st$ of $\arena$, $(\arenaPref{\arena}{\word}{\st}, \initStates \cup \{\st_0^\word\})\in\arenaFam$.
\end{itemize}

\noindent
Standard classes of arenas are all closed by these operations: we give as examples the classes of all initialized one-player deterministic arenas of $\Pone$, one-player stochastic arenas of $\Pone$, two-player deterministic arenas, and two-player stochastic arenas (corresponding respectively to the classes of $1$-player, $1\frac{1}{2}$-player, $2$-player, $2\frac{1}{2}$-player arenas often found in the literature).
Throughout the article, we state results with the exact required closure properties for generality, but most applications use such standard classes.

\section{Coverability and subgame perfect strategies}\label{sec:memory}
In this section, we establish a few key results about memory and playing optimally.
The main tool is given by Lemma~\ref{prop:optProdIsCov}, which can be used to reduce questions about the sufficiency of AIFM strategies in reasonable classes of initialized arenas to the sufficiency of memoryless strategies in a subclass.
We end the section by showing the use of Lemma~\ref{prop:optProdIsCov} in the proof of our first main result (Theorem~\ref{prop:optAIFMimpliesSP}), which shows that the sufficiency of pure AIFM strategies implies the stronger existence of pure AIFM SP strategies in well-behaved classes of initialized arenas.

First, we restate the intuitive result linking playing optimally with memory $\memSkel$ in an initialized arena and playing optimally with a memoryless strategy in its product with $\memSkel$.

\begin{restatable}{lem}{memProd}\label{lem:memProd}
	Let $\pref$ be a preference relation, $\stratType\in\allStrats$ be a type of strategies, and $\memSkel = \memSkelFull$ be a memory skeleton.
	Let $\game = \initGameFull$ be an initialized game, and let $\strat_i \in \stratsType{i}{\stratType}{\arena, \initStates}$ be a finite-memory strategy encoded by a Mealy machine $\mealy = (\memSkel, \memNxt)$.
	Then, $\strat_i$ is $\stratType$-optimal in $\game$ if and only if $\memNxt$ corresponds to a memoryless $\stratType$-optimal strategy in game $\game' = (\prodAS{(\arena, \initStates)}{\memSkel}, \pref)$.
\end{restatable}

We defer the proof of this result to Appendix~\ref{sec:memProdProof}; a proof of a very similar result can be found in~\cite[Lemma~2.4]{BLORV22}.
Lemma~\ref{lem:memProd} can be restated for SP strategies, for NE and for SPE with a similar proof.

We now define a property of initialized arenas called \emph{coverability by $\memSkel$} (for a memory skeleton $\memSkel$), which happens to characterize initialized arenas that are a product with $\memSkel$ (Lemma~\ref{lem:prodCov}).
Albeit intuitive, this is a key technical step, as the class of arenas covered by a memory skeleton is sufficiently well-behaved to support edge-induction arguments, whereas it is more difficult to perform such techniques directly on the class of product arenas: removing a single edge from a product arena makes it hard to express as a product arena, whereas it is clear that coverability is preserved.

\begin{defi}[Coverability by $\memSkel$]
	An initialized arena \emph{$(\arenaFull, \initStates)$ is covered by memory skeleton $\memSkel = \memSkelFull$} if there exists a function $\coverFunction\colon \states \to \memStates$ such that for all $\st\in\initStates$, $\coverFunction(\st) = \memInit$, and for all $(\st, \action, \st')\in\transProb$, $\memUpd(\coverFunction(\st), \col{\st, \action}) = \coverFunction(\st')$.
\end{defi}
This property means that it is possible to assign a unique memory state to each arena state such that transitions of the arena always update the memory state in a way that is consistent with the memory skeleton.
Note that isomorphism of initialized arenas preserves coverability by any memory skeleton.
Also, every initialized arena is covered by $\memSkelTriv$, which is witnessed by the constant function $\coverFunction$ associating $\memInit$ to every state.

Definitions close to our notion of \emph{coverability by $\memSkel$} were introduced for deterministic arenas in~\cite{KopThesis,BLORV22}.
If we restrict our definition to deterministic arenas, the definition of \emph{adherence with $\memSkel$} in~\cite[Definition~8.12]{KopThesis} is very similar, but does not distinguish initial states from the rest (neither in the arena nor in the memory skeleton) --- the reason is that~\cite{KopThesis} only considers \emph{prefix-independent} objectives, for which selecting the right initial memory state is not as important (see~\cite[Proposition~8.2]{KopThesis}).
Our property of $(\arena, \initStates)$ being \emph{covered by $\memSkel$} is also equivalent to $\arena$ being both \emph{prefix-covered} and \emph{cyclic-covered} by $\memSkel$ from $\initStates$, as defined in~\cite{BLORV22}.
Distinguishing both notions gives insight in~\cite{BLORV22} as they are used at different places in proofs (\emph{prefix-covered} along with \emph{monotony}, and \emph{cyclic-covered} along with \emph{selectivity}).
Here, we opt for a single concise definition, as most of our proofs do not mention \emph{monotony} and \emph{selectivity}.

\begin{rem}\label{rmk:stochCover}
	Our definition of coverability is helped by the fact that our memory skeletons are deterministic and do not allow stochastic updates (cf.\ Remark~\ref{rmk:stochSkel}).
	Allowing for stochastic updates may lead to smaller memory requirements~\cite{DBLP:journals/lmcs/ChatterjeeKK17,MR22} and would be one way to extend our results to deal with strategies that are not only pure.
	Yet, it appears difficult to extend this idea of coverability to skeletons with stochastic updates, as the same transition in the arena may lead to two memory states, which a function $\coverFunction\colon \states \to \memStates$ cannot deal with.
	\qedEx
\end{rem}

We link products and coverability: first, product initialized arenas are covered; second, covered initialized arenas are exactly the ones that are isomorphic to their own product.

\begin{lem}\label{lem:prodCov}
	Let $\memSkel$ be a memory skeleton and $\initArena$ be an initialized arena.
	The product initialized arena $\prodAS{\initArena}{\memSkel}$ is covered by $\memSkel$.
	Moreover, $\initArena$ is covered by $\memSkel$ if and only if $\initArena$ is isomorphic to $\prodAS{\initArena}{\memSkel}$.
\end{lem}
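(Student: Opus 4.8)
The plan is to prove the three assertions in turn, with essentially all of the work concentrated in the forward direction of the equivalence. For the first assertion, I would exhibit the covering function explicitly. Writing $\prodAS{\initArena}{\memSkel} = ((\states_1', \states_2', \act, \transProb', \colSolo'), \initStates')$ with state space a reachable subset of $\states \times \memStates$, I take $\coverFunction\colon (\st, \memState) \mapsto \memState$, the projection onto the memory component. The initial condition holds because every element of $\initStates'$ has the form $(\st, \memInit)$; and the update condition holds by the very definition of $\transProb'$, which only allows a transition $((\st, \memState), \action, (\st', \memState'))$ when $\memState' = \memUpd(\memState, \col{\st, \action})$, while $\colSolo'((\st, \memState), \action) = \col{\st, \action}$. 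Thus projection witnesses coverability.

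For the forward direction, suppose $\coverFunction$ witnesses coverability of $\initArena$ by $\memSkel$. I claim that $\bijSt\colon \st \mapsto (\st, \coverFunction(\st))$, together with $\bijAct{\st} = \mathrm{id}$ on actions, is an isomorphism onto $\prodAS{\initArena}{\memSkel}$. The crux is to show that the reachable state space of the product equals $\{(\st, \coverFunction(\st)) \mid \st \in \states\}$. To this end I would fix a state $\st$, invoke the standing assumption that $\st$ is reachable from some $\st_0 \in \initStates$ via a history $\usualHist$, and prove by induction on $j$ that the memory component reached after the prefix $\st_0\ldots\st_j$ equals $\coverFunction(\st_j)$: the base case uses $\coverFunction(\st_0) = \memInit$, and the inductive step applies the update condition $\memUpd(\coverFunction(\st_{j-1}), \col{\st_{j-1}, \action_j}) = \coverFunction(\st_j)$. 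This simultaneously shows that $(\st, \coverFunction(\st))$ is reachable and, running the same induction along an arbitrary path through the product, that every reachable product state has its memory component determined by $\coverFunction$. Hence $\bijSt$ is a well-defined bijection (injectivity being trivial on the first coordinate). The remaining isomorphism checks are then routine from the product definition: $\actions{(\st, \memState)} = \actions{\st}$ justifies the identity action bijection; $\states_1' = \states' \cap (\states_1 \times \memStates)$ and $\initStates' = \initStates \times \{\memInit\}$ give preservation of the partition and of initial states; and definedness, transition probabilities, and colors transfer directly, since $\transProb'((\st, \coverFunction(\st)), \action, (\st', \coverFunction(\st'))) = \transProb(\st, \action, \st')$ (again using the update condition to match memory components) and $\colSolo'((\st, \memState), \action) = \col{\st, \action}$.

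The backward direction is immediate: by the first assertion $\prodAS{\initArena}{\memSkel}$ is covered by $\memSkel$, and since coverability is preserved under isomorphism of initialized arenas (as noted right after the definition of coverability), any $\initArena$ isomorphic to its product is itself covered by $\memSkel$. The only genuinely delicate point is the inductive identification of the reachable product states in the forward direction; once one is confident that coverability forces the memory component to be a deterministic function of the arena state along every reachable path, the bijection $\bijSt$ is forced and every remaining verification is mechanical.
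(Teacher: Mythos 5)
Your proof is correct and takes essentially the same approach as the paper's: the projection $(\st,\memState)\mapsto\memState$ as the covering witness, the bijection $\st\mapsto(\st,\coverFunction(\st))$ for the forward direction, and the first claim combined with isomorphism-invariance of coverability for the backward direction. Your explicit induction along histories merely spells out what the paper dispatches in one sentence, namely that coverability forces the memory component of every reachable product state to be $\coverFunction$ of its arena component.
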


\begin{proof}
	Let $\memSkel = \memSkelFull$, $\arena = \arenaFull$.
	We show that $\prodAS{(\arena, \initStates)}{\memSkel}$ is covered by $\memSkel$.
	Let $\coverFunction\colon \states \times \memStates \to \memStates$ be the projection on $\memStates$ (that is, $\coverFunction(\st, \memState) = \memState$ for all $(\st, \memState)\in\states\times\memStates$).
	This function witnesses that $\prodAS{(\arena, \initStates)}{\memSkel}$ is covered, by definition of product initialized arena: only transitions that are consistent with the memory skeleton are allowed.

	Assume now $(\arena, \initStates)$ is covered by $\memSkel$, witnessed by function $\coverFunction$.
	We show that $(\arena, \initStates)$ is isomorphic to $\prodAS{(\arena, \initStates)}{\memSkel}$.
	In $\prodAS{(\arena, \initStates)}{\memSkel}$, it is not possible to reach two states $(\st, \memState)$, $(\st, \memState')$ with $\memState \neq \memState'$ from $\initStates\times\{\memInit\}$: otherwise, this would contradict that $(\arena, \initStates)$ is covered by $\memSkel$.
	Hence the function $\bijSt\colon \st \mapsto (\st, \coverFunction(\st))$ is a bijection between states of $(\arena, \initStates)$ and states of $\prodAS{(\arena, \initStates)}{\memSkel}$ (remember that we only keep the reachable states of the product initialized arena).
	Moreover, all the actions, transitions and colors are preserved, by definition of product initialized arena.
	Hence $(\arena, \initStates)$ is isomorphic to its own product with $\memSkel$.
	Now for the other direction, assume $\initArena$ is isomorphic to $\prodAS{\initArena}{\memSkel}$.
	Thus, $\initArena$ can be expressed as a product with $\memSkel$ and is thus covered by $\memSkel$ by the first claim.
\end{proof}

This last lemma shows in some sense an equivalence between a product and a covered initialized arena: product initialized arenas are covered (a similar result for non-initialized arenas is discussed in~\cite[Lemma~3.5]{BLORV22}) and conversely, covered initialized arenas can be written as a product.
The latter implication requires the use of \emph{initialized} arenas to be expressed in a concise way: if a game could always start from any state of an arena, taking the product with a memory skeleton would virtually always make the arena grow, and it could therefore not be isomorphic to its own product.
That is one of the main reasons we resort to \emph{initialized} arenas: we are therefore able to talk interchangeably about \emph{being a product}, which is a technical property at the core of the idea of \emph{playing with memory}, and about \emph{coverability}, which is a more intuitive, easy-to-check condition that trivially benefits from nice closure properties.

We establish two easy consequences of the previous lemmas to have a better understanding of the links between covered and product initialized arenas.
\begin{cor}\label{lem:M1M2}
	Let $\memSkel_1$ and $\memSkel_2$ be two memory skeletons.
	An initialized arena $\initArena$ is covered by $\memSkel_1$ and by $\memSkel_2$ if and only if it is covered by $\memSkel_1\memProduct\memSkel_2$.
\end{cor}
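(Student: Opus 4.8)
The plan is to prove Corollary~\ref{lem:M1M2} by combining the coverability--product equivalence from Lemma~\ref{lem:prodCov} with the fact that taking products with skeletons composes via $\memProduct$. The statement has two directions, and the natural strategy is to handle them asymmetrically: the forward direction (covered by both $\memSkel_1$ and $\memSkel_2$ $\Rightarrow$ covered by $\memSkel_1\memProduct\memSkel_2$) by directly constructing the cover function from the two given ones, and the backward direction (covered by the product $\Rightarrow$ covered by each factor) by composing the product cover function with the coordinate projections of the skeleton product.

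\textbf{Forward direction.} Suppose $\initArena$ is covered by $\memSkel_1 = \memSkelFullIndex{1}$ witnessed by $\coverFunction_1\colon \states \to \memStates^1$, and by $\memSkel_2 = \memSkelFullIndex{2}$ witnessed by $\coverFunction_2\colon \states \to \memStates^2$. First I would define the candidate cover function $\coverFunction\colon \states \to \memStates^1\times\memStates^2$ by $\coverFunction(\st) = (\coverFunction_1(\st), \coverFunction_2(\st))$ and check it satisfies the two defining conditions of coverability by $\memSkel_1\memProduct\memSkel_2$. For the initial condition, for $\st\in\initStates$ we have $\coverFunction_1(\st) = \memInit^1$ and $\coverFunction_2(\st) = \memInit^2$, so $\coverFunction(\st) = (\memInit^1, \memInit^2)$, which is exactly the initial state of $\memSkel_1\memProduct\memSkel_2$. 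For the transition condition, given $(\st, \action, \st')\in\transProb$, the update of the product skeleton acts coordinatewise, so it sends $(\coverFunction_1(\st), \coverFunction_2(\st))$ to $(\memUpd^1(\coverFunction_1(\st), \col{\st, \action}), \memUpd^2(\coverFunction_2(\st), \col{\st, \action}))$, and by the two coverability hypotheses this equals $(\coverFunction_1(\st'), \coverFunction_2(\st')) = \coverFunction(\st')$. This is all routine verification against the definitions of product skeleton and coverability.

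\textbf{Backward direction.} For the converse, I would rely on Lemma~\ref{lem:prodCov} rather than re-prove coverability by hand: if $\initArena$ is covered by $\memSkel_1\memProduct\memSkel_2$, then by that lemma it is isomorphic to $\prodAS{\initArena}{(\memSkel_1\memProduct\memSkel_2)}$, which makes the skeleton states explicit in the arena states. The cleanest argument is to take the cover function $\coverFunction\colon \states \to \memStates^1\times\memStates^2$ witnessing coverability by the product and compose it with the first and second projections to obtain candidate cover functions $\coverFunction_1 = \pi_1\circ\coverFunction$ and $\coverFunction_2 = \pi_2\circ\coverFunction$; since the product skeleton updates its two coordinates independently and in parallel, each projection satisfies the coverability condition for the corresponding factor skeleton, and the initial condition holds because the projections of $(\memInit^1, \memInit^2)$ are $\memInit^1$ and $\memInit^2$.

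\textbf{Expected main obstacle.} This corollary is genuinely easy given the machinery already in place, so there is no deep obstacle; the only thing to be careful about is making sure the coloring $\col{\st, \action}$ fed to both skeletons is literally the same color (it is, since a transition of $\arena$ carries a single color), so that the coordinatewise update of $\memSkel_1\memProduct\memSkel_2$ decomposes cleanly into the two factor updates. One could alternatively phrase the entire proof through products—covered by $\memSkel_1\memProduct\memSkel_2$ iff isomorphic to the product with $\memSkel_1\memProduct\memSkel_2$, together with an associativity-of-product observation $\prodAS{(\prodAS{\initArena}{\memSkel_1})}{\memSkel_2} \cong \prodAS{\initArena}{(\memSkel_1\memProduct\memSkel_2)}$—but the direct cover-function argument above is shorter and avoids having to state and prove that isomorphism separately.
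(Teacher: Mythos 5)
Your proof is correct, but it takes a genuinely different route from the paper's. The paper proves this statement as a true corollary of Lemma~\ref{lem:prodCov}: from coverability by $\memSkel_1$ and $\memSkel_2$ it deduces $\initArena \cong \prodAS{(\prodAS{\initArena}{\memSkel_1})}{\memSkel_2}$ (applying the lemma twice, using that isomorphism preserves coverability), then invokes the associativity-style isomorphism $\prodAS{(\prodAS{\initArena}{\memSkel_1})}{\memSkel_2} \cong \prodAS{\initArena}{(\memSkel_1\memProduct\memSkel_2)}$ via $((\st, \memState_1), \memState_2) \mapsto (\st, (\memState_1, \memState_2))$, and concludes by reading Lemma~\ref{lem:prodCov} in the other direction; the converse is obtained by running the same chain backwards. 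You instead work directly with the definition of coverability: pairing the two witnesses $\coverFunction_1, \coverFunction_2$ into $\st \mapsto (\coverFunction_1(\st), \coverFunction_2(\st))$ for the forward direction, and projecting a product witness onto its coordinates for the converse --- this is exactly the alternative you yourself flag at the end, and indeed your invocation of Lemma~\ref{lem:prodCov} in the backward direction is superfluous, since the projection argument stands on its own. Each approach has a payoff: yours is more elementary and self-contained, needing neither Lemma~\ref{lem:prodCov} nor the associativity isomorphism; the paper's fits the ``corollary'' framing by exercising the product/coverability dictionary it has just built, and the associativity isomorphism it states here is not wasted work --- the same observation is reused verbatim at the start of the proof of Theorem~\ref{thm:1to2} to argue that closure by product with $\memSkel_1$ and $\memSkel_2$ implies closure by product with $\memSkel_1\memProduct\memSkel_2$.
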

\begin{proof}
	Let $\initArena$ be covered by $\memSkel_1$ and by $\memSkel_2$.
	It is thus isomorphic to $\prodAS{(\prodAS{(\arena, \initStates)}{\memSkel_1})}{\memSkel_2}$ by applying Lemma~\ref{lem:prodCov} twice.
	Notice that $\prodAS{(\prodAS{(\arena, \initStates)}{\memSkel_1})}{\memSkel_2}$ is isomorphic to $\prodAS{(\arena, \initStates)}{(\memSkel_1\memProduct\memSkel_2)}$ (simply consider the bijection $\bijSt\colon ((\st, \memState_1), \memState_2) \mapsto (\st, (\memState_1, \memState_2))$).
	Hence, $\initArena$ is isomorphic to $\prodAS{(\arena, \initStates)}{(\memSkel_1\memProduct\memSkel_2)}$, and by using Lemma~\ref{lem:prodCov} in the other direction, is covered by $\memSkel_1\memProduct\memSkel_2$.
	Following the arguments backwards yields the other direction of the implication.
\end{proof}

The following lemma sums up our main practical use of the idea of \emph{coverability}, by proving an equivalence between optimal strategies with memory in initialized arenas and memoryless optimal strategies in covered initialized arenas, in classes of arenas with mild hypotheses.
\begin{lem}\label{prop:optProdIsCov}
	Let $\pref$ be a preference relation, $\memSkel$ be a memory skeleton, and let $\stratType\in\{\PFM, \pure, \RFM, \rand\}$ be a type of strategies.
	Let $\arenaFam$ be a class of initialized arenas closed by product with $\memSkel$.
	Then, $\Pone$ has an $\stratType$-optimal (resp.\ $\stratType$-SP) strategy based on $\memSkel$ in all initialized arenas in $\arenaFam$ if and only if $\Pone$ has a memoryless $\stratType$-optimal (resp.\ $\stratType$-SP) strategy in all initialized arenas covered by $\memSkel$ in $\arenaFam$.
	Also, there is an $\stratType$-NE (resp.\ $\stratType$-SPE) based on $\memSkel$ in all initialized arenas in $\arenaFam$ if and only if there is a memoryless $\stratType$-NE (resp.\ $\stratType$-SPE) in all initialized arenas covered by $\memSkel$ in $\arenaFam$.
\end{lem}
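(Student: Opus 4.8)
The plan is to prove the two stated equivalences (optimality and NE versions, each with its SP variant) by reducing everything to the two prior lemmas. The key observation is that Lemma~\ref{lem:memProd} translates ``strategy based on $\memSkel$'' into ``memoryless strategy in the product with $\memSkel$,'' while Lemma~\ref{lem:prodCov} identifies the product arenas with (up to isomorphism) exactly the initialized arenas covered by $\memSkel$. The guiding idea is that the closure hypothesis on $\arenaFam$ lets me pass freely between an arbitrary $\initArena\in\arenaFam$ and its product $\prodAS{\initArena}{\memSkel}$, which is again in $\arenaFam$ and is covered by $\memSkel$.

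First I would prove the forward implication of the optimality statement. Assume $\Pone$ has an $\stratType$-optimal strategy based on $\memSkel$ in every arena of $\arenaFam$. Let $\initArena\in\arenaFam$ be covered by $\memSkel$; by Lemma~\ref{lem:prodCov} it is isomorphic to $\prodAS{\initArena}{\memSkel}$. Applying the hypothesis to $\initArena$ itself gives an $\stratType$-optimal strategy $\strat_1$ based on $\memSkel$, and Lemma~\ref{lem:memProd} turns its next-action function $\memNxt$ into a memoryless $\stratType$-optimal strategy in $\prodAS{\initArena}{\memSkel}$; transporting this along the isomorphism of Lemma~\ref{lem:prodCov} yields a memoryless $\stratType$-optimal strategy directly in $\initArena$. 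Conversely, assume memoryless $\stratType$-optimal strategies exist in all arenas of $\arenaFam$ covered by $\memSkel$. Given any $\initArena\in\arenaFam$, closure by product gives $\prodAS{\initArena}{\memSkel}\in\arenaFam$, which is covered by $\memSkel$ (Lemma~\ref{lem:prodCov}); the memoryless $\stratType$-optimal strategy there is a next-action function $\memNxt$, and Lemma~\ref{lem:memProd} (read in the other direction) shows the strategy it encodes on $\initArena$, based on $\memSkel$, is $\stratType$-optimal. This handles both directions.

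The SP, NE, and SPE variants follow the identical template, invoking the versions of Lemma~\ref{lem:memProd} that the excerpt asserts hold for SP strategies, for NE, and for SPE ``with a very similar proof.'' In each case I replace ``$\stratType$-optimal strategy'' by ``$\stratType$-SP strategy,'' ``$\stratType$-NE,'' or ``$\stratType$-SPE'' throughout, and the same two lemmas carry the argument verbatim; for the equilibrium statements one simply works with pairs of strategies and notes that both the bijection $\stratFun$ of Lemma~\ref{lem:memProd} and the isomorphism of Lemma~\ref{lem:prodCov} act componentwise on pairs, preserving the defining inequalities of NE (and, after shifting, of SPE).

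The main obstacle is not any single step but ensuring the bookkeeping is airtight: one must check that the isomorphism furnished by Lemma~\ref{lem:prodCov} genuinely preserves each relevant property of strategies (purity, memorylessness, $\stratType$-optimality, and the SP/NE/SPE conditions), which rests on the already-noted fact that isomorphic initialized arenas have a property-preserving bijection between their strategies. The only genuinely delicate point is the SP case, where optimality must hold in every shifted game $(\arena, \histOut{\hist}, \shiftPref{\colHat{\hist}})$: here I would rely on the claim, asserted just after Lemma~\ref{lem:memProd}, that that lemma restates for SP strategies, so that the correspondence between memory-$\memSkel$ SP strategies and memoryless SP strategies is available off the shelf, and the coverability-product dictionary applies unchanged.
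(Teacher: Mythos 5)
Your proposal is correct and takes essentially the same approach as the paper's proof: both reduce the statement to Lemma~\ref{lem:memProd} (strategies based on $\memSkel$ correspond to memoryless strategies on the product) combined with Lemma~\ref{lem:prodCov} and closure by product, which together identify the initialized arenas of $\arenaFam$ covered by $\memSkel$ with (up to isomorphism) the products of arenas of $\arenaFam$ with $\memSkel$. The only cosmetic difference is that the paper first records this identification as a set equality and then chains two equivalences, whereas you transport strategies arena-by-arena along the isomorphism; the SP/NE/SPE variants are handled the same way in both.
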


\begin{proof}
	We first prove that products of initialized arenas in $\arenaFam$ with $\memSkel$ correspond exactly to initialized arenas of $\arenaFam$ covered by $\memSkel$, that is,
	\begin{equation}\label{eq:prodCov}
		\{\prodAS{\initArena}{\memSkel}\mid \initArena\in\arenaFam\}
		= \{\initArena\in\arenaFam\mid \initArena\text{ is covered by } \memSkel\}.
	\end{equation}
	We start with the left-to-right inclusion.
	Let $\prodAS{\initArena}{\memSkel}$ be an initialized product arena, with $\initArena\in\arenaFam$.
	Then, $\prodAS{\initArena}{\memSkel}$ belongs to $\arenaFam$, as $\arenaFam$ is closed by product with $\memSkel$.
	Moreover, $\prodAS{\initArena}{\memSkel}$ is covered by $\memSkel$ by Lemma~\ref{lem:prodCov}.
	For the right-to-left inclusion, let $\initArena\in\arenaFam$ be covered by $\memSkel$.
	Then, it is isomorphic to $\prodAS{\initArena}{\memSkel}$ by Lemma~\ref{lem:prodCov}.
	Therefore, it can be expressed as the product of an initialized arena of $\arenaFam$ with $\memSkel$.

	We now prove the main statements of the lemma.
	Player $\Pone$ has an $\stratType$-optimal (resp.\ $\stratType$-SP) strategy based on $\memSkel$ in all initialized arenas of $\arenaFam$ if and only if $\Pone$ has a memoryless $\stratType$-optimal (resp.\ $\stratType$-SP) strategy in all products of initialized arenas in $\arenaFam$ with $\memSkel$ (by Lemma~\ref{lem:memProd}) if and only if $\Pone$ has a memoryless $\stratType$-optimal (resp.\ $\stratType$-SP) strategy in all initialized arenas covered by $\memSkel$ in $\arenaFam$ (by~\eqref{eq:prodCov}).

	Similarly, there is an $\stratType$-NE (resp.\ $\stratType$-SPE) based on $\memSkel$ in all initialized arenas in $\arenaFam$ if and only if there is a memoryless $\stratType$-NE (resp.\ $\stratType$-SPE) in all products of initialized arenas in $\arenaFam$ with $\memSkel$ (by Lemma~\ref{lem:memProd}) if and only if there is a memoryless $\stratType$-NE (resp.\ $\stratType$-SPE) in all initialized arenas covered by $\memSkel$ in $\arenaFam$ (by~\eqref{eq:prodCov}).
\end{proof}

We conclude this section by showing that when pure strategies based on the same memory skeleton $\memSkel$ are sufficient to play \emph{optimally}, then pure \emph{SP} strategies based on $\memSkel$ exist.

\begin{thm}\label{prop:optAIFMimpliesSP}
	Let $\pref$ be a preference relation, $\memSkel$ be a memory skeleton, and $\stratType\in\{\PFM, \pure, \RFM, \rand\}$ be a type of strategies.
	Let $\arenaFam$ be a class of initialized arenas closed by product with $\memSkel$ and by prefix-extension.
	If $\Pone$ has pure $\stratType$-optimal strategies based on $\memSkel$ in all initialized arenas of $\arenaFam$, then $\Pone$ has pure $\stratType$-SP strategies based on $\memSkel$ in all initialized arenas of $\arenaFam$.
	If there exist pure $\stratType$-NE based on $\memSkel$ in all initialized arenas of $\arenaFam$, then there exist pure $\stratType$-SPE based on $\memSkel$ in all initialized arenas of $\arenaFam$.
\end{thm}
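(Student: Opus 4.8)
The plan is to push the whole statement down to the memoryless level via Lemma~\ref{prop:optProdIsCov}, and then to establish the memoryless version by combining prefix-extensions with a finiteness argument that is special to \emph{pure} strategies. I would prove the equilibrium assertion in detail and treat the first assertion about $\Pone$ as the analogous specialization (single strategies of $\Pone$ in place of pairs, optimality in place of the Nash-equilibrium condition). First I would invoke Lemma~\ref{prop:optProdIsCov} (its correspondence, through Lemma~\ref{lem:memProd}, preserves purity) on both the hypothesis and the conclusion. This reduces the goal to the following memoryless statement: if pure memoryless $\stratType$-NE exist in every arena of $\arenaFam$ covered by $\memSkel$, then pure memoryless $\stratType$-SPE exist in every such covered arena. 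So I fix a covered arena $\initArena\in\arenaFam$ with covering function $\coverFunction$ and note that, because a memoryless pair is insensitive to shifting, it is an $\stratType$-SPE exactly when it is an $\stratType$-NE in $(\arena,\histOut{\hist},\shiftPref{\colHat{\hist}})$ for every $\hist$, a condition that depends on $\hist$ only through the pair $(\st,\word)=(\histOut{\hist},\colHat{\hist})$, with $\word$ necessarily \emph{consistent}, i.e.\ $\memUpdHat(\memInit,\word)=\coverFunction(\st)$.

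The key observation is that prefix-extension by a consistent word preserves coverability: if $\memUpdHat(\memInit,\word)=\coverFunction(\st)$ then $\coverFunction$ extends along the added chain (assigning to each chain state the memory reached after the corresponding prefix of $\word$), so $(\arenaPref{\arena}{\word}{\st},\initStates\cup\{\st_0^\word\})$ is again covered by $\memSkel$, and it lies in $\arenaFam$ by closure under prefix-extension. Given any \emph{finite} set $F$ of consistent pairs $(\st,\word)$, I would iterate this, successively prefix-extending $\arena$ by each word of $F$, to obtain a single covered arena $\arena^F\in\arenaFam$ carrying one chain per element of $F$. By hypothesis $\arena^F$ admits a pure memoryless $\stratType$-NE; restricting it to the states of $\arena$ yields \emph{one} pure memoryless pair on $\arena$, and by the correspondence recorded just after the definition of prefix-extension, $\word\,\prcSolo{\arena}{\st}{\strat_1,\strat_2}=\prcSolo{\arenaPref{\arena}{\word}{\st}}{\st_0^\word}{\strat_1',\strat_2'}$ (where $\strat_1',\strat_2'$ traverse the forced chain and then play as $\strat_1,\strat_2$), optimality from each chain origin $\st_0^\word$ in $\arena^F$ translates into the $\stratType$-NE property in $(\arena,\st,\shiftPref{\word})$. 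Since all chains feed into $\arena$ and the restricted pair is fixed, it is simultaneously an $\stratType$-NE in $(\arena,\st,\shiftPref{\word})$ for every $(\st,\word)\in F$.

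Writing $\mathcal{N}(F)$ for the resulting nonempty set of pure memoryless pairs that are $\stratType$-NE in all subgames indexed by $F$, the family $\{\mathcal{N}(F)\}_F$ has the finite-intersection property, as $\mathcal{N}(F_1)\cap\mathcal{N}(F_2)=\mathcal{N}(F_1\cup F_2)\neq\emptyset$. Here purity is indispensable: on a finite arena there are only finitely many pure memoryless pairs, so every $\mathcal{N}(\{(\st,\word)\})$ lives in a fixed finite universe $P$, and a family of subsets of a finite set with the finite-intersection property has nonempty total intersection (otherwise each $p\in P$ would be excluded by some single pair, and the finite set of those pairs would give an empty $\mathcal{N}(F)$). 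Any element of $\bigcap_{(\st,\word)}\mathcal{N}(\{(\st,\word)\})$ is then a pure memoryless $\stratType$-SPE on $\initArena$; transporting it back through Lemma~\ref{prop:optProdIsCov} yields a pure $\stratType$-SPE based on $\memSkel$ in the original arena of $\arenaFam$, and reading the same argument with $\Pone$-optimality replacing equilibria gives the first assertion.

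The main obstacle, and the step I would be most careful about, is exactly this passage from the a priori infinite collection of shifted subgames to finitely many at a time. The multiple-prefix-extension construction is what lets a single equilibrium of $\arena^F$ witness the equilibrium property in all $\card{F}$ subgames at once (its continuation from each target state is one and the same memoryless pair), and it is the finiteness of the pure memoryless strategy space that upgrades ``for every finite $F$'' to ``for all shifted subgames simultaneously''. I would also double-check the two closure hypotheses are used precisely where expected: closure by product with $\memSkel$ underlies the reduction of Lemma~\ref{prop:optProdIsCov}, while closure by prefix-extension, together with the consistency remark above, guarantees that each $\arena^F$ remains a \emph{covered} arena of $\arenaFam$ to which the hypothesis applies.
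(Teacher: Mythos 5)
Your proof is correct, and it rests on the same two pillars as the paper's: the reduction via Lemma~\ref{prop:optProdIsCov} (with purity preserved through Lemma~\ref{lem:memProd}) to pure memoryless strategies on arenas of $\arenaFam$ covered by $\memSkel$, and the observation that prefix-extension by a word $\word$ consistent with the covering function keeps the arena in $\arenaFam$ and covered, so that the hypothesis applies to the extended arena and the equilibrium property from a chain origin $\st_0^\word$ translates into the same property in the shifted game $(\arena,\st,\shiftPref{\word})$. Where you genuinely diverge is the final combinatorial step. The paper argues by iterative improvement: starting from a memoryless optimal strategy, as long as the current strategy fails to be subgame perfect it adds \emph{one} chain for a witness history and invokes the hypothesis on the extended arena to get a new memoryless optimal strategy; termination holds because each new restriction is optimal in all previously added subgames (where the earlier strategies failed), hence distinct from all of them, and there are only finitely many pure memoryless strategies. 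You instead add the chains of an arbitrary \emph{finite} family $F$ of consistent pairs all at once, so that a single pure memoryless NE of $\arena^F$ is simultaneously an NE in every subgame indexed by $F$, and then conclude by the finite-intersection property over the finite set of pure memoryless pairs. Both arguments consume exactly the same finiteness resource (finitely many pure memoryless strategies on a finite arena, which is why the result is specific to pure AIFM strategies); yours trades the paper's termination and pairwise-distinctness bookkeeping for a clean compactness step, and it makes explicit that the subgame condition depends on a history $\hist$ only through $(\histOut{\hist},\colHat{\hist})$, while the paper's version has the mild advantage of being a direct improvement procedure that never needs to consider subgames other than the failing ones. Either route establishes the theorem.
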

\begin{proof}
	We start by proving the first claim (going from $\stratType$-optimal to $\stratType$-SP strategies).
	As $\arenaFam$ is closed by product with $\memSkel$, by Lemma~\ref{prop:optProdIsCov}, both the hypothesis and the thesis of this claim can be reformulated for pure memoryless strategies in initialized arenas covered by $\memSkel$.
	We thus prove equivalently that $\Pone$ has pure memoryless $\stratType$-SP strategies in all initialized arenas covered by $\memSkel$ in $\arenaFam$, based on the hypothesis that $\Pone$ has pure memoryless $\stratType$-optimal strategies in all initialized arenas covered by $\memSkel$ in $\arenaFam$.

	Let $(\arena_0, \initStates^0)\in\arenaFam$ be covered by $\memSkel$.
	By hypothesis, $\Pone$ has a pure memoryless $\stratType$-optimal strategy $\strat_1^0$ on $(\arena_0, \initStates^0)$.
	If this strategy is $\stratType$-SP, then we are done.
	If not, then that means that there exists $\hist_0 \in\Hists(\arena_0, \initStates^0)$, with $\st_0 = \histOut{\hist_0}$ and $\word_0 = \colHat{\hist_0}$, such that $\shiftStrat{\strat_1^0}{\hist_0}$ is not $\stratType$-optimal in $(\arena_0, \st_0, \shiftPref{\word_0})$.
	We extend arena $\arena_0$ to a new prefix-extended arena $\arena_1 = \arenaPref{(\arena_0)}{\word_0}{\st_0}$ (this notation was introduced at~\eqref{eq:prefExt}) by ``plugging'' a copy of history $\hist_0$ before $\st_0$.
	We also fix $\initStates^1 = \initStates^0 \disjUnion \{\st_0^{\word_0}\}$, where $\st_0^{\word_0}$ is the first state of the newly added chain with colors similar to $\word_0$.
	Initialized arena $(\arena_1, \initStates^1)$ is in $\arenaFam$ since $\arenaFam$ is closed by prefix-extension.
	We show that $(\arena_1, \initStates^1)$ is covered by $\memSkel$: the covering property holds from $\initStates^0$ because $(\arena_0, \initStates^0)$ was already covered by $\memSkel$ and the newly added states are not reachable from $\initStates^0$, and it holds from $\st_0^{\word_0}$ because the colors up to $\st_0$ are the same as history $\hist_0$ from $\initStates^0$.

	By hypothesis, there exists a pure memoryless $\stratType$-optimal strategy $\strat_1^1$ on $(\arena_1, \initStates^1)$.
	We argue that $\strat_1^1$ is $\stratType$-optimal in $(\arena_0, \st_0, \shiftPref{\word_0})$ (i.e., after seeing $\colHat{\hist_0}$); if it were not, then it would not be $\stratType$-optimal from $(\arena_1, \st_0^{\word_0}, \pref)$, as in both cases, the sequence of colors $\word_0$ is seen before $\st_0$ is reached, and the same (memoryless) strategy is played from $\st_0$.
	The restriction of strategy $\strat_1^1$ to $\Hists(\arena_0, \initStates^0)$ is therefore also $\stratType$-optimal in $(\arena_0, \initStates^0)$, but it is better than $\strat_1^0$ after seeing $\colHat{\hist_0}$.

	If the restriction of $\strat_1^1$ to $\Hists(\arena_0, \initStates^0)$ is $\stratType$-SP in $(\arena_0, \initStates^0)$, then we are done.
	If not, then it means that some history $\hist_1\in\Hists(\arena_0, \initStates^0)$ witnesses that $\strat_1^1$ is not $\stratType$-SP in $(\arena_0, \initStates^0)$.
	Let $\word_1 = \colHat{\hist_1}$, $\st_1 = \histOut{\hist_1}$.
	We can keep going and build an arena $\arena_2 = \arenaPref{(\arena_1)}{\word_1}{\st_1}$, with initial states $\initStates^2 = \initStates^1\disjUnion\{\st_0^{\word_1}\}$, which gives us a pure memoryless $\stratType$-optimal strategy $\strat_1^2$ on $(\arena_2, \initStates^2)$.

	We keep building initialized arenas $(\arena_i, \initStates^i)$ and pure memoryless $\stratType$-optimal strategies $\strat_1^i$ as long as the restrictions of the strategies to $\Hists(\arena_0, \initStates^0)$ are not $\stratType$-SP in $(\arena_0, \initStates^0)$.
	We argue that this iteration ends after a finite number of steps.
	The restriction of every strategy $\strat_1^i$ to $\Hists(\arena_0, \initStates^0)$ is necessarily different from the same restriction for all the previous strategies: for all $j$, $0\le j < i$, $\strat_1^i$ is better than $\strat_1^j$ after seeing history $\hist^j$, and can therefore not be equal to $\strat_1^j$.
	Moreover, there are only finitely many pure memoryless strategies on $(\arena_0, \initStates^0)$ (as this arena is finite), and there is a bijection between \emph{pure memoryless} strategies of arenas $(\arena_i, \initStates^i)$ and of arena $(\arena_0, \initStates^0)$ (as building prefix-extensions does not provide more choices for memoryless strategies).

	Combining that all strategies $\strat_1^i$ are different and the finiteness of the number of strategies shows that the iteration ends, and therefore, that, for some $i\ge 0$, the restriction of the pure memoryless strategy $\strat_1^i$ to $\Hists(\arena_0, \initStates^0)$ is $\stratType$-SP in $(\arena_0, \initStates^0)$.

	The proof to go from pure $\stratType$-NE based on $\memSkel$ to pure $\stratType$-SPE based on $\memSkel$ works in the same way, as there are also finitely many \emph{pairs} of pure memoryless strategies.
\end{proof}

The facts that we consider \emph{finite} arenas and that the hypothesis is about \emph{pure AIFM strategies} are both crucial in the previous proof, as we need the finiteness of the type of strategies considered.

This result shows a major distinction between the sufficiency of AIFM strategies and the more general sufficiency of FM strategies: if a player can always play optimally with the same memory, then SP strategies may be played with the same memory as optimal strategies --- if a player can play optimally but needs arena-\emph{dependent} finite memory, then infinite memory may still be required to obtain SP strategies.
One such example is provided in~\cite[Example~16]{LPR18} for the \emph{average-energy games with lower-bounded energy} objective in deterministic arenas: $\Pone$ can always play optimally with pure finite-memory strategies~\cite[Theorem~13]{BHMRZ17}, but infinite memory is needed for SP strategies.
As will be further explained later, we will also use Theorem~\ref{prop:optAIFMimpliesSP} to gain technical insight in the proof of the main result of Section~\ref{sec:1p}.

\section{One-to-two-player lift}\label{sec:1to2}
Our goal in this section is to obtain a practical tool to help study the memory requirements of two-player stochastic (or deterministic) games.
This tool consists in reducing the study of the sufficiency of pure AIFM strategies for both players in \emph{two}-player games to \emph{one}-player games.
We will first state our result, and the rest of the section is devoted to its proof.
This result mentions two properties of classes of arenas called being \emph{closed by subarena} and \emph{closed by split}, which we will introduce later.
In particular, it can be instantiated with $\arenaFam$ being the class of all initialized deterministic arenas or the class of all initialized stochastic arenas.

\begin{thm}[Pure AIFM one-to-two-player lift]\label{thm:1to2}
	Let $\pref$ be a preference relation, $\memSkel_1$ and $\memSkel_2$ be two memory skeletons, and $\stratType\in\allStrats$ be a type of strategies.
	Let $\arenaFam$ be a class of initialized arenas that is closed by subarena, by split, and by product with $\memSkel_1$ and $\memSkel_2$.
	Assume that
	\begin{itemize}
		\item in all initialized one-player arenas of $\Pone$ in $\arenaFam$, $\Pone$ can play $\stratType$-optimally with a pure strategy based on memory $\memSkel_1$;
		\item in all initialized one-player arenas of $\Ptwo$ in $\arenaFam$, $\Ptwo$ can play $\stratType$-optimally with a pure strategy based on memory $\memSkel_2$.
	\end{itemize}
	Then all initialized two-player arenas in $\arenaFam$ admit a pure $\stratType$-NE based on memory $\memSkel_1 \memProduct \memSkel_2$.
	If $\arenaFam$ is moreover closed by prefix-extension, then all initialized two-player arenas in $\arenaFam$ admit a pure $\stratType$-SPE based on memory $\memSkel_1 \memProduct \memSkel_2$.
\end{thm}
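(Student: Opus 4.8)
The plan is to reduce everything to \emph{memoryless} pure strategies via coverability, and then run the classical Gimbert--Zielonka edge-induction inside the class of arenas covered by the product skeleton. Write $\memSkel = \memSkel_1 \memProduct \memSkel_2$. Since $\arenaFam$ is closed by product with $\memSkel_1$ and with $\memSkel_2$, it is (up to isomorphism) closed by product with $\memSkel$, because $\prodAS{\initArena}{\memSkel}$ is isomorphic to $\prodAS{(\prodAS{\initArena}{\memSkel_1})}{\memSkel_2}$ via the bijection used in Corollary~\ref{lem:M1M2}. Hence Lemma~\ref{prop:optProdIsCov} applies with $\memSkel$: the desired conclusion (a pure $\stratType$-NE based on $\memSkel$ in every two-player arena of $\arenaFam$) is equivalent to the existence of a \emph{pure memoryless} $\stratType$-NE in every two-player arena of $\arenaFam$ that is \emph{covered by} $\memSkel$. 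Applying Lemma~\ref{prop:optProdIsCov} to each hypothesis likewise turns it into a memoryless statement: $\Pone$ (resp.\ $\Ptwo$) has a pure memoryless $\stratType$-optimal strategy in every one-player arena of $\Pone$ (resp.\ $\Ptwo$) in $\arenaFam$ covered by $\memSkel_1$ (resp.\ $\memSkel_2$). By Corollary~\ref{lem:M1M2}, an arena covered by $\memSkel$ is covered by both $\memSkel_1$ and $\memSkel_2$, so these memoryless one-player hypotheses are available for every one-player sub-arena of either player that is covered by $\memSkel$.

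The core is then an induction on the total number of actions $\sum_{\st\in\states}\card{\actions{\st}}$, carried out in the class of two-player arenas of $\arenaFam$ covered by $\memSkel$. Two closure facts must be checked first: the subarena and split operations keep us inside $\arenaFam$ (by the assumed closure properties) and they \emph{preserve coverability by} $\memSkel$ --- removing actions leaves the covering function $\coverFunction$ valid, and splitting a state into copies is handled by giving each copy the memory state of the original. The base case is when every state has a single available action: the arena is then a Markov chain, the unique strategy pair is pure and memoryless, and it is trivially an $\stratType$-NE.

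For the inductive step, pick a state $\st_0$ with $\card{\actions{\st_0}}\ge 2$, say controlled by $\Pone$ (the $\Ptwo$-case is symmetric, exchanging the roles via $\invPref$). Splitting $\actions{\st_0}$ into two nonempty parts yields, through the split and subarena operations, two arenas with strictly fewer actions, each in $\arenaFam$ and covered by $\memSkel$; the induction hypothesis provides a pure memoryless $\stratType$-NE in each. These are combined into a single equilibrium on $\arena$ by having $\Pone$ commit at $\st_0$ to the part that is better for it and by taking best responses. The decisive verification relies on the one-player hypotheses: once a memoryless strategy of one player is fixed on $\arena$, the residual game is a one-player arena of the opponent, still in $\arenaFam$ and covered by $\memSkel$ (hence by that opponent's skeleton), so the opponent admits a pure memoryless best response. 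Comparing the two sub-arena values and invoking Lemma~\ref{lem:mixNE} to recombine equilibrium components then shows that neither player gains by deviating, producing a pure memoryless $\stratType$-NE on $\arena$; translating back through Lemma~\ref{prop:optProdIsCov} yields the pure $\stratType$-NE based on $\memSkel$ claimed in the first conclusion.

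The main obstacle is exactly this combination step: merging the two sub-arena equilibria into a genuine equilibrium on $\arena$ while staying memoryless. This is where the one-player assumptions must be invoked with care, since they only guarantee memoryless best responses \emph{on arenas covered by the appropriate skeleton}; preserving coverability under split and subarena --- and hence the applicability of those assumptions to every residual one-player game --- is the technical linchpin that the reductions above are designed to secure. Finally, for the subgame-perfect refinement, assume $\arenaFam$ is moreover closed by prefix-extension and apply Theorem~\ref{prop:optAIFMimpliesSP}, whose second part upgrades the existence of pure $\stratType$-NE based on $\memSkel$ to the existence of pure $\stratType$-SPE based on $\memSkel$, giving the second conclusion.
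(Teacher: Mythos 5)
Your reduction scaffolding is exactly the paper's: Lemma~\ref{prop:optProdIsCov} together with Corollary~\ref{lem:M1M2} to translate both the hypotheses and the conclusion into statements about pure memoryless strategies on arenas covered by $\memSkel_1\memProduct\memSkel_2$, preservation of coverability under subarena and split (the paper's Lemma~\ref{lem:covAreClosed}), an edge induction with the same base case (the paper factors it out as Lemma~\ref{thm:GZ16Gen}), and Theorem~\ref{prop:optAIFMimpliesSP} for the SPE upgrade. However, the inductive combination step --- which is the actual mathematical content of the lift --- is not carried out, and the shortcut you sketch in its place does not work. Fixing a memoryless strategy of one player ``on $\arena$'' and taking a memoryless best response does not produce an equilibrium: if you fix, say, $\Ptwo$'s sub-equilibrium component $\strat_2^{A_1}$ (which is indeed a memoryless strategy of the full arena, since your subarenas only restrict $\Pone$'s actions at $t$), then $\Pone$ has a memoryless best response to it by the one-player hypothesis, but $\strat_2^{A_1}$ carries no optimality guarantee against replies that use actions of the other part $A_2$ at $t$, so the resulting pair need not be an NE of $(\arena,\initStates)$. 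Moreover, Lemma~\ref{lem:mixNE} mixes two NE \emph{of the same game}; your two sub-arena equilibria are equilibria of \emph{different} games, so it cannot ``recombine equilibrium components'' across subarenas.

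The missing ingredient is the precise role of the split in the combination, which you invoke only for closure and coverability. Since the $\Ptwo$-components $\strat_2^\action$ of the sub-arena equilibria generally disagree on $\Ptwo$'s states, their union is well-defined and memoryless only on $\splitArena{\arena}{t}$, whose copies record the last action played at $t$. The paper's argument is: (i) form $\strat_2=\bigcup_{\action}\strat_2^\action$ on the split; (ii) fix it to obtain a one-player arena of $\Pone$ that is a \emph{subarena of the split} (hence still in the class and covered), and apply the one-player hypothesis there --- the action $\action^*$ chosen at $t$ by the resulting memoryless optimal strategy is what makes ``commit to the better part'' precise; (iii) verify the NE inequalities on $(\splitArena{\arena}{t},\initStates^{\action^*})$ using Lemma~\ref{lem:stratSplit}; (iv) project back with Lemma~\ref{lem:splitProjection}, which yields an NE of $(\arena,\initStates)$ whose $\Pone$-component is memoryless but whose $\Ptwo$-component need not be --- this is exactly why the symmetric construction for $\Ptwo$ is performed and only then Lemma~\ref{lem:mixNE} is applied, to two NE of the \emph{same} arena $(\arena,\initStates)$. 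You correctly flag this combination as ``the main obstacle,'' but naming the obstacle is not overcoming it; as written, the proposal has a genuine gap at the heart of the induction.
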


The practical usage of this result can be summed up as follows: to determine whether pure AIFM strategies are sufficient for both players in stochastic (resp.\ deterministic) arenas to play $\stratType$-optimally, it is sufficient to prove it for stochastic (resp.\ deterministic) one-player arenas.
Studying memory requirements of one-player arenas is significantly easier than studying memory requirements of two-player arenas, as a one-player arena can be seen as a graph (in the deterministic case) or an MDP (in the stochastic case).
Still, we will bring more tools to study memory requirements of one-player arenas in Section~\ref{sec:1p}.

Our proof technique for Theorem~\ref{thm:1to2} is able to deal in a uniform manner with stochastic arenas and with deterministic arenas, under different types of strategies.
It borrows ideas from~\cite{GZ09} and from~\cite{BLORV22} and extends them both: it extends~\cite{GZ09} by generalizing to a wider type of strategies (AIFM instead of memoryless) and it extends~\cite{BLORV22} by extending the class of arenas and preference relations considered (stochastic instead of deterministic).
Thanks to Theorem~\ref{prop:optAIFMimpliesSP}, we also go further in our understanding of the optimal strategies: we are able to obtain the existence of $\stratType$-SPE with almost the same constraints, instead of the seemingly weaker existence of $\stratType$-NE\@.

\begin{rem}
	As discussed previously, Theorem~\ref{thm:1to2}, which deals with AIFM strategies, was known for memoryless strategies, i.e., with $\memSkel_1 = \memSkel_2 = \memSkelTriv$~\cite[Theorem~9]{GZ09}.
	Observe that by using Lemma~\ref{lem:memProd}, we can reduce Theorem~\ref{thm:1to2} to a result dealing only with \emph{memoryless} strategies, both in the hypothesis and the conclusion.
	This observation is insufficient to immediately derive the result about AIFM strategies from the result about memoryless strategies, as the hypothesis is then about memoryless strategies in the class of \emph{product arenas}.
	Without additional technical changes (discussed below) to~\cite[Theorem~9]{GZ09}, we cannot directly apply it to this class of product arenas.
	\qedEx
\end{rem}

\subsection{Proof scheme}
In order to prove Theorem~\ref{thm:1to2}, we first establish a similar result about memoryless strategies.
We carry out all our intermediate proofs with the concept of Nash equilibrium, and we will strengthen it to subgame perfect equilibria at the end, thanks to Theorem~\ref{prop:optAIFMimpliesSP}.

\begin{lem}[Memoryless one-to-two-player lift]\label{thm:GZ16Gen}
	Let $\pref$ be a preference relation and $\stratType\in\{\PFM, \pure, \RFM, \rand\}$ be a type of strategies.
	Let $\arenaFam$ be a class of initialized arenas that is closed by subarena and by split.
	If both players have pure memoryless $\stratType$-optimal strategies in the initialized one-player arenas in~$\arenaFam$, then all initialized arenas in $\arenaFam$ admit a pure memoryless $\stratType$-NE\@.
\end{lem}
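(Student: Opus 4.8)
The plan is to prove the statement by induction on the total number of ``edges'' (i.e., state-action pairs, or more precisely the number of actions available across all of \Pone's and \Ptwo's states) in the two-player arena, following the edge-induction scheme used in~\cite{GZ05,GZ09,BLORV20}. The base case is when every state has a single available action: then the arena is simultaneously a one-player arena for both players, and the hypothesis directly provides pure memoryless \stratType-optimal strategies for each player; pairing them gives a pure memoryless \stratType-NE (in the zero-sum setting, a pair of mutually optimal strategies is automatically a Nash equilibrium). For the inductive step, I would fix some state $\st$ (say controlled by \Pone) that has at least two available actions, pick one action $\action_0\in\actions{\st}$, and split the available actions at $\st$ into the singleton $\{\action_0\}$ and its complement $\actions{\st}\setminus\{\action_0\}$. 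This yields two arenas with strictly fewer edges at $\st$, each obtained by restricting the actions at $\st$; the closure of \arenaFam by subarena guarantees both of these remain in the class, so the induction hypothesis applies to each and furnishes pure memoryless \stratType-NE on both.

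The heart of the argument is then to recombine the two equilibria coming from the two sub-arenas into a single equilibrium on the full arena. Here I would use the \emph{split} operation and the closure by split: the idea from~\cite{GZ09,BLORV20} is to build an auxiliary arena in which \Pone first makes a binary ``selection'' choice (commit to $\action_0$, or commit to never playing $\action_0$ from the copy of $\st$), after which play proceeds in the corresponding restricted arena. The induction hypothesis, applied to this split arena (which lies in \arenaFam by closure), yields a pure memoryless \stratType-NE there; one also invokes the one-player optimality hypothesis to analyze \Pone's selection decision. Lemma~\ref{lem:mixNE} (the ``mixing'' of equilibria) would be used crucially to glue together the local equilibrium strategies of the two branches into a globally consistent pair: since both branches give \stratType-NE, their cross-combinations are again \stratType-NE, so I can freely pick \Pone's action at $\st$ according to which branch is at least as good for \Pone while retaining \Ptwo's equilibrium response. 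Projecting the resulting equilibrium on the split/selection arena back down to the original arena $\arena$, and checking that the projected strategies are still pure and memoryless, gives the desired pure memoryless \stratType-NE on the full two-player arena.

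I expect the main obstacle to lie precisely in this recombination step, and specifically in verifying that the locally-optimal decision at $\st$ is also \emph{globally} optimal — i.e., that no profitable deviation exists for either player once the two branches are merged. In the deterministic setting this is handled via the monotony/selectivity machinery, but here, working directly with \stratType-NE and a preference relation \pref over distributions, one must argue using the comparison operators \UCol and \DCol and the zero-sum structure that the equilibrium value of the merged arena equals the better (for \Pone) of the two branch values. The stochastic generality means that the outcomes compared at $\st$ are probability distributions over \colors^\omega rather than single plays, so the ``selection'' argument must be phrased in terms of \pref comparing the induced distributions $\prcSolo{\arena}{\st}{\strat_1,\strat_2}$; the delicate point is ensuring that combining a memoryless choice at $\st$ with memoryless equilibrium strategies downstream does not secretly require memory to remember which branch was taken. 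This is exactly where the split construction and the careful bookkeeping of which arena each branch lives in become essential, and where I would spend the bulk of the detailed verification.
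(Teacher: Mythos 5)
Your plan follows the same general route as the paper's proof (edge induction, action-restricted subarenas, a split-type auxiliary arena, the one-player hypothesis to settle the choice at $\st$, and Lemma~\ref{lem:mixNE}), but two of its load-bearing steps fail as stated. The first is the claim that the induction hypothesis applies to the split/selection arena. It does not: splitting on $\st$ duplicates every state other than $\st$, one copy per branch, so the number of choices of the split arena is at least that of $\arena$, and strictly larger as soon as any other state has two available actions --- the induction measure never decreases, and the step would be circular. In the paper's proof the induction hypothesis is applied \emph{only} to the action-restricted subarenas $(\arena_\action, \initStates^\action)$, and the equilibrium on the split arena is built by hand: the strategies $\strat_2^\action$ of $\Ptwo$ coming from those subarenas are glued into one pure memoryless strategy $\strat_2$ on the split (well defined because the copies overlap only at $\st$, which belongs to $\Pone$); fixing $\strat_2$ turns the split into a one-player arena of $\Pone$, which is a \emph{subarena} of the split and hence in $\arenaFam$, and it is the lemma's one-player hypothesis --- not the induction hypothesis --- that yields a pure memoryless optimal strategy there, whose action $\action^*$ at $\st$ fixes $\Pone$'s commitment; the equilibrium property of the glued pair is then verified directly. (Note also that closure by split gives you the per-action split of Definition~\ref{def:split}, not your binary variant, so your two-branch auxiliary arena is not even guaranteed to lie in $\arenaFam$.)

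The second problem is the recombination. Lemma~\ref{lem:mixNE} mixes two equilibria \emph{of the same game}; it cannot ``cross-combine'' equilibria of the two different restricted arenas, so it does not let you pick the better branch while retaining $\Ptwo$'s equilibrium response. Moreover, the worry you defer to ``detailed verification'' --- that the projected strategies might secretly need memory --- is a genuine obstruction, not a detail: projecting the split equilibrium back to $(\arena, \initStates)$ via Lemma~\ref{lem:splitProjection} yields a memoryless strategy only for the player who commits to a single action at $\st$ (here $\Pone$); $\Ptwo$'s projected strategy may depend on the last action played at $\st$, and in general it is \emph{not} memoryless. The paper's resolution is to run the entire construction a second time symmetrically (splitting on a state of $\Ptwo$), obtaining a second $\stratType$-NE of $(\arena, \initStates)$ whose $\Ptwo$-component is pure memoryless, and only then to apply Lemma~\ref{lem:mixNE} --- now legitimately, since both equilibria live on the same initialized arena --- to conclude that the pair formed by $\Pone$'s memoryless strategy from the first equilibrium and $\Ptwo$'s memoryless strategy from the second is a pure memoryless $\stratType$-NE. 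Without these corrections your inductive step does not close.
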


This result and its proof are very similar to~\cite[Theorem~9]{GZ09}.
It applies in a generic way to various classes of arenas (mostly, deterministic arenas or stochastic arenas).
This is an advantage compared to the proofs of~\cite{GZ05,BLORV22}, that were both strongly coupled with the monotony and selectivity notions, that are (in the form stated in these papers) only suited to deal with deterministic games.
An important addition to~\cite[Theorem~9]{GZ09} is that we consider here \emph{initialized} arenas: the strategies do not have to be optimal from all states, but only from the specified initial states.
We explain why, albeit small, this is an important addition to obtain our result.

Lemma~\ref{thm:GZ16Gen} can immediately be instantiated with $\arenaFam$ being the class of all deterministic or stochastic arenas to obtain an interesting result about pure memoryless strategies.
Our goal will be to instantiate it, for some fixed memory skeleton $\memSkel$, with the class of \emph{initialized arenas covered by $\memSkel$}, so that we can later obtain results about pure AIFM strategies (through Lemma~\ref{prop:optProdIsCov}) instead of only using it for pure memoryless strategies, in a similar spirit to~\cite{BLORV22}.
As we will see, the class of \emph{initialized} arenas covered by $\memSkel$ happens to be closed by subarena and by split.

This extension to pure AIFM strategies is one precise step where the notion of \emph{initialized arenas} finds its use: in covered (or product) arenas, we are only interested in optimality from arena states associated to memory state $\memInit$, and not from all states.

Without making Lemma~\ref{thm:GZ16Gen} about \emph{initialized} arenas (i.e., as in~\cite[Theorem~9]{GZ09}), a natural candidate to extend it to AIFM strategies would be to consider the class of all ``product arenas'' (with no distinction of initial states) in its statement, through their connection with memoryless strategies from Lemma~\ref{lem:memProd}.
However, such product arenas are not closed by subarena: if we remove a transition from one of them, it is not possible in general to realize it as a product with a smaller arena (unlike what happens if we distinguish initial states, as in Lemma~\ref{lem:prodCov}).
Therefore, the inductive proof technique cannot be performed directly on such product arenas.
On the other hand, covered arenas, even with no distinguished initial states, are closed by subarena, but all of them are not obtained by product arenas, so Lemma~\ref{lem:memProd} cannot be used straight away.

Without restating and reproving~\cite[Theorem~9]{GZ09} with an extra quantification on the initial states, it seems difficult to extend it straightforwardly to a result about pure AIFM strategies.

\subsection{Proving Lemma~\ref{thm:GZ16Gen}}
We recall the definitions of subarena and split from~\cite{GZ09}, extending them in a natural way to initialized arenas.

\begin{defi}[Initialized subarena]
	Let $(\arena = \arenaFull, \initStates)$ be an initialized arena.
	An \emph{initialized subarena of $(\arena, \initStates)$} is an initialized arena $((\states_1, \states_2, \act', \transProb', \colSolo'), \initStates)$ such that $\act' \subseteq \act$, $\transProb'\subseteq \transProb$ (that is, some states might lose a few available actions), and $\colSolo'$ is the restriction of $\colSolo$ to the pairs $(\st, \action)$ such that $\st\in\states$ and $\action\in\act'(\st)$.
\end{defi}

An initialized subarena keeps the same state space and initial states as the original arena, but with fewer available actions.
Remember that we assume that arenas are non-blocking, hence at least one available action should be kept in each state of the initialized subarena.
We say that a class $\arenaFam$ of initialized arenas is \emph{closed by subarena} if for all $\initArena\in\arenaFam$, if $(\arena', \initStates)$ is a subarena of $\initArena$, then $(\arena', \initStates)\in\arenaFam$.

We now define the notion of \emph{split on $t\in\states$} of an arena: the main idea is that for some state $t$ of the arena, the state space of the arena is augmented in such a way that players remember what was the last action played when leaving state $t$.
In practice, for each action $\action$ available in $t$, we first make a copy $\arena_\action$ of the arena where only $\action$ is available to play in $t$, and we rename all states $\st \mapsto \st^\action$.
Then, we merge all states $t^\action$ of the arenas $(\arena_\action)_{\action\in\act(t)}$, and we rename the resulting state $t$ after the merge.
All the other states stay as they were in the arenas $\arena_\action$.
Every action $\action$ available in $t$ therefore leads to a copy of the arena in which states are labeled by $\action$.
Before introducing the formal definition, we provide an example of a split in Figure~\ref{fig:split}.
There are two actions $a$ and $b$ available in $t$, and we make a copy of the other states for each action.
This way, when the game is for instance in $s^a$, we know that the last action that was chosen in $t$ was $a$ (which is not necessarily the case when in $s$ in the original arena).
The probabilities, colors and initial states are preserved in each copy of the initialized arena.

\begin{figure}[tbh]
	\centering
	\begin{minipage}{0.45\columnwidth}
	\centering
	\begin{tikzpicture}[every node/.style={font=\small,inner sep=1pt}]
		\draw (0,0) node[rond] (t) {$t$};
		\draw ($(t)+(1,0.5)$) node[rond,fill=black,minimum size=3pt] (ta) {};
		\draw ($(t)+(2.5,1)$) node[rond] (r) {$r$};
		\draw ($(t)+(2.5,-1)$) node[carre] (s) {$s$};
		\draw ($(t)-(0.75,0)$) edge[-latex'] (t);
		\draw ($(s)+(0.75,0)$) edge[-latex'] (s);

		\draw (t) edge[-latex'] node[above=2pt,xshift=-1pt] {$a$} (ta);
		\draw (ta) edge[-latex'] node[above=2pt,xshift=-1pt] {$\frac{1}{2}$} (r);
		\draw (ta) edge[-latex'] node[above=2pt,xshift=1pt] {$\frac{1}{2}$} (s);
		\draw (t) edge[-latex'] node[above=2pt] {$b$} (s);

		\draw (r) edge[-latex',out=30,in=-30,distance=15pt] (r);
		\draw (s) edge[-latex'] node[above=2pt] {} (r);
		\draw (s) edge[-latex',bend left=20] node[above=2pt] {} (t);
	\end{tikzpicture}
	\end{minipage}
	\begin{minipage}{0.45\columnwidth}
	\centering
	\begin{tikzpicture}[every node/.style={font=\small,inner sep=1pt}]
		\draw (0,0) node[rond] (t) {$t$};
		\draw ($(t)+(1,0.5)$) node[rond,fill=black,minimum size=3pt] (ta) {};
		\draw ($(t)+(2.5,1)$) node[rond] (ra) {$r^a$};
		\draw ($(t)+(2.5,-1)$) node[carre] (sa) {$s^a$};
		\draw ($(t)+(-2.5,1)$) node[rond] (rb) {$r^b$};
		\draw ($(t)+(-2.5,-1)$) node[carre] (sb) {$s^b$};
		\draw ($(t)-(0.75,0)$) edge[-latex'] (t);
		\draw ($(sa)+(0.75,0)$) edge[-latex'] (sa);
		\draw ($(sb)-(0.75,0)$) edge[-latex'] (sb);

		\draw (t) edge[-latex'] node[above=2pt,xshift=-1pt] {$a$} (ta);
		\draw (ta) edge[-latex'] node[above=2pt,xshift=-1pt] {$\frac{1}{2}$} (ra);
		\draw (ta) edge[-latex'] node[above=2pt,xshift=1pt] {$\frac{1}{2}$} (sa);
		\draw (t) edge[-latex'] node[above=2pt] {$b$} (sb);

		\draw (ra) edge[-latex',out=30,in=-30,distance=15pt] (ra);
		\draw (rb) edge[-latex',out=150,in=-150,distance=15pt] (rb);
		\draw (sa) edge[-latex'] node[above=2pt] {} (ra);
		\draw (sa) edge[-latex',bend left=20] node[above=2pt] {} (t);
		\draw (sb) edge[-latex'] node[above=2pt] {} (rb);
		\draw (sb) edge[-latex',bend left=-20] node[above=2pt] {} (t);
	\end{tikzpicture}
	\end{minipage}
	\caption{Initialized arena with $\actions{t} = \{a,b\}$ (omitting colors) (left) and its split on $t$ (right).
	States controlled by $\Pone$ (resp.\ $\Ptwo$) are depicted by circles (resp.\ squares).
	The dot after playing action $a$ represents a stochastic transition, with probability $\frac{1}{2}$ to go to $r$ and $\frac{1}{2}$ to go to $\st$.}%
	\label{fig:split}
\end{figure}

\begin{defi}[Split]\label{def:split}
	Let $(\arena = \arenaFull, \initStates)$ be an initialized arena, and $t\in\states_1$.
	For $\action\in\actions{t}$, we denote $(\arena_\action, \initStates^\action)$ as the initialized subarena of $(\arena, \initStates)$ in which only action $\action$ is available in $t$ and in which all states are renamed $\st\mapsto\st^\action$.
	The \emph{split on $t$ of $(\arena, \initStates)$} is the initialized arena $(\splitArena{\arena}{t}, \splitInitStates{t})$ where
	\[
		\splitArena{\arena}{t}
		= \bigArenaJoin_{\action\in\actions{t}} (\arena_a, t^a),
	\]
	with merged state called $t$, and $\splitInitStates{t} = \bigcup_{\action\in\actions{t}} \{\st^a\mid \st\in\initStates\}$ (with $t^\action = t$ for all $\action\in\actions{t}$).
\end{defi}

This could be defined symmetrically for a state $t\in\states_2$.
The merge operator $\arenaJoin$ was introduced at~\eqref{eq:merge} on page~\pageref{eq:merge}.
When considering a split arena on a state $t$, we use the convention that $t^\action = t$ for any action $\action$ available in $t$.
Moreover, for $\action$ an action available in $t$, and $\states'\subseteq \states$, we write $(\states')^\action$ for $\{\st^\action \mid \st\in\states'\}$.
A class $\arenaFam$ of initialized arenas is \emph{closed by split} if for all $\initArena\in\arenaFam$, for all states $t$ of $\arena$, $(\splitArena{\arena}{t}, \splitInitStates{t})\in\arenaFam$.

It is possible to formulate a few intuitive results linking plays and strategies of an initialized arena and its split.
These results are provided in a very similar context in~\cite{GZ09}; we recall three results precisely in Appendix~\ref{sec:split} and sketch their statements here.
\begin{itemize}
	\item \underline{Lemma~\ref{lem:bijSplit}}.
Let $(\arena = \arenaFull, \initStates)$ be an initialized arena with a state $t\in\states_1$.
For all $\action\in\actions{t}$, it is possible to build a natural bijection between strategies in $\stratsType{i}{\rand}{\arena, \initStates}$ and strategies on the split with restricted initial states $\stratsType{i}{\rand}{\splitArena{\arena}{t}, \initStates^\action}$.
Intuitively, the available actions are the same at every step and the split does not offer any more possibilities (besides having more initial states --- that is why we must restrict the initial states to have a bijection).
More memory might be needed to play the corresponding strategy in $\initArena$ than in its split (since the information of the last action played in $t$ is not explicitly given in $\arena$), but finite-memory strategies stay finite-memory in both directions.
This bijection also preserves the ``pure'' feature of the strategy and preserves optimality and NE\@.

\item \underline{Lemma~\ref{lem:splitProjection}}. If a pair of strategies $(\strat_1, \strat_2)$ is an $\stratType$-NE in $(\splitArena{\arena}{t}, \initStates^\action)$ and $\strat_1$ is pure memoryless with $\strat_1(t) = \action$, only the part $\states^\action$ of the split is ever reached during the play, which corresponds to the state space of $\arena$.
It is therefore possible to transform this $\stratType$-NE into an $\stratType$-NE $(\strat_1', \strat_2')$ in $\initArena$ such that $\strat_1'$ is still pure and memoryless.
Note that extending this lemma to strategies that are not only pure seems difficult: if the memoryless choice of $\strat_1$ is not deterministic in $t$, then multiple parts of the split arena may still be reached.

\item \underline{Lemma~\ref{lem:stratSplit}}.
If a pair of strategies $(\strat_1, \strat_2)$ on $(\splitArena{\arena}{t}, \initStates^\action)$ is such that $\strat_1$ is pure memoryless with $\strat_1(t) = \action$, then for $\st^\action\in\initStates^\action$, we can show that the distribution $\prcSolo{\splitArena{\arena}{t}}{\st^\action}{\strat_1, \strat_2}$ is equal to $\prcSolo{\arena_\action}{\st^\action}{\strat_1^\action, \strat_2^\action}$, where $\strat_1^\action$ and $\strat_2^\action$ are simply restrictions of $\strat_1$ and $\strat_2$ to histories of the subarena $(\arena_\action, \initStates^\action)$ (Lemma~\ref{lem:stratSplit}).
\end{itemize}

\noindent
We are now ready to prove Lemma~\ref{thm:GZ16Gen}.
The proof is by induction on the number of \emph{choices} in arenas:
for $\arena = \arenaFull$ an arena, the \emph{number of choices in $\arena$} is defined as
\begin{equation}\label{eq:choices}
\size{\arena} = (\sum_{\st\in\states} \card{\actions{\st}}) - \card{\states}.
\end{equation}
When the number of choices in $\arena$ is $0$, it means that there is exactly one available action in each state.
\begin{proof}[Proof of Lemma~\ref{thm:GZ16Gen}]
	We proceed by induction on the number of choices in arenas.
	If an initialized arena $(\arena, \initStates)\in\arenaFam$ is such that $\size{\arena} = 0$, then both players only have a single available strategy which is both pure and memoryless.
	Hence this pair of strategies correspond to a pure memoryless $\stratType$-NE, which proves the base case.
	Now let $n > 0$: we assume that the result holds for every initialized arena $\initArena\in\arenaFam$ with $\size{\arena} < n$, and let $(\arena, \initStates)\in\arenaFam$ be an initialized arena such that $\size{\arena} = n$.

	If $\Pone$ has no choice (that is, $\card{\actions{\st}} = 1$ for all $\st\in\states_1$), then $(\arena, \initStates)$ is an initialized one-player arena of $\Ptwo$, and $\Ptwo$ has a pure memoryless $\stratType$-optimal strategy by hypothesis from the statement of the theorem (and as $\Pone$ has only one possible strategy which happens to be pure and memoryless, we have a pure memoryless $\stratType$-NE).
	We now focus on the case where $\Pone$ has at least one choice:
	let $t \in \states$ be such that $\card{\actions{t}} \ge 2$.
	For $\action\in\actions{t}$, let $(\arena_\action, \initStates^\action)$ be the initialized subarena of $(\arena, \initStates)$ with only action $\action$ available in $t$, and all states renamed $\st\mapsto\st^\action$.
	This implies that $(\arena_\action, \initStates^\action)$ is in $\arenaFam$, as $\arenaFam$ is closed by subarena.
	Notice that $\size{\arena_\action} < \size{\arena}$, as this is the same arena except that some available actions are removed in $t$.
	By induction hypothesis, there is thus a pure memoryless $\stratType$-NE $(\strat_1^\action, \strat_2^\action)$ in arena $(\arena_\action, \initStates^\action)$, for each $\action\in\actions{t}$.

	We now consider the split on $t$ of $(\arena, \initStates)$, which we denote $(\splitArena{\arena}{t}, \splitInitStates{t})$, and which belongs to $\arenaFam$ as $\arenaFam$ is closed by split.
	Consider the pure memoryless strategy $\strat_2
	= \bigcup_{\action\in\actions{t}} \strat_2^\action$
	of $\Ptwo$ defined on $(\splitArena{\arena}{t}, \splitInitStates{t})$ which, when in $\states^\action$ for some $\action\in\actions{t}$, plays the same actions as the pure memoryless strategy $\strat_2^\action$ (this prescribes a unique action to every state of $\splitArena{\arena}{t}$ controlled by $\Ptwo$; the only overlap between the subarenas is $t$, but $t$ belongs to $\Pone$).
	We cannot straightaway define a similar strategy $\strat_1$ for $\Pone$, as it would not be well-defined in $t$ --- we first have to carefully choose the action played in $t$.

	To do so, we consider the initialized one-player arena $(\arenaSplitFixedStrat, \splitInitStates{t})$ of $\Pone$ resulting from fixing the (pure memoryless) strategy $\strat_2$ of $\Ptwo$ in $\splitArena{\arena}{t}$.
	This initialized arena is an initialized subarena of $(\splitArena{\arena}{t}, \splitInitStates{t})$ (actions of $\Ptwo$ have been removed), and hence belongs to $\arenaFam$.
	By hypothesis from the statement of the theorem, $\Pone$ has a pure memoryless $\stratType$-optimal strategy $\stratBis_1$ in $(\arenaSplitFixedStrat, \splitInitStates{t})$ since it is an initialized one-player arena.
	Let $\action^*\in\actions{t}$ be the action $\stratBis_1(t)$ played by $\Pone$ in $t$.
	We use this strategy $\stratBis_1$ to define a pure memoryless strategy $\strat_1$ of $\Pone$ on $(\splitArena{\arena}{t}, \splitInitStates{t})$ that plays $\action^*$ in $t$, and for all $\action\in\actions{t}$, that behaves like $\strat_1^\action$ in $\states^\action$; formally,
	\begin{align*}
		\strat_1(t) = \tau_1(t) = \action^*, \text{ and for } \action \in \actions{t}\text{, } \st\in \states_1\setminus\{t\}\text{, }
		\strat_1(\st^\action) = \strat_1^\action(\st^\action).
	\end{align*}

	We now show that $(\strat_1, \strat_2)$ is a pure memoryless $\stratType$-NE in $(\splitArena{\arena}{t}, \initStates^{\action^*})$.
	We restrict our attention to initial states $\initStates^{\action^*}$ as these states are in the part of the arena that $\strat_1$ always goes back to, and that will help us convert $\strat_1$ into a corresponding pure memoryless strategy on $\initArena$ using Lemma~\ref{lem:splitProjection}.
	We show that for all $\st^{\action^*} \in \initStates^{\action^*}$, for all $\strat_1'\in\stratsType{1}{\stratType}{\splitArena{\arena}{t}, \initStates^{\action^*}}$, $\strat_2'\in\stratsType{2}{\stratType}{\splitArena{\arena}{t}, \initStates^{\action^*}}$,
	\[
		\prcSolo{\splitArena{\arena}{t}}{\st^{\action^*}}{\sigma_1', \sigma_2} \pref
		\prcSolo{\splitArena{\arena}{t}}{\st^{\action^*}}{\sigma_1, \sigma_2} \pref
		\prcSolo{\splitArena{\arena}{t}}{\st^{\action^*}}{\sigma_1, \sigma_2'}.
	\]
	Let $\st^{\action^*} \in \initStates^{\action^*}$.
	The right-hand side inequality is clear, as the play always stays in $\arena_{\action^*}$ by definition of $\strat_1$, and the restriction of $\strat_1$ and $\strat_2$ to $\Hists(\arena_{\action^*}, \initStates^{\action^*})$ is $(\strat_1^{\action^*}, \strat_2^{\action^*})$, which is an $\stratType$-NE in $(\arena_{\action^*}, \initStates^{\action^*})$.

	For the left-hand side inequality, let $\strat_1'\in\stratsType{1}{\stratType}{\splitArena{\arena}{t}, \initStates^{\action^*}}$ be a strategy of $\Pone$.
	We denote by $\stratBis_1^{\action^*}$ the restriction of $\stratBis_1$ to $\Hists(\arena_{\action^*}, \initStates^{\action^*})$.

	We have
	\begin{align*}
		\prcSolo{\splitArena{\arena}{t}}{\st^{\action^*}}{\sigma_1', \sigma_2}
		&\pref \prcSolo{\splitArena{\arena}{t}}{\st^{\action^*}}{\stratBis_1, \strat_2}
		&&\text{as $\stratBis_1$ is $\stratType$-optimal against $\strat_2$ in $(\splitArena{\arena}{t}, \splitInitStates{t})$,}\\
		& &&\text{and $\st^{\action^*}\in\splitInitStates{t}$} \\
		&= \prcSolo{\arena_{\action^*}}{\st^{\action^*}}{\stratBis_1^{\action^*}, \strat_2^{\action^*}}
		&&\text{by Lemma~\ref{lem:stratSplit}}\\
		&\pref \prcSolo{\arena_{\action^*}}{\st^{\action^*}}{\strat_1^{\action^*}, \strat_2^{\action^*}}
		&&\text{as $(\strat_1^{\action^*}, \strat_2^{\action^*})$ is an $\stratType$-NE in $\arena_{\action^*}$}\\
		&= \prcSolo{\splitArena{\arena}{t}}{\st^{\action^*}}{\strat_1,\strat_2}
		&&\text{by Lemma~\ref{lem:stratSplit}}.
	\end{align*}

	Hence $(\strat_1, \strat_2)$ is a pure memoryless $\stratType$-NE in $(\splitArena{\arena}{t}, \initStates^{\action^*})$.
	We can transform $(\strat_1, \strat_2)$ into an $\stratType$-NE $(\strat_1', \strat_2')$ of $(\arena, \initStates)$ by Lemma~\ref{lem:splitProjection}, with $\strat_1'$ pure memoryless, but not necessarily $\strat'_2$.%
	\footnote{Lemma~\ref{lem:splitProjection} uses strongly the assumption that strategies are pure. It is unclear how to generalize the proof to randomized strategies because of this argument.} 
	We can however perform the same proof for $\Ptwo$, and obtain a second $\stratType$-NE $(\strat_1'', \strat_2'')$ such that $\strat_2''$ is a pure memoryless strategy on $(\arena, \initStates)$.
	Then we simply mix both $\stratType$-NE (by Lemma~\ref{lem:mixNE}), and obtain that $(\strat_1', \strat_2'')$ is a pure memoryless $\stratType$-NE in~$(\arena, \initStates)$.
\end{proof}

\subsection{From memoryless to AIFM}

In this section, we show how to apply Lemma~\ref{prop:optProdIsCov} to Lemma~\ref{thm:GZ16Gen} to lift its results from the sufficiency of pure memoryless strategies to the sufficiency of pure AIFM strategies, which will imply Theorem~\ref{thm:1to2}.
\begin{lem}\label{lem:covAreClosed}
	Let $\memSkel$ be a memory skeleton, and $\arenaFam$ be a class of initialized arenas closed by subarena and by split.
	The class of all initialized arenas covered by $\memSkel$ in $\arenaFam$ is closed by subarena and by split.
\end{lem}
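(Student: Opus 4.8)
The plan is to observe that, since $\arenaFam$ is closed by subarena and by split, every subarena and every split of an arena in $\arenaFam$ already belongs to $\arenaFam$; hence the only thing left to establish is that \emph{coverability by $\memSkel$} is itself preserved by these two operations. I would treat the two cases separately and, in each, exhibit an explicit cover function built from the one witnessing coverability of the original arena. Throughout, write $\memSkel = \memSkelFull$ and let $\coverFunction\colon \states \to \memStates$ witness that $(\arena = \arenaFull, \initStates)$ is covered by $\memSkel$.

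For the subarena case the argument is immediate. An initialized subarena $(\arena', \initStates)$ has the same state space and initial states as $\initArena$, a subset $\transProb' \subseteq \transProb$ of the transitions, and colors given by restriction. I would simply reuse $\coverFunction$ as a cover function: the condition $\coverFunction(\st) = \memInit$ for $\st \in \initStates$ is unchanged, and since it was satisfied for every $(\st, \action, \st') \in \transProb$, it is \emph{a fortiori} satisfied for every $(\st, \action, \st') \in \transProb' \subseteq \transProb$, the relevant colors being identical.

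For the split, assume $t \in \states_1$ (the case $t \in \states_2$ being symmetric). Recall that $\splitArena{\arena}{t}$ is built from one renamed copy $(\arena_\action, t^\action)$ per action $\action \in \actions{t}$, all merged on $t$ (with $t^\action = t$). I would define $\coverFunction'$ on $\splitArena{\arena}{t}$ by $\coverFunction'(t) = \coverFunction(t)$ and $\coverFunction'(\st^\action) = \coverFunction(\st)$ for every $\action \in \actions{t}$ and $\st \neq t$, and then verify the two coverability conditions. The initial states in $\splitInitStates{t}$ are exactly the renamed copies $\st^\action$ of states $\st \in \initStates$ (together with $t$ itself if $t \in \initStates$), so $\coverFunction'$ sends each of them to $\coverFunction(\st) = \memInit$. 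For transitions, every transition of the split is a renamed copy of a transition $(\st, \action', \st') \in \transProb$ carrying the same color $\col{\st, \action'}$ --- this accounts both for the transitions leaving the merged state $t$ into an $\action$-copy and for the transitions internal to a single copy --- so the update equation for $\coverFunction'$ reduces directly to the instance $\memUpd(\coverFunction(\st), \col{\st, \action'}) = \coverFunction(\st')$, which holds by coverability of $\initArena$.

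The only point requiring mild care --- and the closest thing to an obstacle --- is the bookkeeping around the merged state: transitions of $\arena$ whose target is $t$ are redirected to the single merged state $t$ in the split rather than to a renamed copy $t^{\action}$, but since $\coverFunction'(t) = \coverFunction(t)$ the required equality is unaffected. Hence $\coverFunction'$ witnesses that the split is covered by $\memSkel$, which, together with the subarena case and the closure of $\arenaFam$, completes the proof.
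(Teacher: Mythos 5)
Your proof is correct and follows essentially the same route as the paper's: reuse the original cover function unchanged for subarenas (the coverability condition is universally quantified over a smaller transition set), and for splits define the cover function that sends the merged state $t$ to $\coverFunction(t)$ and each copy $\st^\action$ to $\coverFunction(\st)$, then check that every transition of the split is a relabeled transition of $\arena$ with the same color. The extra bookkeeping you include (initial states, transitions redirected into the merged state) is implicit in the paper's argument and does not change the approach.
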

\begin{proof}
	Let $(\arena = \arenaFull, \initStates)\in\arenaFam$ be an initialized arena covered by $\memSkel = \memSkelFull$.
	We show that its subarenas and its splits are still covered by $\memSkel$.
	Let $\coverFunction\colon \states \to \memStates$ be the witness that $(\arena, \initStates)$ is covered by $\memSkel$.

	If we consider an initialized subarena $(\arena', \initStates)$ of $(\arena, \initStates)$, the same function $\coverFunction$ will still be a witness that $(\arena', \initStates)$ is covered by $\memSkel$, as the state space is the same, the condition to check is a universally quantified property over the transitions, and there are fewer transitions in $\arena'$ than in $\arena$.

	Let $t\in\states$, and let $(\splitArena{\arena}{t}, \splitInitStates{t})$ be the split on $t$ of $(\arena, \initStates)$.
	We define a function $\widehat\coverFunction$ such that
	\[
		\widehat\coverFunction(t) = \coverFunction(t),
		\text{ and for } \action\in\actions{t}\text{, } \st\in\states\setminus\{t\}\text{, }
		\widehat\coverFunction(\st^\action) = \coverFunction(\st).
	\]
	Function $\widehat\coverFunction$ witnesses that $(\splitArena{\arena}{t}, \splitInitStates{t})$ is covered by $\memSkel$ as every transition in $\splitArena{\arena}{t}$ corresponds to a transition in $\arena$ with the same color, and linking two states assigned to the same memory state in $\arena$ as in $\splitArena{\arena}{t}$.
\end{proof}

We are now ready to prove Theorem~\ref{thm:1to2}, the main result of this section.

\begin{proof}[Proof of Theorem~\ref{thm:1to2}]
	Note first that as $\prodAS{(\prodAS{(\arena, \initStates)}{\memSkel_1})}{\memSkel_2}$ is isomorphic to $\prodAS{(\arena, \initStates)}{(\memSkel_1 \memProduct \memSkel_2)}$, $\arenaFam$ is in particular closed by product with $\memSkel_1 \memProduct \memSkel_2$.

	Using Lemma~\ref{prop:optProdIsCov}, the hypotheses can be reformulated as follows: for $i\in\{1, 2\}$, $\player{i}$ has a pure memoryless $\stratType$-optimal strategy in all initialized one-player arenas in $\arenaFam$ that are covered by $\memSkel_i$.

	Now consider the subclass $\arenaFam' = \{(\arena, \initStates)\in\arenaFam \mid (\arena, \initStates)\text{ is covered by } \memSkel_1 \memProduct \memSkel_2\}$.
	For $i\in\{1, 2\}$, $\player{i}$ has a pure memoryless $\stratType$-optimal strategy in all its initialized one-player arenas in $\arenaFam'$ (using that if an arena is covered by $\memSkel_1\memProduct\memSkel_2$, it is in particular covered by $\memSkel_i$ by Lemma~\ref{lem:M1M2}).
	Moreover, $\arenaFam'$ is closed by subarena and by split by Lemma~\ref{lem:covAreClosed}.
	Hence by Lemma~\ref{thm:GZ16Gen}, for all initialized arenas in $\arenaFam'$, there exists a pure memoryless $\stratType$-NE\@.
	Using Lemma~\ref{prop:optProdIsCov} again in the other direction allows us to conclude that all initialized arenas in $\arenaFam$ admit a pure $\stratType$-NE based on $\memSkel_1\memProduct\memSkel_2$.

	By Theorem~\ref{prop:optAIFMimpliesSP}, using that $\arenaFam$ is closed by prefix-extension, the existence of pure $\stratType$-NE based on $\memSkel_1\memProduct\memSkel_2$ in all initialized arenas in $\arenaFam$ implies the existence of pure $\stratType$-SPE based on $\memSkel_1\memProduct\memSkel_2$ in all initialized arenas in $\arenaFam$.
\end{proof}

Theorem~\ref{thm:1to2} along with Lemmas~\ref{lem:memProd} and~\ref{lem:prodCov} actually gives a bit more information about memory requirements in individual arenas than is strictly written.
The way it is phrased shows that for an initialized arena $\initArena$, memoryless strategies always suffice to play $\stratType$-optimally in $\prodAS{\initArena}{(\memSkel_1\memProduct\memSkel_2)}$ (through Lemma~\ref{lem:memProd}).
But if $\initArena$ is already covered by $\memSkel_1\memProduct\memSkel_2$, then as $\initArena$ is isomorphic to $\prodAS{\initArena}{(\memSkel_1\memProduct\memSkel_2)}$ by Lemma~\ref{lem:prodCov}, memoryless strategies are actually sufficient directly in $\initArena$.
Memory $\memSkel_1\memProduct\memSkel_2$ is thus an upper bound on the required memory, and studying coverability may show for some initialized arenas that less memory is sufficient: for example, if it is already covered by $\memSkel_1\memProduct\memSkel_2$ (resp.\ by $\memSkel_1$, by $\memSkel_2$), memoryless strategies (resp.\ strategies based on $\memSkel_2$, on $\memSkel_1$) are sufficient.
An application of Theorem~\ref{thm:1to2} is provided in Section~\ref{sec:app}.

\subsection{Discussing the use of randomization}\label{sec:noLiftRandomized}
All our results can help prove that \emph{pure} AIFM strategies suffice to play optimally.
We show that a one-to-two-player lift similar to Theorem~\ref{thm:1to2} does not hold if we allow (unconstrained) randomization in the strategies.
We exhibit an objective and a class of arenas for which (non-necessarily pure) memoryless strategies suffice in one-player arenas for both players, but not in two-player arenas.

Let $\colors = \{-1, 1\}$.
For an infinite word $\word = \clr_1\clr_2\ldots\in\colors^\omega$, we define
\[
\MP(\word) = \liminf_{n\to\infty} \frac{1}{n} \sum_{i=1}^n \clr_i
\]
as the \emph{mean payoff of $\word$}.
We define the winning condition
\[
\wc = \{\word\in\colors^\omega \mid \MP(\word) = 0\}.
\]
As described in Example~\ref{ex:objExamples}, we consider the preference relation $\pref_\wc$ induced by $\wc$ (i.e., $\Pone$ wants to guarantee the  greatest possible probability that $\wc$ happens).

We consider for this example the class of \emph{deterministic} arenas.
We show the three following facts about $\pref_\wc$:
\begin{itemize}
	\item (non-necessarily pure) memoryless strategies suffice for $\Pone$ in its one-player deterministic arenas;
	\item pure memoryless strategies suffice for $\Ptwo$ in its one-player deterministic arenas;
	\item memoryless strategies (even with randomization) do not suffice for $\Pone$ in two-player deterministic arenas.
\end{itemize}

\noindent
For $\arena = \arenaFull$ a deterministic arena, we define a \emph{simple cycle of $\arena$} as a history $\st_0\action_1\st_1\ldots\action_n\st_n$ such that $\st_0 = \st_n$, and for all $i, j\in\IN$ with $0 \le i < j \le n-1$, $\st_i \neq \st_j$.

\paragraph{One-player arenas of $\Pone$}
Let $\arena = \arenaFull$ be a one-player deterministic arena of $\Pone$.
We can think of $\arena$ as a directed graph with labeled transitions.
We establish whether $\Pone$ can guarantee $\wc$ with probability $1$ or not depending on the mean payoffs of simple cycles in each strongly connected component (SCC) of $\arena$.

Let $\initStates^\mathsf{lose}\subseteq \states$ be the states from which in every reachable SCC of $\arena$, the simple cycles either all have a mean payoff $< 0$ or all have a mean payoff $> 0$.
We show that $\Pone$ cannot obtain $\wc$ with a positive probability.
Indeed, a play $\play\in\Plays(\arena, \initStates^\mathsf{lose})$ will eventually end up in some reachable SCC in which all simple cycles have a mean payoff $< 0$ (resp.\ $> 0$).
In such an SCC, the highest (resp.\ lowest) mean payoff that $\Pone$ can obtain is given by the simple cycle with the highest (resp.\ lowest) mean payoff (this argument follows from the memoryless determinacy of mean-payoff games~\cite{EM79}).
The mean payoff of $\colHat{\play}$, if it exists, can then only be $< 0$ (resp.\ $> 0$).
Any strategy therefore achieves $\wc$ from $\initStates^\mathsf{lose}$ with probability $0$, which is optimal.

Now, we consider the set of states $\initStates^\mathsf{win} = \states\setminus\initStates^\mathsf{lose}$ from which there is a reachable SCC containing a simple cycle $\hist_1 = \st_0^1\action_1^1\st_1^1\ldots\action_n^1\st_n^1$ with $\MP(\hist_1) < 0$ and a simple cycle $\hist_2 = \st_0^2\action_1^2\st_1^2\ldots\action_m^2\st_m^2$ with $\MP(\hist_2) \ge 0$ (our argument can easily be adapted for the case $\MP(\hist_1) \le 0$ and $\MP(\hist_2) > 0$).
Let $\states_{\hist_1} = \{\st_0^1, \ldots, \st_{n-1}^1\}$ and $\states_{\hist_2} = \{\st_0^2, \ldots, \st_{m-1}^2\}$ be the states visited respectively by $\hist_1$ and $\hist_2$.

We show that we can assume w.l.o.g.\ that $\hist_1$ and $\hist_2$ have at least one common state, i.e., that $\states_{\hist_1} \cap \states_{\hist_2} \neq \emptyset$.
If not, this means that there is a simple cycle $\hist'$ that shares a state with $\hist_1$ and $\hist_2$ (as $\hist_1$ and $\hist_2$ are in the same SCC).
If $\MP(\hist') \ge 0$, we then replace $\hist_2$ by $\hist'$; if $\MP(\hist') < 0$, we then replace $\hist_1$ by $\hist'$.

From this, we obtain that there is a pure finite-memory strategy that achieves a mean payoff of exactly $0$, simply by alternating between $\hist_1$ and $\hist_2$ at the right frequency.
We end the proof by using~\cite[Lemma~15]{CRR14}, which shows that for \emph{one-player multi mean-payoff games}, pure finite-memory winning strategies can be traded for randomized memoryless strategies with rational randomization that win with probability~$1$.
To do so, we observe that our objective can be reduced to a special case of multi mean-payoff games with two dimensions: if we replace color $1$ with $(1, -1)$ and color $-1$ with $(-1, 1)$, a winning play for $\wc$ is exactly a winning play for the objective consisting of obtaining a mean payoff $\ge 0$ along both dimensions.

To sum up, we have built a (non-pure) memoryless strategy that wins with probability~$0$ from $\initStates^\mathsf{lose}$ and with probability $1$ from $\initStates^\mathsf{win}$, which is optimal in both cases.

\paragraph{One-player arenas of $\Ptwo$}
We now adopt the point of view of $\Ptwo$.
Let $\arena = \arenaFull$ be a one-player deterministic arena of $\Ptwo$.
We establish once again whether $\Ptwo$ can achieve its goal with probability $1$ depending on the values of the mean payoffs of simple cycles.

Let $\initStates^\mathsf{win}\subseteq \states$ be the states from which there is a reachable simple cycle $\hist$ with $\MP(\colHat{\hist}) \neq 0$.
From $\initStates^\mathsf{win}$, a pure memoryless strategy ensuring a win for $\Ptwo$ consists in reaching this simple cycle and looping around it forever.

Let $\initStates^\mathsf{lose} = \states\setminus\initStates^\mathsf{win}$ be the states from which all reachable simple cycles have a mean payoff of $0$.
From such states, we show that for every strategy of $\Ptwo$, $\wc$ happens with probability $1$.
For every finite word $\word = \clr_1\ldots\clr_n\in\colHat{\Hists(\arena, \initStates^\mathsf{lose})}$ that $\Ptwo$ can generate from $\initStates^\mathsf{lose}$, we have
\[
-(\card{\states} - 1) \le \sum_{i = 1}^n \clr_i \le \card{\states} - 1.
\]
Indeed, the sum of colors of any cycle appearing in $\hist$ is $0$, and after removing cycles in $\hist$ until none are left (in any order), at most $\card{\states} - 1$ transitions remain.
Let $\word = \clr_1\clr_2\ldots \in \colHat{\Plays(\arena, \initStates^\mathsf{lose})}$ be an infinite word that can be generated by $\Ptwo$ from $\initStates^\mathsf{lose}$.
By the previous inequalities, we have
\[
0 = \lim_{n\to\infty} \frac{1}{n} \cdot (-(\card{\states} - 1)) \le \MP(\word) \le \lim_{n\to\infty} \frac{1}{n} \cdot (\card{\states} - 1) = 0.
\]
Hence, $\MP(\word)$ equals $0$.

\paragraph{Insufficiency of memoryless strategies in two-player arenas}
We consider the arena in Figure~\ref{fig:MP} (close examples were considered in~\cite[Proposition~4.9]{KopThesis} and in~\cite[Lemma~15]{CRR14}.).
In this arena, $\Pone$ has a pure strategy using two states of memory that guarantees $\wc$ with probability $1$: whenever $\Ptwo$ plays $-1$ (resp.\ $1$), $\Pone$ responds with $1$ (resp.~$-1$).
This requires two memory states and ensures that the mean color seen is $0$.
However, if~$\Pone$ uses a memoryless strategy (even with randomization), $\Ptwo$ can ensure that the resulting mean payoff is different from $0$.
Indeed, if the distribution chosen by $\Pone$ picks $-1$ (resp. $1$) at each round with probability $\ge \frac{1}{2}$, $\Ptwo$ can simply always play $-1$ (resp.~$1$), which guarantees that the resulting mean payoff is $< 0$ (resp.\ $> 0$) with probability~$1$.
Condition~$\wc$ is then satisfied with probability $0$.

\begin{figure}[tbh]
	\centering
	\begin{tikzpicture}[every node/.style={font=\small,inner sep=1pt}]
		\draw (0,0) node[carre] (s1) {$\st_1$};
		\draw ($(s1)+(3,0)$) node[rond] (s2) {$\st_2$};
		\draw ($(s1)-(0.75,0)$) edge[-latex'] (s1);
		\draw (s1) edge[-latex',bend left=15] node[above=2pt] {$-1$} (s2);
		\draw (s1) edge[-latex',bend left=50] node[above=2pt] {$1$} (s2);
		\draw (s2) edge[-latex',bend left=15] node[below=2pt] {$-1$} (s1);
		\draw (s2) edge[-latex',bend left=50] node[below=2pt] {$1$} (s1);
	\end{tikzpicture}
	\caption{$\Pone$ can obtain $\wc$ with probability $1$, but not with a memoryless strategy. All transitions are deterministic; colors are shown, but action names are omitted.}\label{fig:MP}
\end{figure}

We have shown that even though (non-pure) memoryless strategies suffice for both players to play optimally in their respective one-player arenas, memoryless strategies do not suffice for $\Pone$ in two-player arenas.
This shows that Theorem~\ref{thm:1to2} does not work as stated if we allow randomized strategies.

\begin{rem}
	Throughout this example, we allowed rational randomization in the next-action function $\memNxt$.
	Another kind of randomization, closer to the intuitive idea of ``arena-independent randomization'', would be to allow fixed randomization in the initialization and update function of the memory skeletons (cf.\ Remark~\ref{rmk:stochSkel}).
	We leave open the question of whether this other kind of randomization can lead to interesting trade-offs with respect to memory requirements for some objectives, and whether the one-to-two-player lift could then hold with such randomization.
	\qedEx
\end{rem}

\section{AIFM characterization}\label{sec:1p}
In this section, we seek to characterize the preference relations for which pure strategies based on a memory skeleton suffice to play optimally in the \emph{one-player} arenas of $\Pone$, by decomposing this property into two properties.

For the section, we fix $\pref$ a preference relation, $\stratType\in\allStrats$ a type of strategies, and $\memSkel = \memSkelFull$ a memory skeleton.
We distinguish only two classes of initialized arenas: the class $\PoneDetArenas$ of all initialized one-player deterministic arenas of $\Pone$, and the class $\PoneStochArenas$ of all initialized one-player stochastic arenas of $\Pone$.
A class of arenas will therefore be specified by a letter $\arenaType\in\{\deter, \stoch\}$, which we fix for the whole section.
Rephrasing our goal with these notations, we seek to give a better understanding of the preference relations for which pure strategies based on $\memSkel$ suffice to play $\stratType$-optimally in $\PoneA$, by \emph{characterizing} it through two intuitive conditions.
All definitions and proofs are stated from the point of view of $\Pone$.
We first introduce some more notations.

As we only work with one-player arenas in this section, we abusively write $\prSolo{\arena}{\st}{\strat_1}$ and $\prcSolo{\arena}{\st}{\strat_1}$ for the distributions on plays and colors induced by a strategy $\strat_1$ of $\Pone$ on $(\arena, \st)$, with the unique, trivial strategy for $\Ptwo$.

For $\arena$ a one-player arena of $\Pone$ and $\st$ a state of $\arena$, we write
\[
	\cl{\arena}{\st}{\stratType} = \{\prcSolo{\arena}{\st}{\strat_1} \mid \strat_1\in\stratsType{1}{\stratType}{\arena, \st}\}
\]
for the set of distributions over $(\colors^\omega, \oalg)$ induced by strategies of type $\stratType$ in $\arena$ from $\st$.

For $\memState_1, \memState_2\in\memStates$, we write $\memLang{\memState_1}{\memState_2} = \{\word\in\colors^*\mid \memUpdHat(\memState_1, \word) = \memState_2\}$ for the language of words that are read from $\memState_1$ up to $\memState_2$ in $\memSkel$.
Such a language can be specified by the deterministic automaton that is simply the memory skeleton $\memSkel$ with $\memState_1$ as the initial state and $\memState_2$ as the unique final state.

We extend the \emph{shifted distribution} notation introduced in Definition~\ref{def:shift} to sets of distributions: for $\word\in\colors^*$, for $\distSet\subseteq\Dist(\colors^\omega, \oalg)$, we write $\word\distSet$ for the set $\{\word\dist \mid \dist\in\distSet\}$.

Given $\pref$ a preference relation, we extend it to sets of distributions: for $\distSet_1, \distSet_2\subseteq \Dist(\colors^\omega, \oalg)$, we write $\distSet_1 \pref \distSet_2$ if for all $\dist_1\in\distSet_1$, there exists $\dist_2\in\distSet_2$ such that $\dist_1\pref\dist_2$; we write $\distSet_1 \strictPref \distSet_2$ if there exists $\dist_2\in\distSet_2$ such that for all $\dist_1\in\distSet_1$, $\dist_1\strictPref\dist_2$.
Notice that $\lnot(\distSet_1 \pref \distSet_2)$ is equivalent to $\distSet_2 \strictPref \distSet_1$.
If $\distSet_1$ is a singleton $\{\dist_1\}$, we write $\dist_1\pref\distSet_2$ for $\{\dist_1\}\pref \distSet_2$ (and similarly for $\distSet_2$, and similarly using $\strictPref$).
Notice that $\dist_1\pref\dist_2$ is equivalent to $\{\dist_1\} \pref \{\dist_2\}$, so this notational shortcut is sound.
For two initialized arenas $(\arena_1, \st_1)$ and $(\arena_2, \st_2)$, the inequality $\cl{\arena_1}{\st_1}{\stratType} \pref \cl{\arena_2}{\st_2}{\stratType}$ means that for every strategy of type $\stratType$ on $(\arena_1, \st_1)$, there is a strategy of type $\stratType$ on $(\arena_2, \st_2)$ that induces a distribution that is at least as good.

We can now present the two properties of preference relations at the core of our characterization.
These properties are called \emph{$\stratType$-$\arenaType$-$\memSkel$-monotony} and \emph{$\stratType$-$\arenaType$-$\memSkel$-selectivity}; they depend on a type of strategies $\stratType$, a type of arenas $\arenaType$, and a memory skeleton $\memSkel$.
The first appearance of the monotony (resp.\ selectivity) notion was in~\cite{GZ05}, which dealt with deterministic arenas under pure strategies and memoryless strategies; their monotony (resp.\ selectivity) is equivalent to our $\pure$-$\deter$-$\memSkelTriv$-monotony (resp.\ $\pure$-$\deter$-$\memSkelTriv$-selectivity).
In~\cite{BLORV22}, these definitions were generalized to deal with the sufficiency of pure strategies based on $\memSkel$ in deterministic arenas; their notion of $\memSkel$-monotony (resp.\ $\memSkel$-selectivity) is equivalent to our $\pure$-$\deter$-$\memSkel$-monotony (resp.\ $\pure$-$\deter$-$\memSkel$-selectivity).

\begin{defi}[Monotony]
	We say that $\pref$ is \emph{$\stratType$-$\arenaType$-$\memSkel$-monotone} if for all $\memState\in\memStates$, for all $(\arena_1,\st_1), (\arena_2,\st_2) \in\PoneA$, there exists $i\in\{1, 2\}$ such that
	\[
		\forall \word\in\memLang{\memInit}{\memState},\,
		\word\cl{\arena_{3 - i}}{\st_{3-i}}{\stratType} \pref \word\cl{\arena_i}{\st_i}{\stratType}.
	\]
\end{defi}

The crucial part of the definition is the order of the last two quantifiers: of course, given a $\word\in\memLang{\memInit}{\memState}$, as $\pref$ is total, it will always be the case that $\word\cl{\arena_1}{\st_1}{\stratType} \pref \word\cl{\arena_2}{\st_2}{\stratType}$ or that $\word\cl{\arena_2}{\st_2}{\stratType} \pref \word\cl{\arena_1}{\st_1}{\stratType}$.
However, we ask for something stronger: it must be the case that the set of distributions $\word\cl{\arena_i}{\st_i}{\stratType}$ is preferred to $\word\cl{\arena_{3 - i}}{\st_{3-i}}{\stratType}$ for \emph{any} word $\word\in\memLang{\memInit}{\memState}$.

\begin{exa}\label{ex:monExamples}
	We can relate the notion of monotony to the more classical notion of prefix-independence: if a payoff function $\pf\colon \colors^\omega\to\IR$ is \emph{prefix-independent},%
	\footnote{A function $\pf\colon \colors^\omega\to\IR$ is \emph{prefix-independent} if for all $\word\in\colors^*$, for all $\word'\in\colors^\omega$, $\pf(\word\word') = \pf(\word')$.} then it is also $\stratType$-$\arenaType$-$\memSkel$-monotone for any $\stratType$, $\arenaType$, and $\memSkel$. 
	This is the case of the classical \emph{parity} objective and \emph{mean-payoff} payoff function.
	Thanks to the upcoming Theorem~\ref{thm:monSel}, studying the sufficiency of pure AIFM strategies for prefix-independent payoff functions immediately reduces to studying \emph{selectivity}.

	On the other hand, the weak parity winning condition $\simPar$ (defined in Example~\ref{ex:objExamples}) is not $\rand$-$\stoch$-$\memSkelTriv$-monotone.
	We consider two arenas $(\arena_1, \st_1)$ and $(\arena_2, \st_2)$ represented in Figure~\ref{fig:monExample}.
	As $\memSkelTriv$ has a single state, the only memory state to consider is $\memState = \memInit$, and all finite words are in $\memLang{\memInit}{\memState}$.
	Observe that if $\word$ is the empty word, then $\word\cl{\arena_{2}}{\st_{2}}{\rand} \strictPref \word\cl{\arena_1}{\st_1}{\rand}$, as the latter (is a singleton set whose only distribution) wins with probability $1$, while the former only wins with probability $\frac{1}{2}$.
	On the other hand, if $\word' = 1$, then $\word'\cl{\arena_{1}}{\st_{1}}{\rand} \strictPref \word'\cl{\arena_2}{\st_2}{\rand}$, as the latter wins with probability $\frac{1}{2}$ but the former only wins with probability $0$.
	This proves that $\simPar$ is not $\rand$-$\stoch$-$\memSkelTriv$-monotone.
	\qedEx
\end{exa}

\begin{figure}[tbh]
	\centering
	\begin{tikzpicture}[every node/.style={font=\small,inner sep=1pt}]
		\draw (0,0) node[rond] (s1) {$\st_1$};
		\draw ($(s1)-(0.75,0)$) edge[-latex'] (s1);
		\draw ($(s1)+(-1,.5)$) node (A1) {$\arena_1$};
		\draw (s1) edge[-latex',out=30,in=-30,distance=0.8cm] node[right=2pt] {$0$} (s1);

		\draw ($(s1)+(4,0)$) node[rond] (s2) {$\st_2$};
		\draw ($(s2)+(1,0)$) node[rond,fill=black,minimum size=3pt] (s22) {};
		\draw ($(s22)+(0.9,0.5)$) node[rond,minimum size=12pt] (s33) {};
		\draw ($(s22)+(0.9,-0.5)$) node[rond,minimum size=12pt] (s34) {};
		\draw ($(s22)+(1.8,0)$) node[rond,minimum size=12pt] (s32) {};
		\draw ($(s2)-(0.75,0)$) edge[-latex'] (s2);
		\draw (s2) edge[-latex'] node[above=2pt] {$0$} (s22);
		\draw ($(s2)+(-1,.5)$) node (A2) {$\arena_2$};
		\draw (s22) edge[-latex'] node[above=2pt,xshift=-2pt] {$\frac{1}{2}$} (s33);
		\draw (s22) edge[-latex'] node[below=2pt,xshift=-2pt] {$\frac{1}{2}$} (s34);
		\draw (s33) edge[-latex'] node[above=2pt,xshift=2pt] {$1$} (s32);
		\draw (s34) edge[-latex'] node[below=2pt,xshift=2pt] {$2$} (s32);
		\draw (s32) edge[-latex',out=30,in=-30,distance=10pt] node[right=2pt] {$0$} (s32);
	\end{tikzpicture}
	\caption{Arenas $(\arena_1, \st_1)$ and $(\arena_2, \st_2)$ used in Example~\ref{ex:monExamples}.
		Action names are omitted; integers next to the transitions represent the colors.}%
	\label{fig:monExample}
\end{figure}

The original monotony definition~\cite{GZ05} states that when presented with a choice \emph{once} among two possible continuations, if a continuation is better than the other one after some prefix, then this continuation is also at least as good after all prefixes.
This property is not sufficient for the existence of pure memoryless optimal strategies as it does not guarantee that if the same choice presents itself multiple times in the game, the same continuation should always be chosen, as alternating between both continuations might still be beneficial in the long run --- this is dealt with by selectivity.
If memory skeleton $\memSkel$ is necessary to play optimally, then it makes sense that there might be different optimal choices depending on the current memory state and that we should only compare prefixes that reach the same memory state.
The point of taking into account a memory skeleton $\memSkel$ in our definition of $\stratType$-$\arenaType$-$\memSkel$-monotony is to distinguish classes of prefixes and to only compare prefixes that are read up to the same memory state from $\memInit$.

\begin{defi}[Selectivity]
	We say that $\pref$ is \emph{$\stratType$-$\arenaType$-$\memSkel$-selective} if for all $\memState\in\memStates$, for all $(\arena_1, \st_1)$, $(\arena_2, \st_2)\in\PoneA$ such that for $i\in\{1, 2\}$, $\colHat{\Hists(\arena_i, \st_i, \st_i)} \subseteq \memLang{\memState}{\memState}$, for all $\word\in\memLang{\memInit}{\memState}$,
	\[
		\word\cl{(\arena_1, \st_1)\arenaJoin(\arena_2, \st_2)}{\joinState}{\stratType}
		\pref
		\word\cl{\arena_1}{\st_1}{\stratType} \cup \word\cl{\arena_2}{\st_2}{\stratType}
	\]
	(where $\joinState$ comes from the merge of $\st_1$ and $\st_2$).
\end{defi}

\begin{exa}\label{ex:selExample}
	We illustrate this definition by showing that the weak parity winning condition $\simPar$ (defined in Example~\ref{ex:objExamples}) is not $\rand$-$\stoch$-$\memSkelTriv$-selective.
	We consider the initialized arenas $(\arena_1, \st_1)$ and $(\arena_2, \st_2)$ from Figure~\ref{fig:selExample}.
	Let $\memState = \memInit$ be the only state of $\memSkelTriv$; once again, observe that all finite words are in $\memLang{\memInit}{\memState}$.
	Let $\word$ be the empty word.
	Observe that there is a distribution winning with probability $\frac{3}{4}$ in $\word\cl{(\arena_1, \st_1)\arenaJoin(\arena_2, \st_2)}{\joinState}{\rand}$: first try action $a$ (which has probability $\frac{1}{2}$ of winning immediately if the absorbing state is seen), and if it fails, pick action $b$, which offers again a probability $\frac{1}{2}$ of winning.
	However, both $\word\cl{\arena_1}{\st_1}{\rand}$ and $\word\cl{\arena_2}{\st_2}{\rand}$ offer at best a probability $\frac{1}{2}$ of winning.
	This shows that
	\[
	\word\cl{\arena_1}{\st_1}{\rand} \cup \word\cl{\arena_2}{\st_2}{\rand} \strictPref
	\word\cl{(\arena_1, \st_1)\arenaJoin(\arena_2, \st_2)}{\joinState}{\rand},
	\]
	so $\simPar$ is not $\rand$-$\stoch$-$\memSkelTriv$-selective.

	On the other hand, we will show in Section~\ref{sec:app:wp} that $\simPar$ is $\pure$-$\deter$-$\memSkelTriv$-selective.
	\qedEx
\end{exa}

\begin{figure}[tbh]
	\centering
	\begin{tikzpicture}[every node/.style={font=\small,inner sep=1pt}]
		\draw (0,0) node[rond] (s1) {$\st_1$};
		\draw ($(s1)-(0.75,0)$) edge[-latex'] (s1);
		\draw ($(s1)+(-1,.5)$) node (A1) {$\arena_1$};
		\draw ($(s1)+(1.5,0)$) node[rond,fill=black,minimum size=3pt] (s11) {};
		\draw ($(s1)+(1,1)$) node[rond,minimum size=12pt] (s3) {};
		\draw ($(s11)+(1.5,0)$) node[rond,minimum size=12pt] (s12) {};
		\draw (s1) edge[-latex'] node[below=2pt] {$a\mid 0$} (s11);
		\draw (s11) edge[-latex'] node[above=2pt] {$\frac{1}{2}$} (s12);
		\draw (s11) edge[-latex'] node[right=2pt,yshift=2pt] {$\frac{1}{2}$} (s3);
		\draw (s3) edge[-latex'] node[above left=1pt] {$1$} (s1);
		\draw (s12) edge[-latex',out=30,in=-30,distance=10pt] node[right=2pt] {$0$} (s12);

		\draw ($(s12)+(3,0)$) node[rond] (s2) {$\st_2$};
		\draw ($(s2)+(-1,.5)$) node (A2) {$\arena_2$};
		\draw ($(s2)+(1.5,0)$) node[rond,fill=black,minimum size=3pt] (s22) {};
		\draw ($(s22)+(1.5,0)$) node[rond,minimum size=12pt] (s32) {};
		\draw ($(s22)+(1,1)$) node[rond,minimum size=12pt] (s4) {};
		\draw ($(s2)-(0.75,0)$) edge[-latex'] (s2);

		\draw (s2) edge[-latex'] node[below=2pt] {$b\mid 1$} (s22);
		\draw (s22) edge[-latex'] node[below=2pt] {$\frac{1}{2}$} (s32);
		\draw (s22) edge[-latex'] node[above left=1pt] {$\frac{1}{2}$} (s4);
		\draw (s4) edge[-latex'] node[right=2pt,yshift=2pt] {$2$} (s32);
		\draw (s32) edge[-latex',out=30,in=-30,distance=10pt] node[right=2pt] {$0$} (s32);
	\end{tikzpicture}
	\caption{Arenas $(\arena_1, \st_1)$ and $(\arena_2, \st_2)$ used in Example~\ref{ex:selExample}. Only actions $a$ and $b$ are named. Notation $a\mid \clr$ indicates that color $\clr$ is seen when action $a$ is played.}%
	\label{fig:selExample}
\end{figure}

Our formulation of the selectivity concept differs from the original definition~\cite{GZ05} and its AIFM counterpart~\cite{BLORV22} in order to take into account the particularities of the stochastic context, even if it can be proven that they are equivalent in the pure deterministic case.
However, the idea is still the same: the original selectivity definition states that when presented with a choice among multiple possible continuations after some prefix, if a continuation is better than the others, then as the game goes on, if the same choice presents itself again, it is sufficient to always pick the same continuation to play optimally; there is no need to alternate between continuations.
This property is not sufficient for the existence of pure memoryless optimal strategies as it does not guarantee that for all prefixes, the same initial choice is always the one we should commit to --- this is dealt with by monotony.
The point of memory skeleton $\memSkel$ in our definition is to guarantee that every time the choice presents itself, we are currently in the same memory state $\memState$.

In both definitions, the point of $\stratType$ is to distinguish whether we allow all (including randomized) strategies, or only pure strategies; the point of $\arenaType$ is to quantify over a specific set of arenas.

An interesting property is that both notions are stable by product with a memory skeleton: if $\pref$ is $\stratType$-$\arenaType$-$\memSkel$-monotone (resp.\ $\stratType$-$\arenaType$-$\memSkel$-selective), then for all memory skeletons $\memSkel'$, $\pref$ is also $\stratType$-$\arenaType$-$(\memSkel\memProduct\memSkel')$-monotone (resp.\ $\stratType$-$\arenaType$-$(\memSkel\memProduct\memSkel')$-selective).
The reason is that in each definition, we quantify universally over the class of all prefixes $\word$ that reach the same memory state $\memState$; if we consider classes that are subsets of the original classes, then the definition still holds.
This property matches the idea that playing with more memory is never detrimental.

Combined, it is intuitively reasonable that $\stratType$-$\arenaType$-$\memSkel$-monotony and $\stratType$-$\arenaType$-$\memSkel$-selectivity are equivalent to the sufficiency of pure strategies based on $\memSkel$ to play $\stratType$-optimally in $\PoneA$: monotony tells us that when a single choice has to be made given a state of the arena and a memory state, the best choice is always the same no matter what prefix has been seen, and selectivity tells us that once a good choice has been made, we can commit to it in the future of the game.
We formalize and prove this idea in Theorem~\ref{thm:monSel}.
First, we add an extra restriction on preference relations which is useful when stochasticity is involved.

\begin{defi}[Mixing is useless]
	We say that \emph{mixing is useless for $\pref$} if for all sets $I$ at most countable, for all positive reals $(\lambda_i)_{i\in I}$ such that $\sum_{i\in I} \lambda_i = 1$, for all families $(\dist_i)_{i\in I}$, $(\dist'_i)_{i\in I}$ of distributions in $\Dist(\colors^\omega, \oalg)$,
	\[
	 (\forall i\in I, \dist_i \pref \dist'_i) \implies \sum_{i\in I} \lambda_i\dist_i \pref \sum_{i\in I} \lambda_i\dist'_i.
	\]
\end{defi}
That is, if we can write a distribution as a convex combination of distributions, then it is never detrimental to improve a distribution appearing in the convex combination.

\begin{rem}
	All preference relations encoded as Borel real payoff functions (as defined in Example~\ref{ex:objExamples}) satisfy this property (it is easy to show the property for indicator functions, and we can then extend this fact to all Borel functions thanks to properties of the Lebesgue integral).
	The third preference relation from Example~\ref{ex:objExamples} (having a probability to reach $\clr\in\colors$ that is precisely $\frac{1}{2}$) does not satisfy this property: if $\dist_1(\F \clr) = 0$, $\dist_1'(\F \clr) = \frac{1}{2}$, and $\dist_2(\F \clr) = 1$, we have $\dist_1 \strictPref \dist_1'$ and $\dist_2 \pref \dist_2$, but $\frac{1}{2}\dist_1' + \frac{1}{2}\dist_2\strictPref\frac{1}{2}\dist_1 + \frac{1}{2}\dist_2$.
	In case we consider pure strategies and deterministic games, only Dirac distributions on infinite words occur as probability distributions induced by an arena and a strategy, so the requirement that mixing is useless is not needed.
	\qedEx
\end{rem}

\begin{thm}\label{thm:monSel}
	Assume that no stochasticity is involved (that is, $\stratType\in\{\pure, \PFM\}$ and $\arenaType = \deter$), or that mixing is useless for $\pref$.
	Then pure strategies based on $\memSkel$ suffice to play $\stratType$-optimally in all initialized one-player arenas in $\PoneA$ for $\Pone$ if and only if $\pref$ is $\stratType$-$\arenaType$-$\memSkel$-monotone and $\stratType$-$\arenaType$-$\memSkel$-selective.
\end{thm}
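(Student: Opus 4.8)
The statement is an equivalence, so the plan is to prove each implication separately, and in both cases the first move is to eliminate the memory skeleton. Since $\PoneA$ is closed by product with $\memSkel$, Lemma~\ref{prop:optProdIsCov} lets me replace ``pure strategies based on $\memSkel$ suffice to play $\stratType$-optimally in $\PoneA$'' by the equivalent statement that pure memoryless strategies suffice to play $\stratType$-optimally in all initialized one-player arenas of $\Pone$ covered by $\memSkel$ in $\PoneA$. Working inside this subclass, which is itself closed by subarena and by split by Lemma~\ref{lem:covAreClosed}, is the key simplification: in a covered arena every state $\st$ carries a unique memory state $\coverFunction(\st)$, so the $\memSkel$-indexed comparisons in monotony and selectivity become comparisons between genuine subarenas rooted at states sharing the same memory state.

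For the necessity direction (sufficiency $\Rightarrow$ monotony and selectivity) I would first upgrade optimality to subgame perfection: as $\PoneA$ is also closed by prefix-extension, Theorem~\ref{prop:optAIFMimpliesSP} turns the assumed pure $\memSkel$-based optimal strategies into pure $\memSkel$-based $\stratType$-SP strategies. This is exactly what discharges the troublesome ``for all $\word\in\memLang{\memInit}{\memState}$'' quantifier uniformly: an SP strategy reacts optimally after every prefix, while being $\memSkel$-based forces its decision at a given arena state to depend only on the current memory state. Concretely, to establish selectivity I form the merged arena $(\arena_1,\st_1)\arenaJoin(\arena_2,\st_2)$, whose state $\joinState$ offers exactly the first moves of both subarenas so that playing from $\joinState$ realizes $\cl{\arena_1}{\st_1}{\stratType}\cup\cl{\arena_2}{\st_2}{\stratType}$, prefix-extend it to read $\word$ before $\joinState$, and check that the result is still covered by $\memSkel$ --- here the hypothesis $\colHat{\Hists(\arena_i, \st_i, \st_i)}\subseteq\memLang{\memState}{\memState}$ is precisely what guarantees that every return to $\joinState$ occurs at memory state $\memState$. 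A pure memoryless SP strategy then commits at $\joinState$ to a single subarena's behaviour, and its optimality after $\word$ yields $\word\cl{(\arena_1,\st_1)\arenaJoin(\arena_2,\st_2)}{\joinState}{\stratType}\pref\word\cl{\arena_1}{\st_1}{\stratType}\cup\word\cl{\arena_2}{\st_2}{\stratType}$. Monotony follows from the same recipe with a branching gadget in place of the merge: the uniform memory-indexed choice of the SP strategy selects, independently of $\word$, the index $i$ demanded by the definition. When stochasticity is present, the assumption that \emph{mixing is useless} is what allows these pointwise comparisons of outcome distributions to be aggregated along the branches.

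For the sufficiency direction (monotony and selectivity $\Rightarrow$ sufficiency) I would prove, by induction on the number of choices $\size{\arena}$, that every initialized one-player arena of $\Pone$ covered by $\memSkel$ in $\PoneA$ admits a pure memoryless $\stratType$-optimal strategy; Lemma~\ref{prop:optProdIsCov} then lifts this back to pure $\memSkel$-based optimality in $\PoneA$. The base case $\size{\arena}=0$ is immediate, as there is a unique strategy. For the inductive step I pick a state $v$ with at least two available actions, split $\actions{v}$ into two nonempty parts to obtain two subarenas with strictly fewer choices (still covered, by Lemma~\ref{lem:covAreClosed}), and invoke the induction hypothesis on each. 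Monotony at memory state $\coverFunction(v)$ tells me which half is uniformly at least as good to commit to from $v$, and selectivity at $\coverFunction(v)$ guarantees that once such a memoryless commitment is made there is nothing to be gained by ever alternating back to the discarded actions; splicing the chosen sub-strategy with the committed first move at $v$ produces a pure memoryless candidate, and \emph{mixing is useless} ensures the comparison survives the convex combinations introduced by the stochastic transitions encountered before $v$.

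The main obstacle is this last induction. The delicate points are, first, arguing that the spliced memoryless strategy is optimal against \emph{all} (possibly randomized or infinite-memory) strategies of type $\stratType$ and not merely against pure memoryless competitors --- which is where $\stratType$-$\arenaType$-$\memSkel$-selectivity (forbidding profitable mixing of the available continuations) and \emph{mixing is useless} (controlling the stochastic branching) must be invoked in tandem --- and second, keeping the memory bookkeeping consistent, i.e.\ ensuring that each application of monotony or selectivity is made at the memory state actually reached at $v$ in the covered arena, so that the $\memSkel$-indexed definitions apply. By contrast, the necessity direction and the two reductions through Lemmas~\ref{prop:optProdIsCov} and~\ref{prop:optAIFMimpliesSP} are comparatively routine once the gadget arenas are checked to be covered by $\memSkel$ and to lie in $\PoneA$.
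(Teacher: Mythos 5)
Your sufficiency direction (monotony and selectivity $\Rightarrow$ sufficiency) has a genuine gap: the induction hypothesis you propose --- pure memoryless $\stratType$-\emph{optimal} strategies in covered subarenas --- is too weak for the splicing step to close. In the inductive step, an arbitrary competitor $\strat_1$ of type $\stratType$ may reach the pivot state $v$ along many different histories $\hist'$, carrying different color prefixes $\word' = \colHat{\hist'} \in \memLang{\memInit}{\coverFunction(v)}$. After conditioning on the first visit to $v$, each piece of the induced distribution must be compared with what your memoryless candidate achieves \emph{from $v$ after the prefix $\word'$}, i.e., under the shifted preference $\shiftPref{\word'}$. Monotony and selectivity do quantify over all such prefixes, so they bound each piece by the \emph{set} of distributions achievable from $v$ in the chosen half after $\word'$; but to replace that set by the single distribution induced from $v$ by the specific strategy the induction hands you, you need that strategy to be optimal \emph{after every such prefix} --- and since $\pref$ is not prefix-independent in general, optimality from the designated initial states gives no control whatsoever over behaviour after $\word'$. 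Neither selectivity nor ``mixing is useless'' repairs this: they are properties of $\pref$, not of the particular strategy produced by the induction. This is precisely why the paper strengthens the induction hypothesis to a subgame-perfection property (pure memoryless strategies that are optimal in $(\arena', \histOut{\hist}, \shiftPref{\colHat{\hist}})$ for \emph{every} history $\hist$ of the ambient covered arena), observing that by Theorem~\ref{prop:optAIFMimpliesSP} this strengthening costs nothing in the AIFM setting. Ironically, you invoke Theorem~\ref{prop:optAIFMimpliesSP} only in the necessity direction, where the paper needs neither it nor subgame perfection; it is in the sufficiency induction that the SP strengthening is indispensable. (One could try to bypass SP by assembling, for each $\hist'$, a witness strategy drawn from the selectivity/monotony sets and invoking plain optimality against the assembled strategy; this salvages $\stratType \in \{\pure, \rand\}$, but fails for $\stratType \in \{\PFM, \RFM\}$ because the assembled strategy need not be finite-memory, so it cannot prove the theorem as stated.)

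Your necessity direction also contains a flawed step: having first reduced via Lemma~\ref{prop:optProdIsCov} to memoryless optimality on arenas covered by $\memSkel$, you must apply this reduced hypothesis to the gadget arenas, and you claim they are covered because $\colHat{\Hists(\arena_i, \st_i, \st_i)} \subseteq \memLang{\memState}{\memState}$. This is false in general: that hypothesis constrains only the words read along returns to $\st_i$, whereas coverability requires a consistent memory-state assignment to \emph{every} state of $\arena_1$ and $\arena_2$, which are arbitrary members of $\PoneA$ and need not be coverable at all (a state reachable by two paths whose color words reach distinct memory states already rules it out). The paper avoids this by not reducing in this direction: it applies the original hypothesis (pure $\stratType$-optimal strategies based on $\memSkel$) directly to the gadgets, which do lie in $\PoneA$, and exploits that the Mealy machine's next-action function at the merge state depends only on the current memory state $\memState$, hence is the same after every prefix in $\memLang{\memInit}{\memState}$; with one initial state per prefix chain, optimality from both initial states already forces the uniform choice, so neither subgame perfection nor ``mixing is useless'' is needed there (the paper notes explicitly that the latter is not required for this direction, contrary to your closing remark about aggregating comparisons along branches).
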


We start with the proof of the necessary condition of Theorem~\ref{thm:monSel}, which is the easiest direction.
The main idea is to build the right arenas (using the arenas occurring in the definitions of monotony and selectivity) so that we can use the hypothesis about the existence of pure $\stratType$-optimal strategies based on $\memSkel$ to immediately deduce $\stratType$-$\arenaType$-$\memSkel$-monotony and $\stratType$-$\arenaType$-$\memSkel$-selectivity.
It is not necessary that mixing is useless for $\pref$ for this direction of the equivalence.

\begin{figure}[tbh]
	\centering
	\begin{minipage}{0.48\columnwidth}
	\centering
	\begin{tikzpicture}[every node/.style={font=\small,inner sep=1pt}]
		\draw (0,0) node[rond] (qw) {$\st_0^\word$};
		\draw ($(qw)+(2,0)$) node[rond] (qw') {$\st_0^{\word'}$};
		\draw ($(qw)+(0,-2.8)$) node[oval,minimum height=15mm,minimum width=10mm] (qk1) {$\arena_1$};
		\draw ($(qw')+(0,-2.8)$) node[oval,minimum height=15mm,minimum width=10mm] (qk2) {$\arena_2$};
		\draw ($(qw)!0.5!(qk2)$) node[rond] (sj) {$\joinState$};
		\draw ($(qw)+(-0.75,0)$) edge[-latex'] (qw);
		\draw ($(qw')+(0.75,0)$) edge[-latex'] (qw');
		\draw (qw) edge[decorate,-latex'] node[above right] {$\word$} (sj);
		\draw (qw') edge[decorate,-latex'] node[above left] {$\word'$} (sj);
		\draw (sj) edge[-latex',out=-145,in=90] (qk1);
		\draw (sj) edge[-latex',out=-125,in=70] (qk1);
		\draw (sj) edge[-latex',out=-35,in=90] (qk2);
		\draw (sj) edge[-latex',out=-55,in=110] (qk2);
	\end{tikzpicture}
	\end{minipage}
	\begin{minipage}{0.48\columnwidth}
	\centering
	\begin{tikzpicture}[every node/.style={font=\small,inner sep=1pt}]
		\draw (0,0) node[rond] (qw) {$\st_0^\word$};
		\draw ($(qw)+(-1,-2.8)$) node[oval,minimum height=15mm,minimum width=10mm] (qk1) {$\arena_1$};
		\draw ($(qw)+(1,-2.8)$) node[oval,minimum height=15mm,minimum width=10mm] (qk2) {$\arena_2$};
		\draw ($(qw) + (0,-1.4)$) node[rond] (sj) {$\joinState$};
		\draw ($(qw)+(-0.75,0)$) edge[-latex'] (qw);
		\draw (qw) edge[decorate,-latex'] node[right=2pt] {$\word$} (sj);
		\draw (sj) edge[-latex',out=-140,in=40] (qk1);
		\draw (sj) edge[-latex',out=-150,in=50] (qk1);
		\draw (qk1) edge[-latex',out=80,in=-180] (sj);
		\draw (qk1) edge[-latex',out=90,in=-190] (sj);
		\draw (sj) edge[-latex',out=-40,in=140] (qk2);
		\draw (sj) edge[-latex',out=-30,in=130] (qk2);
		\draw (qk2) edge[-latex',out=100,in=0] (sj);
		\draw (qk2) edge[-latex',out=90,in=10] (sj);
	\end{tikzpicture}
	\end{minipage}
	\caption{Initialized arenas $(\Amon, \{\st_0^\word,\st_0^{\word'}\})$ (left) and $(\Asel, \st_0^\word)$ (right).}%
	\label{fig:monSel}
\end{figure}
\begin{proof}[Proof of the necessary condition of Theorem~\ref{thm:monSel}]
	We assume that pure strategies based on $\memSkel$ suffice to play $\stratType$-optimally in $\PoneA$ for $\Pone$.

	We first prove that $\pref$ is $\stratType$-$\arenaType$-$\memSkel$-monotone.
	Let $\memState\in\memStates$ and $(\arena_1,\st_1), (\arena_2,\st_2) \in\PoneA$ be initialized one-player arenas of $\Pone$.
	If for all $\word\in\memLang{\memInit}{\memState}$, both $\word\cl{\arena_1}{\st_1}{\stratType} \pref \word\cl{\arena_2}{\st_2}{\stratType}$ and $\word\cl{\arena_2}{\st_2}{\stratType} \pref \word\cl{\arena_1}{\st_1}{\stratType}$, that is, if both sets of distributions are just as good as each other, then we can take either $i = 1$ or $i = 2$ and the definition is satisfied.
	If that is not the case, this means that there exists $\word'\in\memLang{\memInit}{\memState}$ such that, w.l.o.g.,
	\begin{equation}\label{eq:mon}
		\word'\cl{\arena_1}{\st_1}{\stratType} \strictPref \word'\cl{\arena_2}{\st_2}{\stratType}.
	\end{equation}
	We take $i = 2$.
	It is left to show that \emph{for all $\word\in\memLang{\memInit}{\memState}$},
	\[
	\word\cl{\arena_1}{\st_1}{\stratType} \pref \word\cl{\arena_2}{\st_2}{\stratType}.
	\]
	Let $\word\in\memLang{\memInit}{\memState}$.
	For $j\in\{1, 2\}$, we assume w.l.o.g.\ that state $\st_j$ has no incoming transition in $\arena_j$, and therefore cannot be reached after being left.
	If it is not the case, we can create a new one-player arena $\arena_j'$ by adding a new state $\st_j'$ mimicking the outgoing transitions of $\st_j$, but without any ingoing transition, and we have $\cl{\arena_j}{\st_j}{\stratType} = \cl{\arena_j'}{\st_j'}{\stratType}$.
	We also assume w.l.o.g.\ that the state and action spaces of $\arena_1$ and $\arena_2$ are disjoint.

	We consider the arena
	\[
		\Amon = \arenaPref{(\arenaPref{((\arena_1, \st_1) \arenaJoin (\arena_2, \st_2))}{\word}{\joinState})}{\word'}{\joinState}
	\]
	where $\joinState$ is the state resulting from the merge of $\st_1$ and $\st_2$.
	We consider two initial states $\st_0^\word$ and $\st_0^{\word'}$, which are the states at the start of the ``chains'' corresponding respectively to $\word$ and $\word'$.
	Arena $\Amon$ is depicted on Figure~\ref{fig:monSel}; $\Amon$ consists of two chains reading $\word$ and $\word'$ up to state $\joinState$, and then a choice between going to $\arena_1$ or $\arena_2$, with no possibility of ever going back to $\joinState$.

	The initialized arena $(\Amon, \{\st_0^\word,\st_0^{\word'}\})$ is in $\PoneA$ as all operations used preserve the number of players and the deterministic/stochastic feature.
	By hypothesis, $\Pone$ has a pure $\stratType$-optimal strategy $\strat_1\in\stratsType{1}{\PFM}{\Amon, \{\st_0^\word,\st_0^{\word'}\}}$ encoded as a Mealy machine $(\memSkel, \memNxt)$.
	Remember that both $\word$ and $\word'$ reach state $\memState$ of the memory skeleton $\memSkel$ when read from $\memInit$.
	Therefore, no matter whether the play starts in $\st_0^\word$ or $\st_0^{\word'}$, the action played in $\joinState$ by strategy $\strat_1$ is given by $\memNxt(\joinState, \memState)$ (which cannot be a randomized choice, as $\strat_1$ is pure).
	Since $\strat_1$ is $\stratType$-optimal in $(\Amon, \st_0^{\word'})$, by~\eqref{eq:mon}, this action must necessarily be an action of $\arena_2$.
	Now since $\strat_1$ is also $\stratType$-optimal in $(\Amon, \st_0^\word)$, this means that going to $\arena_2$ after $\word$ is at least as good as going to $\arena_1$.
	In other words, we have
	\[
	\word\cl{\arena_1}{\st_1}{\stratType} \pref \word\cl{\arena_2}{\st_2}{\stratType},
	\]
	which ends the $\stratType$-$\arenaType$-$\memSkel$-monotony proof.

	We now prove that $\pref$ is $\stratType$-$\arenaType$-$\memSkel$-selective.
	Let $\memState\in\memStates$ and $(\arena_1, \st_1), (\arena_2, \st_2)\in\PoneA$ such that for $i\in\{1, 2\}$, $\Hists(\arena_i, \st_i, \st_i) \subseteq \memLang{\memState}{\memState}$.
	Let $\word\in\memLang{\memInit}{\memState}$.
	We consider the arena
	\[
		\Asel = ((\arena_1, \st_1) \arenaJoin (\arena_2, \st_2))_{\word\leadsto\joinState}
	\]
	where $\joinState$ is the state resulting from the merge of $\st_1$ and $\st_2$.
	We consider an initial state $\st_0^\word$, which is the state at the start of the ``chain'' corresponding to $\word$.
	Arena $\Asel$ is depicted on Figure~\ref{fig:monSel}; $\Asel$ consists of one chain reading $\word$ up to state $\joinState$, and then has the ability to go either to $\arena_1$ and $\arena_2$.
	Here, it is possible to visit $\joinState$ multiple times (as long as it was possible to go back to $\st_1$ in $(\arena_1, \st_1)$ or to $\st_2$ in $(\arena_2, \st_2)$).

	The initialized arena $(\Asel, \st_0^\word)$ is in $\PoneA$ as all operations used preserve the number of players and the deterministic/stochastic feature.
	By hypothesis, $\Pone$ has a pure $\stratType$-optimal strategy $\strat_1\in\stratsType{1}{\PFM}{\Amon, \st_0^\word}$ encoded as a Mealy machine $(\memSkel, \memNxt)$.
	By $\stratType$-optimality of $\strat_1$, we have that
	\begin{equation}\label{eq:sel}
		\word\cl{(\arena_1, \st_1)\arenaJoin(\arena_2, \st_2)}{\joinState}{\stratType} \pref \prcSolo{\Asel}{\st_0^\word}{\strat_1}.
	\end{equation}
	Since $\word$ is in $\memLang{\memInit}{\memState}$ and for $i\in\{1, 2\}$, $\colHat{\Hists(\arena_i, \st_i, \st_i)}$ is a subset of $\memLang{\memState}{\memState}$, we have that $\colHat{\Hists(\Asel, \st_0^\word, \joinState)}$ is a subset of $\memLang{\memInit}{\memState}$.
	Therefore, at each passage in $\joinState$, strategy $\strat_1$ plays the action given by $\memNxt(\joinState, \memState)$ (which cannot be a randomized choice, as $\strat_1$ is pure).
	Strategy $\strat_1$ thus commits to $\arena_1$ or $\arena_2$ forever, which means that
	\[
		 \prcSolo{\Asel}{\st_0^\word}{\strat_1} \in \word\cl{\arena_1}{\st_1}{\stratType} \cup \word\cl{\arena_2}{\st_2}{\stratType}.
	\]
	By combining this last fact with~\eqref{eq:sel}, we obtain that
	\[
		\word\cl{(\arena_1, \st_1)\arenaJoin(\arena_2, \st_2)}{\joinState}{\stratType}
		\pref \word\cl{\arena_1}{\st_1}{\stratType} \cup 	\word\cl{\arena_2}{\st_2}{\stratType},
	\]
	which ends the $\stratType$-$\arenaType$-$\memSkel$-selectivity proof.
\end{proof}

We sketch the proof of the sufficient condition of Theorem~\ref{thm:monSel}.
We first reduce the problem to the existence of pure memoryless strategies in initialized arenas covered by $\memSkel$, using Lemma~\ref{prop:optProdIsCov}.
We proceed with an induction on the number of choices in these arenas (as for Theorem~\ref{thm:1to2}).
The base case is again trivial (as in an arena in which all states have a single available action, there is a single strategy which is pure and memoryless).
For the induction step, we take an initialized arena $(\arena',\initStates)\in\PoneA$ covered by $\memSkel$ with at least one choice, and we pick a state $t$ with (at least) two available actions.
A memory state $\coverFunction(t)$ is associated to $t$ thanks to the coverability property.
We consider the subarenas $(\arena'_\action, \initStates)$ with a single action $\action$ available in $t$, to which we can apply the induction hypothesis and obtain a pure memoryless $\stratType$-optimal strategy $\strat_1^\action$ in each subarena.
It is left to prove that one of these strategies is also $\stratType$-optimal in $(\arena',\initStates)$ --- this is where $\stratType$-$\arenaType$-$\memSkel$-monotony and $\stratType$-$\arenaType$-$\memSkel$-selectivity come into play.

The property of $\stratType$-$\arenaType$-$\memSkel$-monotony tells us that one of these subarenas $(\arena'_{\action^*}, \initStates)$ is preferred to the others w.r.t.\ $\pref$ after reading any word in $\memLang{\memInit}{\coverFunction(t)}$.
We now want to use $\stratType$-$\arenaType$-$\memSkel$-selectivity to conclude that there is no reason to use actions different from $\action^*$ when coming back to $t$, and that $\strat_1^{\action^*}$ is therefore also $\stratType$-optimal in $(\arena', \initStates)$.
To do so, we take any strategy $\strat_1\in\stratsType{1}{\stratType}{\arena', \st}$ for $\st\in\initStates$ and we condition distribution $\prSolo{\arena'}{\st}{\strat_1}$ over all the ways it reaches (or not) $t$, which gives a convex combination of probability distributions.
We want to state that once $t$ is reached, no matter how, switching to strategy $\strat_1^{\action^*}$ is always beneficial.
For this, we would like to use $\stratType$-subgame-perfection of $\strat_1^{\action^*}$ rather than simply $\stratType$-optimality: this is why in the actual proof, our induction hypothesis is about $\stratType$-SP strategies and not $\stratType$-optimal strategies.
Luckily, Theorem~\ref{prop:optAIFMimpliesSP} indicates that requiring subgame perfection is not really stronger than what we want to prove.
We then use that mixing is useless for $\pref$ to replace all the parts that go through $t$ in the convex combination by a better distribution induced by $\strat_1^{\action^*}$ from $t$.

We need two (intuitive) technical lemmas, whose proofs can be found in Appendix~\ref{sec:proofMonSel}.
We first define a similar notion to \emph{shifted distributions} (Definition~\ref{def:shift}) for distributions on plays: for $(\arena, \st)$ an initialized one-player arena, for $\hist\in\Hists(\arena, \st)$, if $\dist\in\Dist(\Plays(\arena, \histOut{\hist}), \oalg_{(\arena, \histOut{\hist})})$ is a distribution on plays, then for $E\in\oalg_{(\arena, \st)}$ an event, we define
\[
\hist\dist(E) = \dist(\{\play\in\Plays(\arena, \histOut{\hist}) \mid \hist\play\in E\}).
\]
We have used here an abuse of notation: if $\hist=\usualHist$, for $\play = \st_n\action_{n+1}\st_{n+1}\ldots\in\Plays(\arena, \histOut{\hist})$, we write $\hist\play$ for the play $\usualHist\action_{n+1}\st_{n+1}\ldots$, with no repetition of~$\st_n$.

\begin{restatable}{lem}{colHatDist}\label{lem:colHatDist}
	Let $(\arena = \arenaFull, \st)$ be an initialized one-player arena and $\hist\in\Hists(\arena, \st)$.
	Let $\dist$ be a distribution on plays in $\Dist(\Plays(\arena, \histOut{\hist}), \oalg_{(\arena, \histOut{\hist})})$.
	We have
	\[
		\colHat{\hist\dist} = \colHat{\hist}\colHat{\dist}.
	\]
\end{restatable}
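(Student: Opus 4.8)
The plan is to show that the two distributions, both of which live in $\Dist(\colors^\omega, \oalg)$, agree on every event $E\in\oalg$; since a measure on $(\colors^\omega,\oalg)$ is determined by its values on all measurable sets, this suffices and there is no need to restrict attention to cylinders. The single pointwise observation that drives the whole argument is that coloring commutes with concatenation: for every $\play\in\Plays(\arena, \histOut{\hist})$ one has $\colHat{\hist\play} = \colHat{\hist}\colHat{\play}$, which is immediate from the definition of $\colHatSolo$ on histories and on plays, once one recalls that in the concatenation $\hist\play$ the common state $\histOut{\hist}$ is written only once (the abuse of notation introduced just before the statement).

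First I would unfold the left-hand side on a fixed event $E$. By definition of the pushforward, $\colHat{\hist\dist} = (\hist\dist)\circ\colHatSolo^{-1}$, so $\colHat{\hist\dist}(E) = (\hist\dist)(\colHatSolo^{-1}(E))$; applying the definition of the shifted distribution $\hist\dist$ on plays and then the commutation fact yields $\colHat{\hist\dist}(E) = \dist(\{\play\in\Plays(\arena, \histOut{\hist}) \mid \colHat{\hist\play}\in E\}) = \dist(\{\play\in\Plays(\arena, \histOut{\hist}) \mid \colHat{\hist}\colHat{\play}\in E\})$. Next I would unfold the right-hand side: by Definition~\ref{def:shift} the shifted distribution on colors satisfies $(\colHat{\hist}\colHat{\dist})(E) = \colHat{\dist}(\{\word'\in\colors^\omega \mid \colHat{\hist}\word'\in E\})$, and using $\colHat{\dist} = \dist\circ\colHatSolo^{-1}$ this becomes $\dist(\{\play\in\Plays(\arena, \histOut{\hist}) \mid \colHat{\hist}\colHat{\play}\in E\})$. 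The two expressions coincide for every $E$, so the distributions are equal.

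The only delicate points are bookkeeping ones. I must keep track of the two different domains on which $\colHatSolo$ is applied — the pushforward in $\colHat{\hist\dist}$ uses $\colHatSolo$ on $\Plays(\arena, \st)$, whereas the pushforward in $\colHat{\dist}$ uses it on $\Plays(\arena, \histOut{\hist})$ — and it is precisely the commutation identity $\colHat{\hist\play} = \colHat{\hist}\colHat{\play}$ that bridges them. I also rely on the measurability of $\colHatSolo$ and on the well-definedness of the shifted distribution $\hist\dist$ on plays, both of which are already implicit in the notation introduced earlier; granting these, the remaining manipulations (commuting $\colHatSolo^{-1}$ past the shift and matching the two resulting preimages in $\Plays(\arena, \histOut{\hist})$) are routine substitutions of the definitions and present no real difficulty.
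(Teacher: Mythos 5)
Your proposal is correct and follows essentially the same argument as the paper's proof: fixing an arbitrary event $E\in\oalg$, unfolding the pushforward and shift definitions, and bridging the two sides via the commutation identity $\colHat{\hist\play} = \colHat{\hist}\colHat{\play}$. The only cosmetic difference is that you unfold both sides and meet in the middle, whereas the paper writes a single chain of equalities from left to right; the definitions invoked and the key step are identical.
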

We briefly recall some notations used in this last formula.
There are two different uses of notation $\colHatSolo$: $\colHat{\hist}$ maps history $\hist$ to a sequence of colors, while $\colHat{\hist\dist}$ and $\colHat{\dist}$ have as an input a distribution in $\Dist(\Plays(\arena, \st), \oalg_{(\arena, \st)})$ and map it to a distribution in $\Dist(\colors^\omega, \oalg)$.
Notations $\hist\dist$ and $\colHat{\hist}\colHat{\dist}$ denote shifted distributions.

\begin{restatable}{lem}{coincides}\label{lem:coincides}
	Let $(\arena, \st)\in\PoneA$ be an initialized one-player arena and $\strat_1,\stratBis_1\in\stratsType{1}{\rand}{\arena, \st}$ be two strategies.
	Let $\hist = \usualHist\in\Hists(\arena, \st)$.
	We say that \emph{$\strat_1$ coincides with $\stratBis_1$ on $\hist$} if for each prefix $\hist_i = \st_0\action_1\st_1\ldots\action_i\st_i$ of $\hist$ with $0\le i < n$, $\strat_1(\hist_i) = \stratBis_1(\hist_i)$.
	If $\strat_1$ coincides with $\stratBis_1$ on $\hist$, then
	\[
	\pr{\arena}{\st}{\strat_1}{\Cyl{\hist}} = \pr{\arena}{\st}{\stratBis_1}{\Cyl{\hist}}.
	\]
	Let $t$ be a state of $\arena$.
	We write $\lnot \F t$ for the event in $\oalg_{(\arena, \st)}$ that consists of all the infinite plays that never visit $t$.
	Assume that for all $\hist=\usualHist\in\Hists(\arena, \st)$ such that for all $i$, $0\le i < n$, $\st_i \neq t$, $\strat_1$ coincides with $\stratBis_1$ on $\hist$.
	Then
	\[
		\pr{\arena}{\st}{\strat_1}{\lnot \F t} = \pr{\arena}{\st}{\stratBis_1}{\lnot \F t}\
		\text{and, if $\pr{\arena}{\st}{\strat_1}{\lnot \F t} > 0$,}\
		\pr{\arena}{\st}{\strat_1}{\blank \mid \lnot \F t} = \pr{\arena}{\st}{\stratBis_1}{\blank \mid \lnot \F t}.
	\]
\end{restatable}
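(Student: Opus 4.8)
The plan is to prove the two claims in turn: the first by directly unwinding the definition of the cylinder pre-measure, and the second by a standard measure-theoretic approximation built on top of the first. For the first claim, writing $\hist = \usualHist$, the value $\pr{\arena}{\st}{\strat_1}{\Cyl{\hist}}$ is the finite product $\prod_{j=1}^{n} \strat_1(\st_0\action_1\ldots\st_{j-1})(\action_j)\cdot\transProb(\st_{j-1},\action_j,\st_j)$ (recalling that in a one-player arena of $\Pone$ the trivial strategy of $\Ptwo$ only contributes factors equal to $1$). The transition factors $\transProb(\st_{j-1},\action_j,\st_j)$ depend only on $\arena$ and not on the strategy, and the strategy is consulted exactly on the proper prefixes $\hist_0,\dots,\hist_{n-1}$ of $\hist$. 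Since $\strat_1$ coinciding with $\stratBis_1$ on $\hist$ means precisely that $\strat_1(\hist_i)=\stratBis_1(\hist_i)$ for $0\le i<n$, every factor is identical and the two products agree.

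For the total-probability equality in the second claim, I would approximate $\lnot\F t$ from above by cylinders of $t$-avoiding histories. Let $A_m\subseteq\Plays(\arena,\st)$ be the event that the first $m+1$ states all differ from $t$; then $A_0\supseteq A_1\supseteq\cdots$ with $\bigcap_m A_m=\lnot\F t$, and each $A_m$ is a finite disjoint union of cylinders $\Cyl{\hist'}$ over histories $\hist'$ of length $m$ all of whose states avoid $t$. Every such $\hist'$ satisfies the hypothesis of the lemma (its states strictly before the last avoid $t$), so $\strat_1$ coincides with $\stratBis_1$ on $\hist'$ and the first claim gives $\pr{\arena}{\st}{\strat_1}{\Cyl{\hist'}}=\pr{\arena}{\st}{\stratBis_1}{\Cyl{\hist'}}$. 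Summing over the finitely many cylinders yields $\pr{\arena}{\st}{\strat_1}{A_m}=\pr{\arena}{\st}{\stratBis_1}{A_m}$ for every $m$, and continuity of a (finite) probability measure along the decreasing sequence $(A_m)_m$ gives $\pr{\arena}{\st}{\strat_1}{\lnot\F t}=\pr{\arena}{\st}{\stratBis_1}{\lnot\F t}$.

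For the conditional distributions, I would show that the two finite measures $\nu\colon E\mapsto\pr{\arena}{\st}{\strat_1}{E\cap\lnot\F t}$ and $\nu'\colon E\mapsto\pr{\arena}{\st}{\stratBis_1}{E\cap\lnot\F t}$ on $\oalg_{(\arena,\st)}$ coincide. The cylinders form a $\pi$-system that generates $\oalg_{(\arena,\st)}$ and contains the whole space $\Cyl{\st}=\Plays(\arena,\st)$, so by Dynkin's $\pi$-$\lambda$ theorem it suffices to check $\nu(\Cyl{\hist})=\nu'(\Cyl{\hist})$ on cylinders. If $\hist$ visits $t$, then $\Cyl{\hist}\cap\lnot\F t=\emptyset$ and both sides vanish; if $\hist$ avoids $t$, then $\Cyl{\hist}\cap\lnot\F t$ is the decreasing intersection of the $t$-avoiding cylinder unions refining $\hist$, and exactly the argument of the previous paragraph (the first claim applied to each refining $t$-avoiding cylinder, followed by continuity from above) gives the equality. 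Hence $\nu=\nu'$, and dividing by the common positive total mass $\nu(\Plays(\arena,\st))=\pr{\arena}{\st}{\strat_1}{\lnot\F t}$ yields $\pr{\arena}{\st}{\strat_1}{\blank\mid\lnot\F t}=\pr{\arena}{\st}{\stratBis_1}{\blank\mid\lnot\F t}$.

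The only genuinely delicate step is this last measure-theoretic bookkeeping: setting up the decreasing cylinder approximations of $\lnot\F t$ and of $\Cyl{\hist}\cap\lnot\F t$ correctly, and invoking the $\pi$-$\lambda$ (uniqueness-of-extension) machinery to pass from agreement on cylinders to agreement on all of $\oalg_{(\arena,\st)}$. The first claim and the total-probability equality are routine once the cylinder formula is unwound.
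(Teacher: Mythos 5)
Your proof is correct and follows essentially the same route as the paper's: unwind the cylinder pre-measure for the first claim, then lift equality from $t$-avoiding cylinders to $\lnot\F t$ and to events $E\cap\lnot\F t$ via a $\pi$-system uniqueness argument, with the same case split on whether the cylinder's history visits $t$. The only cosmetic difference is that you realize $\lnot\F t$ and $\Cyl{\hist}\cap\lnot\F t$ as explicit decreasing intersections of finite unions of $t$-avoiding cylinders and invoke continuity from above, whereas the paper instead observes that these events lie in the $\sigma$-algebra generated by the $\pi$-system of coinciding-history cylinders and applies the monotone class lemma.
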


We now have all the ingredients for the proof of the missing implication of Theorem~\ref{thm:monSel}.

\begin{proof}[Proof of the sufficient condition of Theorem~\ref{thm:monSel}]
	We assume now that mixing is useless for $\pref$, and that $\pref$ is $\stratType$-$\arenaType$-$\memSkel$-monotone and $\stratType$-$\arenaType$-$\memSkel$-selective.
	We prove that pure strategies based on $\memSkel$ suffice to play $\stratType$-optimally in $\PoneA$ for $\Pone$.
	Equivalently, thanks to Lemma~\ref{prop:optProdIsCov}, we show that for all initialized arenas covered by $\memSkel$ in $\PoneA$, $\Pone$ has a pure memoryless $\stratType$-optimal strategy.
	We will actually prove something stronger, which is that for all initialized one-player arenas covered by $\memSkel$ in $\PoneA$, $\Pone$ has a pure memoryless $\stratType$-SP strategy.

	Let $\initArena\in\PoneA$ be an initialized one-player arena covered by $\memSkel$.
	Our proof proceeds by induction on the number of choices $\size{\arena'}$ of subarenas $(\arena', \initStates)$ of $\initArena$.
	Our induction will prove the following property for subarenas $(\arena', \initStates)$ of $\initArena$: there exists a pure memoryless strategy $\strat_1\in\stratsType{1}{\PFM}{\arena', \initStates}$ such that for all $\hist\in\Hists(\arena, \initStates)$, $\strat_1$ is $\stratType$-optimal in the game $(\arena', \histOut{\hist}, \shiftPref{\colHat{\hist}})$.
	We call this property \emph{having a pure memoryless $\initArena$-$\stratType$-SP strategy}.
	There is a slight abuse of notation in the definition: $\strat_1$ is not necessarily well-defined from $\histOut{\hist}$, but as it is pure memoryless, we simply interpret it as a function $\states \to \act$, and for $\hist'\in\Hists(\arena', \histOut{\hist})$, we define $\strat_1(\hist') = \strat_1(\histOut{\hist'})$.
	For subarenas $(\arena', \initStates)$, having a pure memoryless $\initArena$-$\stratType$-SP strategy is stronger than having a pure memoryless $\stratType$-SP strategy, as $\Hists(\arena', \initStates)$ is a subset of $\Hists(\arena, \initStates)$.
	For arena $\initArena$, having a pure memoryless $\initArena$-$\stratType$-SP strategy is equivalent to having a pure memoryless $\stratType$-SP strategy, which is what we want to prove.
	Requiring SP strategies instead of simply optimal strategies may seem stronger than what we actually need, but by Theorem~\ref{prop:optAIFMimpliesSP}, it turns out being equivalent in this AIFM context; we use SP strategies in this case for technical reasons.

	Let $(\arena' = (\states_1, \states_2, \act', \transProb', \colSolo'), \initStates)$ be a subarena of $\initArena$.
	If $\size{\arena'} = 0$, then $\Pone$ has only one strategy which is in particular a pure memoryless $\initArena$-$\stratType$-SP strategy (notation $\size{\arena'}$ is defined at~\eqref{eq:choices}).
	Now let $n > 0$; we assume that the property is true for all arenas $(\arena', \initStates)$ such that $\size{\arena'} < n$, and we take $(\arena', \initStates)$ such that $\size{\arena'} = n$.
	Since $n > 0$, there is a state $t\in\states_1$ such that $\card{\act'(t)} \ge 2$.

	For $\action\in\act'(t)$, let $(\arena'_\action, \initStates)\in\PoneA$ be the initialized subarena of $(\arena', \initStates)$ such that only action $\action$ is available in $t$.
	Initialized arena $(\arena'_\action, \initStates)$ is covered by $\memSkel$ (Lemma~\ref{lem:covAreClosed}).
	By induction hypothesis, for all $\action\in\act'(t)$, $\Pone$ has a pure memoryless $(\arena, \initStates)$-$\stratType$-SP strategy $\strat_1^\action$ in $(\arena_\action', \initStates)$.

	Let $\memState\in\memStates$ be the memory state corresponding to $t$ in $\initArena$, that is, if $\coverFunction$ is the function witnessing that $\initArena$ is covered, $\memState = \coverFunction(t)$.
	The same function $\coverFunction$ also witnesses that all the initialized subarenas of $\initArena$ are covered by $\memSkel$.
	As $\pref$ is $\stratType$-$\arenaType$-$\memSkel$-monotone, there exists $\action^*\in\act'(t)$ such that for all $\word\in\memLang{\memInit}{\memState}$, for all $\action\in\act'(t)$,
	\begin{equation}\label{eq:monotony}
		\word\cl{\arena_\action'}{t}{\stratType} \pref \word\cl{\arena_{\action^*}'}{t}{\stratType}.
	\end{equation}
	Notice that as $(\arena, \initStates)$ is covered by $\memSkel$, $\colHat{\Hists(\arena, \initStates, t)} \subseteq \memLang{\memInit}{\memState}$.

	We now prove that the pure memoryless strategy $\strat_1^{\action^*}$ is $\initArena$-$\stratType$-SP in $(\arena', \initStates)$.
	Let $\hist\in\Hists(\arena, \initStates)$.
	We denote $\word = \colHat{\hist}$ and $\st = \histOut{\hist}$.

	Let $\strat_1$ be any strategy in $\stratsType{1}{\stratType}{\arena', \st}$.
	Our goal is to show that $\strat_1^{\action^*}$ is at least as good as $\strat_1$ in $(\arena', \st, \shiftPref{\word})$, i.e., that $\word\prcSolo{\arena'}{\st}{\strat_1} \pref \word\prcSolo{\arena'}{\st}{\strat_1^{\action^*}}$.
	We condition $\prSolo{\arena'}{\st}{\strat_1}$ over whether $t$ is visited or not (we assume that $t$ is both visited and not visited with a non-zero probability --- otherwise, one of the terms of the following sum is simply $0$).
	We denote by $\F t$ the event of visiting state $t$ and by $\Hists(\arena', \st, t!)$ the set of histories in $\Hists(\arena', \st, t)$ that visit $t$ exactly \emph{once} (at their last step).
	We have
	\begin{align*}
		\prSolo{\arena'}{\st}{\strat_1}
		&= \pr{\arena'}{\st}{\strat_1}{\lnot\F t}\cdot
		\pr{\arena'}{\st}{\strat_1}{\blank \mid \lnot\F t} +
		\pr{\arena'}{\st}{\strat_1}{\F t}\cdot
		\pr{\arena'}{\st}{\strat_1}{\blank \mid \F t} \notag\\
		&= \pr{\arena'}{\st}{\strat_1}{\lnot\F t}\cdot
		\pr{\arena'}{\st}{\strat_1}{\blank \mid \lnot\F t} +
		\sum_{\hist'\in\Hists(\arena', \st, t!)}
		\pr{\arena'}{\st}{\strat_1}{\Cyl{\hist'}}\cdot
		\pr{\arena'}{\st}{\strat_1}{\blank \mid \Cyl{\hist'}}\notag\\
		&= \pr{\arena'}{\st}{\strat_1}{\lnot\F t}\cdot
		\pr{\arena'}{\st}{\strat_1}{\blank \mid \lnot\F t} +
		\sum_{\hist'\in\Hists(\arena', \st, t!)}
		\pr{\arena'}{\st}{\strat_1}{\Cyl{\hist'}}\cdot
		\hist'\prSolo{\arena'}{t}{\shiftStrat{\strat_1}{\hist'}}.
	\end{align*}

	By applying operator $\colHatSolo$ to $\prSolo{\arena'}{\st}{\strat_1}$ and shifting the distribution with $\word$, thanks to Lemma~\ref{lem:colHatDist} and the previous equation, we have
	\begin{equation}\label{eq:condition}
		\word\prcSolo{\arena'}{\st}{\strat_1} =
		\pr{\arena'}{\st}{\strat_1}{\lnot\F t}\cdot
		\word\colHat{\pr{\arena'}{\st}{\strat_1}{\blank \mid \lnot\F t}} +
		\sum_{\substack{\hist'\in\Hists(\arena', \st, t!)\\\text{s.t.\ $\colHat{\hist'} = \word'$}}}
		\pr{\arena'}{\st}{\strat_1}{\Cyl{\hist'}}\cdot
		\word\word'\prcSolo{\arena'}{t}{\shiftStrat{\strat_1}{\hist'}}.
	\end{equation}

	For $\hist'\in\Hists(\arena', \st, t!)$, $\word' = \colHat{\hist'}$, let us focus on the distribution $\word\word'\prcSolo{\arena'}{t}{\shiftStrat{\strat_1}{\hist'}}$.
	Notice that distribution $\prcSolo{\arena'}{t}{\shiftStrat{\strat_1}{\hist'}}$ can also be induced by some strategy in $\stratsType{1}{\stratType}{\splitArena{\arena'}{t}, t}$ by Lemma~\ref{lem:bijSplit}.
	Therefore,
	\[
		\word\word'\prcSolo{\arena'}{t}{\shiftStrat{\strat_1}{\hist'}}
		\in \word\word'\cl{\splitArena{\arena'}{t}}{t}{\stratType}
		= \word\word'\cl{\bigArenaJoin_{\action\in\act'(t)} (\arena'_\action, t)}{t}{\stratType}.
	\]

	Using the hypotheses, we get
	\begin{align*}
		\word\word'\prcSolo{\arena'}{t}{\shiftStrat{\strat_1}{\hist'}}
		&\in\word\word'\cl{\bigArenaJoin_{\action\in\act'(t)} (\arena'_\action, t)}{t}{\stratType} \\
		&\pref \bigcup_{\action\in\act'(t)} \word\word'\cl{\arena'_\action}{t}{\stratType} &&\text{by $\stratType$-$\arenaType$-$\memSkel$-selectivity}\\
		&\pref \word\word'\cl{\arena'_{\action^*}}{t}{\stratType} &&\text{by~\eqref{eq:monotony}, which relied on $\stratType$-$\arenaType$-$\memSkel$-monotony} \\
		&\pref \word\word'\prcSolo{\arena'_{\action^*}}{t}{\strat^{\action^*}_1}
		&&\text{as $\strat_1^{\action^*}$ is pure memoryless $\initArena$-$\stratType$-SP in $(\arena'_{\action^*}, \initStates)$.}
	\end{align*}

	Therefore, by using this last equation in~\eqref{eq:condition}, thanks to the fact that mixing is useless for $\pref$ (or, if we consider pure strategies and deterministic arenas, that the sum contains a single term corresponding to an infinite word), we obtain
	\begin{equation}\label{eq:condition2}
		\word\prcSolo{\arena'}{\st}{\strat_1}
		\pref \pr{\arena'}{\st}{\strat_1}{\lnot\F t}\cdot
		\word\colHat{\pr{\arena'}{\st}{\strat_1}{\blank \mid \lnot\F t}} +
		\sum_{\substack{\hist'\in\Hists(\arena, \st, t!)\\\text{with}\; \colHat{\hist'} = \word'}}
		\pr{\arena'}{\st}{\strat_1}{\Cyl{\hist'}}\cdot
		\word\word'\prcSolo{\arena'_{\action^*}}{t}{\strat^{\action^*}_1}.
	\end{equation}
	We show that the right-hand side of this inequality can be written as a distribution $\word\prcSolo{\arena'_{\action^*}}{\st}{\stratBis_1}$, for a suitably chosen strategy $\stratBis_1\in\stratsType{1}{\stratType}{\arena'_{\action^*}, \st}$.

	Let $\stratBis_1\in\stratsType{1}{\stratType}{\arena'_{\action^*}, \st}$ be such that $\stratBis_1$ starts playing like $\strat_1$ and then switches to $\strat_1^{\action^*}$ as soon as $t$ is visited; formally, for $\hist''\in\Hists(\arena'_{\action^*}, \st)$,
	\[
		\stratBis_1(\hist'') = \begin{cases}
			\strat_1(\hist'') &\text{if $\hist''$ does not visit $t$} \\
			\strat_1^{\action^*}(\histOut{\hist''}) &\text{if $\hist''$ visits $t$}.
		\end{cases}
	\]
	Strategy $\stratBis_1$ only plays action $\action^*$ in $t$, and is therefore a strategy on $(\arena'_{\action^*}, \st)$.
	As $\stratBis_1$ coincides with $\strat_1$ as long as $t$ has not been visited, using Lemma~\ref{lem:coincides}, we have
	\[
		\pr{\arena'}{\st}{\strat_1}{\lnot\F t} = \pr{\arena'_{\action^*}}{\st}{\stratBis_1}{\lnot\F t},\ \
		\pr{\arena'}{\st}{\strat_1}{\blank \mid \lnot\F t} = \pr{\arena'_{\action^*}}{\st}{\stratBis_1}{\blank \mid \lnot\F t},\ \
		\pr{\arena'}{\st}{\strat_1}{\Cyl{\hist'}} = \pr{\arena'_{\action^*}}{\st}{\stratBis_1}{\Cyl{\hist'}}.
	\]
	Moreover, for all $\hist'\in\Hists(\arena', \st, t!)$,
	\[
		\prcSolo{\arena'_{\action^*}}{t}{\strat^{\action^*}_1}
		= \prcSolo{\arena'_{\action^*}}{t}{\shiftStrat{\stratBis_1}{\hist'}}
	\]
	as $t$ is immediately visited.
	We can therefore replace all terms of the right-hand side of~\eqref{eq:condition2} and obtain, using Lemma~\ref{lem:colHatDist},
	\begin{align*}
		\word\prcSolo{\arena'}{\st}{\strat_1}
		&\pref \pr{\arena'_{\action^*}}{\st}{\stratBis_1}{\lnot\F t}\cdot
		\word\colHat{\pr{\arena'_{\action^*}}{\st}{\stratBis_1}{\blank \mid \lnot\F t}} +
		\sum_{\substack{\hist'\in\Hists(\arena, \st, t!)\\\text{with}\; \colHat{\hist'} = \word'}}
		\pr{\arena'_{\action^*}}{\st}{\stratBis_1}{\Cyl{\hist'}}\cdot
		\word\word'\prcSolo{\arena'_{\action^*}}{t}{\shiftStrat{\stratBis_1}{\hist'}} \\
		&\pref \word\colHat{\pr{\arena'_{\action^*}}{\st}{\stratBis_1}{\lnot\F t}\cdot
		\pr{\arena'_{\action^*}}{\st}{\stratBis_1}{\blank \mid \lnot\F t} +
		\sum_{\hist'\in\Hists(\arena, \st, t!)}
		\pr{\arena'_{\action^*}}{\st}{\stratBis_1}{\Cyl{\hist'}}\cdot
		\hist'\prSolo{\arena'_{\action^*}}{t}{\shiftStrat{\stratBis_1}{\hist'}}} \\
		&= \word\colHat{\pr{\arena'_{\action^*}}{\st}{\stratBis_1}{\lnot\F t}\cdot
		\pr{\arena'_{\action^*}}{\st}{\stratBis_1}{\blank \mid \lnot\F t} + \pr{\arena'_{\action^*}}{\st}{\stratBis_1}{\F t}\cdot
		\pr{\arena'_{\action^*}}{\st}{\stratBis_1}{\blank \mid \F t}} \\
		&= \word\colHat{\prSolo{\arena'_{\action^*}}{\st}{\stratBis_1}}\\
		&= \word\prcSolo{\arena'_{\action^*}}{\st}{\stratBis_1}.
	\end{align*}

Now since $\word$ is the sequence of colors corresponding to $\hist\in \Hists(\arena, \initStates)$ and $\strat_1^{\action^*}$ is $\initArena$-$\stratType$-SP in $(\arena'_{\action^*}, \initStates)$, we have $\word\prcSolo{\arena'_{\action^*}}{\st}{\stratBis_1} \pref \word\prcSolo{\arena'_{\action^*}}{\st}{\strat_1^{\action^*}}$, which ends the proof.
\end{proof}

We provide an application of Theorem~\ref{thm:monSel} in Section~\ref{sec:app:wp}, proving that a preference relation admits pure AIFM optimal strategies in its one-player games.
The literature provides some \emph{sufficient} conditions for preference relations to admit pure memoryless optimal strategies in one-player stochastic games (for instance, in~\cite{Gim07}).
Here, we obtain a full characterization when mixing is useless for $\pref$ (in particular, this is a full characterization for Borel real payoff functions), which can deal not only with memoryless strategies, but also with the more general AIFM strategies.
It therefore provides a more fundamental understanding of preference relations for which AIFM strategies suffice or do not suffice.
In particular, there are examples in which the known sufficient conditions are not verified even though pure memoryless strategies suffice (one such example is provided in~\cite{BBE10}), and that is for instance where our characterization can help.

\section{Examples}\label{sec:app}

We study two examples in more detail, proving claims from Section~\ref{sec:intro}.
The first one is the \emph{weak parity} objective, to which we can apply our results both for deterministic and stochastic games, obtaining different AIFM requirements.
The second one is a variant of the \emph{discounted sum} objective, which we use to show that even when AIFM strategies suffice for deterministic games, this may not be the case in stochastic games.

\subsection{Weak parity}\label{sec:app:wp}
Let $\colors = \IN$.
We illustrate the use of our two main theorems (Theorems~\ref{thm:monSel} and~\ref{thm:1to2}) to study the memory requirements of the \emph{weak parity}~\cite{Tho08} winning condition
\[
\simPar = \{\clr_1\clr_2\ldots\in\colors^\omega \mid
\max_{j\ge 1} \clr_j\text{ exists and is even}\},
\]
which was introduced in Example~\ref{ex:objExamples}, both in deterministic and in stochastic games.
In this example, we abusively use $\simPar$ for $\pref_{\simPar}$.
We say that a word $\word\in\colors^\omega$ is \emph{winning} if $\word\in\simPar$, and \emph{losing} if $\word\notin\simPar$.
As this preference relation can be encoded as a payoff function (namely, the indicator function of $\simPar$), we have that mixing is useless for $\simPar$.

\paragraph{Deterministic games.}
We first focus on deterministic games with pure strategies: we show that pure memoryless strategies are sufficient.
To do so, we first consider one-player games --- notice that reasoning about one-player games of $\Pone$ and of $\Ptwo$ is very similar, as the objective of $\Ptwo$ can be rephrased as the objective of $\Pone$ just by replacing all colors $\clr$ by $\clr+1$.
We thus only show arguments from the point of view of $\Pone$.
We prove that the class $\PoneDetArenas$ of all initialized one-player deterministic arenas of $\Pone$ admits pure memoryless $\pure$-optimal strategies (i.e., pure $\pure$-optimal strategies based on the trivial memory skeleton $\memSkelTriv$ with a single state) by proving that $\simPar$ is $\pure$-$\deter$-$\memSkelTriv$-monotone and $\pure$-$\deter$-$\memSkelTriv$-selective.

We start with $\pure$-$\deter$-$\memSkelTriv$-monotony.
Let $(\arena_1,\st_1), (\arena_2,\st_2) \in\PoneDetArenas$.
Notice that as we are restricted to pure strategies in deterministic arenas, notation $\cl{\arena_i}{\st_i}{\pure}$ refers to a set of (Dirac distributions on) infinite words.
For $i\in\{1, 2\}$ let
\[
e_i = \max \{\max_{j\ge 1} \clr_j \mid \clr_1\clr_2\ldots\in\cl{\arena_i}{\st_i}{\pure}\cap\simPar\}
\]
be the greatest even color reachable in $\arena_i$ without reaching any greater color (or $-\infty$ if it is not possible to have an even maximal color).

We first deal with the case $e_1\neq-\infty$ or $e_2\neq-\infty$.
Assume w.l.o.g.\ that $e_1 \le e_2$.
Let $\strat_1^2\in\stratsType{1}{\pure}{\arena_2, \st_2}$ be a pure strategy achieving a maximal color exactly $e_2$.
We prove that for any word $\word\in\colors^*$ (that is, for any word $\word\in\memLang{\memInit}{\memInit}$, as the memory skeleton has only one state), we have
\begin{equation}\label{eq:monDet}
	\word\cl{\arena_1}{\st_1}{\pure} \pref \word\prcSolo{\arena_2}{\st_2}{\strat_1^2}.
\end{equation}
Let $\word = \clr_1\clr_2\ldots\clr_n\in\colors^*$ and $n_\word = \max_{1\le j\le n} \clr_j$.
If $n_\word$ is even or $n_\word \le e_2$, then $\word\prcSolo{\arena_2}{\st_2}{\strat_1^2}$ is a winning word and~\eqref{eq:monDet} holds.
If not, it means that $n_\word$ is odd and $n_\word > e_2 \ge e_1$, in which case all words in $\word\cl{\arena_1}{\st_1}{\pure}$ are necessarily losing, and~\eqref{eq:monDet} also holds.

We now deal with the case $e_1 = e_2 = -\infty$, in which there is no way to obtain an even maximal color both in $\arena_1$ and in $\arena_2$.
For $i\in\{1, 2\}$, let
\[
o_i = \min\{\max_{j\ge 1} \clr_j \mid \clr_1\clr_2\ldots \in \cl{\arena_i}{\st_i}{\pure} \}
\]
be the minimal greatest color appearing along a play (which is necessarily odd, as an even greatest color is not possible).
Assume w.l.o.g.\ that $o_1 \ge o_2$, and let $\strat_1^2\in\stratsType{1}{\pure}{\arena_2, \st_2}$ be a pure strategy achieving a maximal color exactly $o_2$.
We show again that for all $\word\in\colors^*$,
\begin{equation}\label{eq:monDet2}
	\word\cl{\arena_1}{\st_1}{\pure} \pref \word\prcSolo{\arena_2}{\st_2}{\strat_1^2}.
\end{equation}
Let $\word = \clr_1\clr_2\ldots\clr_n\in\colors^*$ and $n_\word = \max_{1\le j\le n} \clr_j$.
If all words in $\word\cl{\arena_1}{\st_1}{\pure}$ are losing then~\eqref{eq:monDet2} is true.
If there is a winning word in $\word\cl{\arena_1}{\st_1}{\pure}$, it means that $n_\word$ is even and $n_\word > o_1 \ge o_2$.
Hence, $\word\prcSolo{\arena_2}{\st_2}{\strat_1^2}$ is also winning and~\eqref{eq:monDet2} also holds.

In both cases, we have $\word\cl{\arena_1}{\st_1}{\pure} \pref \word\prcSolo{\arena_2}{\st_2}{\strat_1^2}\pref \word\cl{\arena_2}{\st_2}{\pure}$.
This proves $\pure$-$\deter$-$\memSkelTriv$-monotony.

We now turn to $\pure$-$\deter$-$\memSkelTriv$-selectivity.
Let $(\arena_1,\st_1), (\arena_2,\st_2) \in\PoneDetArenas$.
Note that the requirement that $\colHat{\Hists(\arena_i, \st_i, \st_i)} \subseteq \memLang{\memInit}{\memInit}$ does not bring information as with this particular memory skeleton, all words are in $\memLang{\memInit}{\memInit}$.
Let $\word = \clr_1\clr_2\ldots\clr_n\in\colors^*$ and $n_\word = \max_{1\le j\le n} \clr_j$.
We prove that
\begin{equation}\label{eq:selDet}
	\word\cl{(\arena_1, \st_1)\arenaJoin(\arena_2, \st_2)}{\joinState}{\pure}
	\pref
	\word\cl{\arena_1}{\st_1}{\pure} \cup \word\cl{\arena_2}{\st_2}{\pure}.
\end{equation}
If all words of $\word\cl{(\arena_1, \st_1)\arenaJoin(\arena_2, \st_2)}{\joinState}{\pure}$ are losing, then~\eqref{eq:selDet} is true.
If there is a winning word $\word\word'\in\word\cl{(\arena_1, \st_1)\arenaJoin(\arena_2, \st_2)}{\joinState}{\pure}$, then we show that we can also find a winning word in $\word\cl{\arena_1}{\st_1}{\pure} \cup \word\cl{\arena_2}{\st_2}{\pure}$.

Assume word $\word\word'$ sees its maximal color $n$ in $\word'$.
If the play corresponding to $\word'$ in $(\arena_1, \st_1)\arenaJoin(\arena_2, \st_2)$ comes back to $t$ after seeing $n$ for the first time, then $n$ is the greatest color on some cycle on $t$.
This means that the strategy repeatedly playing this cycle only takes actions in $\arena_1$ or in $\arena_2$ and also wins.
If the play corresponding to $\word'$ does not come back to $t$ after seeing $n$ for the first time, then there is a suffix of the play fully in $\arena_1$ or in $\arena_2$ --- this suffix can be played after the first visit to $t$, and this generates a winning play.

Now, assume $\word\word'$ sees its maximal color $n$ in $\word$.
If the play corresponding to $\word'$ comes back to $t$, the strategy repeatedly playing this cycle on $t$ is winning, as no color greater than $n$ is seen in this cycle.
If there is no cycle on $t$, it means that $\word'$ is already an infinite word in $\cl{\arena_1}{\st_1}{\pure} \cup \word\cl{\arena_2}{\st_2}{\pure}$.

We have therefore shown~\eqref{eq:selDet} in every case; this shows $\pure$-$\deter$-$\memSkelTriv$-selectivity.

We have proven that $\simPar$ is $\pure$-$\deter$-$\memSkelTriv$-monotone and $\pure$-$\deter$-$\memSkelTriv$-selective; by Theorem~\ref{thm:monSel}, this implies that pure memoryless strategies are sufficient to play $\pure$-optimally in one-player deterministic arenas of $\Pone$.
The same arguments holds from the point of view of $\Ptwo$.
As we have shown that both players' one-player arenas admit pure memoryless $\pure$-optimal strategies, by Theorem~\ref{thm:1to2}, we conclude that both players have pure memoryless $\pure$-optimal (even $\pure$-SP) strategies in all two-player deterministic arenas.

\paragraph{Stochastic games.}
Interestingly, memory requirements of $\simPar$ are larger in stochastic games (which was already noticed in~\cite[Section~4.4]{GZ09}) but pure AIFM strategies still suffice and we can therefore apply our results.
An example of a one-player stochastic arena that requires memory is provided in Figure~\ref{fig:simPar}.
Intuitively, in this case, memory is necessary for correct risk assessment: it may sometimes be needed to attempt to get a greater color with a smaller probability, and that depends on the current maximal color.
In this example, keeping in memory the greatest color seen is sufficient to play optimally.

\begin{figure}[tbh]
	\centering
	\begin{tikzpicture}[every node/.style={font=\small,inner sep=1pt}]
		\draw (0,0) node[rond] (s1) {$\st_1$};
		\draw ($(s1)+(1,0)$) node[rond,fill=black,minimum size=3pt] (s1b) {};
		\draw ($(s1b)+(1,-1)$) node[rond,minimum size=12pt] (s1c) {};
		\draw ($(s1b)+(1.5,0)$) node[rond] (s2) {$\st_2$};
		\draw ($(s2)+(1.5,-0.5)$) node[rond, fill=black,minimum size=3pt] (s22) {};
		\draw ($(s22)+(1,-1)$) node[rond,minimum size=12pt] (s2c) {};
		\draw ($(s22)+(1.5,1.2)$) node[rond,minimum size=12pt] (s31) {};
		\draw ($(s22)+(1.5,0)$) node[rond,minimum size=12pt] (s32) {};
		\draw ($(s1)-(0.75,0)$) edge[-latex'] (s1);
		\draw (s1) edge[-latex'] node[above=2pt] {$0$} (s1b);
		\draw (s1b) edge[-latex'] node[above=2pt] {$\frac{1}{2}$} (s2);
		\draw (s1b) edge[-latex'] node[below left=1pt] {$\frac{1}{2}$} (s1c);
		\draw (s1c) edge[-latex'] node[below right=0.5pt,yshift=1pt] {$1$} (s2);
		\draw (s2) edge[-latex'] node[above=2pt,xshift=-1pt] {$a\mid 0$} (s31);
		\draw (s2) edge[-latex'] node[below=2pt,xshift=-4pt] {$b\mid 1$} (s22);
		\draw (s22) edge[-latex'] node[above=2pt] {$\frac{1}{2}$} (s32);
		\draw (s22) edge[-latex'] node[below left=1pt] {$\frac{1}{2}$} (s2c);
		\draw (s2c) edge[-latex'] node[below right=1pt] {$2$} (s32);
		\draw (s31) edge[-latex',out=30,in=-30,distance=10pt] node[right=2pt] {$0$} (s31);
		\draw (s32) edge[-latex',out=30,in=-30,distance=10pt] node[right=2pt] {$0$} (s32);
	\end{tikzpicture}
	\caption{Initialized one-player stochastic arena that requires memory (even if randomized strategies are allowed) for winning condition $\simPar$.
	$\Pone$ can win with probability $\frac{3}{4}$ by taking the risk to play action $b$ only if $1$ has been seen before.}%
	\label{fig:simPar}
\end{figure}
We generalize this idea and prove that memory skeleton $\Mmax = (\IN, 0, (\memState, n) \mapsto \max\{\memState, n\})$ suffices to play optimally in all stochastic arenas for both players (as argued earlier, although this skeleton is infinite, it is finite as soon as we restrict it to a finite set of colors).

We prove that the class $\PoneStochArenas$ of all initialized one-player stochastic arenas of $\Pone$ admits pure memoryless $\rand$-optimal strategies based on $\Mmax$ by proving $\rand$-$\stoch$-$\Mmax$-monotony and $\rand$-$\stoch$-$\Mmax$-selectivity.

The weak parity winning condition is not prefix-independent but using the definition of $\Mmax$, we prove the following related property: for all $\memState\in\IN$, for all finite words $\word_1, \word_2\in\memLang{0}{\memState}$, for all infinite words $\word\in\colors^\omega$,
\begin{equation}\label{eq:MPrefInd}
	\word_1\word \in \simPar \Longleftrightarrow \word_2\word\in\simPar.
\end{equation}
That is, similar prefixes (in the sense that they reach the same state of the memory skeleton) have the same influence on the outcome; the winning continuations are the same.

Let $\memState\in\IN$, $\word_1, \word_2\in\memLang{0}{\memState}$, and $\word = \clr_1\clr_2\ldots\in\colors^\omega$.
Assume $\word_1\word$ is winning.
Let $n_\word = \max_{j\ge 1} \clr_j$.
If $\memState \ge n_\word$, then it means $\memState$ is even and $\word_2\word$ is therefore also winning.
If $\memState < n_\word$, then it means that $n_\word$ is even and $\word_2\word$ is also winning.

Property~\eqref{eq:MPrefInd} implies the following for distributions: for $\dist\in\Dist(\colors^\omega, \oalg)$, for all $\word_1, \word_2\in\memLang{0}{\memState}$, $\word_1\dist(\simPar) = \word_2\dist(\simPar)$.
This implies $\rand$-$\stoch$-$\Mmax$-monotony: let $\memState\in\IN$ and $(\arena_1,\st_1), (\arena_2,\st_2) \in\PoneStochArenas$;
assume that for some $\word\in\memLang{0}{\memState}$, we have w.l.o.g.\
$
\word\cl{\arena_1}{\st_1}{\rand} \pref \word\cl{\arena_2}{\st_2}{\rand}.
$
Then we automatically have that for all $\word'\in\memLang{0}{\memState}$, we have
$
\word'\cl{\arena_1}{\st_1}{\rand} \pref \word'\cl{\arena_2}{\st_2}{\rand},
$
which proves $\rand$-$\stoch$-$\Mmax$-monotony.

We now turn to $\rand$-$\stoch$-$\Mmax$-selectivity.
Let $\memState\in\IN$ and $(\arena_1, \st_1), (\arena_2, \st_2)\in\PoneStochArenas$ such that for $i\in\{1, 2\}$, $\colHat{\Hists(\arena_i, \st_i, \st_i)} \subseteq \memLang{\memState}{\memState}$.
Let $\word\in\memLang{0}{\memState}$.
Let $\arena$ be the arena $\arenaPref{((\arena_1, \st_1) \arenaJoin (\arena_2, \st_2))}{\word}{t}$ with merged state $t$.

Thanks to the structure of the memory skeleton, we can make the following key observation: any play in $\Plays(\arena, \st_0^\word)$ that visits $t$ infinitely many times has a maximal color exactly $\memState$; indeed, $\memState$ is a color appearing in $\word$, and if a color greater that $\memState$ is seen, the memory state cannot go back down to $\memState$, so $t$ cannot be visited again (it would contradict that every history from $t$ to $t$ is in $\memLang{\memState}{\memState}$).

Let $\strat_1\in\stratsType{1}{\rand}{\arena, t}$.
Our goal is to show that it is possible to do at least as well as $\word\prc{\arena}{t}{\strat_1}{\simPar}$ without the need to use actions both in $\arena_1$ and in $\arena_2$ at $t$.

We first assume that $\memState$ is even: visiting $t$ infinitely often is therefore winning for $\Pone$.
If there is a strategy that, from $t$, comes back to $t$ with probability $1$, then $\Pone$ can achieve the objective with probability $1$ by repeatedly going back to $t$.
The use of randomization at $t$ is not necessary for this strategy: since it goes back to $t$ with probability $1$, every action it may play allows going back to $t$ with probability $1$.
Thus, such a strategy does not need to use actions both in $\arena_1$ and in $\arena_2$, as every time it leaves $t$, it can play the same action and repeat the strategy until it reaches $t$ again.
The $\rand$-$\stoch$-$\Mmax$-selectivity is therefore satisfied, as $\word\cl{\arena_1}{\st_1}{\rand}$ or $\word\cl{\arena_2}{\st_2}{\rand}$ contains a strategy that wins with probability $1$, which is at least as good as $\word\prc{\arena}{t}{\strat_1}{\simPar}$.

Assume now that $\memState$ is odd or that there is no strategy that comes back to $t$ with probability $1$.
In the latter case, the probability to go back to $t$ from $t$ has a probability less than $1 - \varepsilon$ for some $\varepsilon>0$ for all strategies; therefore, visiting $t$ infinitely often necessarily has probability $0$ for all strategies.
We condition $\word\prc{\arena}{t}{\strat_1}{\simPar}$ over which part of the arena the play ends in: either it visits $t$ infinitely often (event $\GG\F t$), or it sticks to $\arena_1$ or $\arena_2$ without visiting $t$ from some point on (events $\F\GG \arena_1\setminus t$ and $\F\GG \arena_2\setminus t$).

We have
\[
\word\prc{\arena}{t}{\strat_1}{\simPar}
= \pr{\arena}{t}{\strat_1}{\GG\F t}\cdot \word\prc{\arena}{t}{\strat_1}{\simPar \mid \GG\F t} +
\sum_{i\in\{1, 2\}} \pr{\arena}{t}{\strat_1}{\F\GG \arena_i\setminus t}\cdot \word\prc{\arena}{t}{\strat_1}{\simPar \mid \F\GG \arena_i\setminus t}.
\]
If $\memState$ is odd, all infinite plays in $\GG\F t$ are losing; if all strategies visit $t$ infinitely often with probability $0$, then $\pr{\arena}{t}{\strat_1}{\GG\F t} = 0$: in any case, the first term is $0$.

We focus on the last two terms.
For $i\in\{1, 2\}$, if the play stays in $\arena_i\setminus t$ from some point onward, as the value is independent from the actual prefix before the last visit to $t$ (by property~\eqref{eq:MPrefInd}), it means that it is possible to reach the same value while never going to $\arena_{3 - i}$.
That is, there exists a strategy $\strat_1^i\in\stratsType{1}{\rand}{\arena_i, t}$ such that
\[
\word\prc{\arena}{t}{\strat_1}{\simPar \mid \F\GG \arena_i\setminus t} \le \word\prc{\arena_i}{t}{\strat^i_1}{\simPar}.
\]
We do not prove it formally; a very similar argument can be found in the proof of~\cite[Theorem~4]{Gim07}: intuitively, it builds a strategy $\strat^i_1$ that induces a distribution on the projection of the plays of $(\arena, t)$ obtained by $\strat_1$ to plays of $(\arena_i, \st_i)$ (by removing the cycles on $t$ in $(\arena_{3 -i}, \st_{3-i})$).
Thus, if we play the best strategy among $\strat_1^1$, which obtains a value at least as good as the part that ends in $\arena_1\setminus t$, and $\strat_1^2$, which obtains a value at least as good as the part that ends in $\arena_2\setminus t$, what we obtain is something at least as good as the value obtained by $\strat_1$, without needing to consider actions both in $\arena_1$ and in $\arena_2$.

We have proven that $\simPar$ is $\rand$-$\stoch$-$\Mmax$-monotone and $\rand$-$\stoch$-$\Mmax$-selective; by Theorem~\ref{thm:monSel}, this implies that pure strategies based on $\Mmax$ are sufficient to play $\rand$-optimally in one-player stochastic arenas of $\Pone$.
The same arguments with the same memory skeleton holds from the point of view of $\Ptwo$.
As we have shown that both players' one-player arenas admit pure $\rand$-optimal strategies based on $\Mmax$, by Theorem~\ref{thm:1to2}, we conclude that both players have pure $\rand$-optimal (even $\rand$-SP) strategies based on $\Mmax\memProduct\Mmax$ (which corresponds to $\Mmax$) in all two-player stochastic arenas.

\subsection{Discounted sum with threshold}
Let $\colors = \IR$.
We consider the \emph{threshold problem for discounted sum}.
For $\discFac\in\intervaloo{0, 1}$, we take the payoff function $\disc$ as in Example~\ref{ex:objExamples}, but make it into a winning condition by setting a threshold at $0$, i.e., we define an event
\[
\wc = \{\word\in\IR^\omega \mid \disc(\word) \ge 0\}
\]
whose probability must be maximized by $\Pone$.

Memoryless strategies suffice in \emph{deterministic} arenas for $\wc$.
Indeed, it is sufficient to play the strategy that maximizes the value of the $\disc$ function: if this value is non-negative, then it means that it is possible to win for objective $\wc$, and if not, it is simply not possible to win.
Since a strategy that maximizes $\disc$ can be chosen to be memoryless~\cite{Sha53}, memoryless strategies also suffice for $\wc$.

That is not the case in \emph{stochastic} arenas: although pure memoryless strategies suffice to maximize the expected value of $\disc$~\cite{Sha53}, maximizing the probability of achieving $\wc$ may require some memory: intuitively, memory is necessary to assess how much risk should be taken.
We provide a formal proof that AIFM strategies are \emph{not} sufficient, even in one-player stochastic arenas of $\Pone$.
Let $\memSkel = \memSkelFull$ be any memory skeleton.
We are going to build an arena in which $\memSkel$ is not sufficient to play optimally.
Since $\memSkel$ is a memory skeleton, there are in particular only finitely many states reachable if we only read color $1$.
Therefore, there exist $n, m \ge 1$ with $n < m$ such that
\[
\memUpdHat(\memInit, \underbrace{1\ldots 1}_{\text{$n$ times}})
= \memUpdHat(\memInit, \underbrace{1\ldots 1}_{\text{$m$ times}}).
\]
Now, consider the arena in Figure~\ref{fig:disc}.
In this arena, $\Pone$ has just one choice to make in $\st_2$ among two actions $a$ and $b$, after reading either $n$ or $m$ times the color $1$.
If $\Pone$ has seen $n$ times the color $1$, then playing $a$ wins with probability $\frac{1}{2}$, but playing $b$ is a sure way to lose.
If $\Pone$ has seen $m$ times the color $1$, then playing $a$ still has probability $\frac{1}{2}$ to win, whereas playing $b$ is a sure way to win (the discounted sum ends up being exactly $0$).
It is therefore possible for $\Pone$ to win with probability $\frac{3}{4}$ by playing $a$ if color $1$ has been seen exactly $n$ times, and $b$ otherwise.

\begin{figure}[tbh]
	\centering
	\begin{tikzpicture}[every node/.style={font=\small,inner sep=1pt}]
		\draw (0,0) node[rond] (s1) {$\st_1$};
		\draw ($(s1)+(1,0)$) node[rond, fill=black,minimum size=3pt] (s1b) {};
		\draw ($(s1b)+(3,0)$) node[rond] (s2) {$\st_2$};
		\draw ($(s2)+(1.5,0.6)$) node[rond, fill=black,minimum size=3pt] (s21) {};
		\draw ($(s21)+(1,1)$) node[rond,minimum size=12pt] (s2c) {};
		\draw ($(s1b)+(1,0.5)$) node[rond,minimum size=12pt] (s1c) {};
		\draw ($(s1b)+(1,-.5)$) node[rond,minimum size=12pt] (s1d) {};
		\draw ($(s21)+(3,0)$) node[rond,minimum size=12pt] (s31) {};
		\draw ($(s2)+(4.5,-.6)$) node[rond,minimum size=12pt] (s32) {};
		\draw ($(s1)-(0.75,0)$) edge[-latex'] (s1);
		\draw (s1) edge[-latex'] node[above=2pt] {$1$}(s1b);
		\draw (s1b) edge[-latex'] node[above=3pt] {$\frac{1}{2}$} (s1c);
		\draw (s1b) edge[-latex'] node[below=3pt] {$\frac{1}{2}$} (s1d);
		\draw (s1c) edge[decorate,-latex',bend left=12] node[above=5pt] {$\overbrace{1\ldots1}^{\text{$n-1$ times}}$} (s2);
		\draw (s1d) edge[decorate,-latex',bend left=-12] node[below=5pt] {$\underbrace{1\ldots1}_{\text{$m-1$ times}}$} (s2);
		\draw (s2) edge[-latex'] node[above=2pt,xshift=-2pt] {$a\mid 0$} (s21);
		\draw (s2) edge[-latex',bend left=-12] node[below=2pt] {$b\mid-\discFac^{-m}\cdot(\sum_{i=0}^{m-1} \discFac^i)$} (s32);
		\draw (s21) edge[-latex'] node[below=2pt] {$\frac{1}{2}$} (s31);
		\draw (s21) edge[-latex'] node[above left=1pt] {$\frac{1}{2}$} (s2c);
		\draw (s2c) edge[-latex'] node[above right,xshift=-2pt] {$-\discFac^{-(m+1)}\cdot(1 + \sum_{i=0}^{m-1} \discFac^i)$} (s31);
		\draw (s31) edge[-latex',out=30,in=-30,distance=10pt] node[right=2pt] {$0$} (s31);
		\draw (s32) edge[-latex',out=30,in=-30,distance=10pt] node[right=2pt] {$0$} (s32);
	\end{tikzpicture}
	\caption{Strategies based on $\memSkel$ do not suffice to play optimally (even with randomization).
		Squiggly arrows indicate a sequence of transitions.}\label{fig:disc}
\end{figure}

This example might seem surprising, as if we can observe which transition has been taken at the first step, then two memory states seem sufficient.
What the whole reasoning shows is that there is no way to define a memory skeleton (which means that it can only be based on colors, and not on actual transitions) that suffices to play optimally on all arenas, while requiring only finitely many states for each individual arena.

\section{Conclusion}
We have studied stochastic games and gave an overview of desirable properties of preference relations that admit \emph{pure arena-independent finite-memory optimal strategies}.
Our analysis provides general tools to help study memory requirements in stochastic games, both with one player (Markov decision processes) and two players, and links both problems.
It generalizes both work on deterministic games~\cite{GZ05,BLORV22} and work on stochastic games~\cite{GZ09}.

We finally highlight a few remaining research directions to which our work does not yet give answers.
\begin{itemize}
	\item A natural question that remains unsolved is the link between memory requirements of a preference relation in deterministic and in stochastic games; our results can be called independently to study both problems, but do not describe a bridge to go from one to the other yet.
	\item Our results can only be used to show the optimality of \emph{pure} strategies with some fixed memory.
	For objectives expressible with a real payoff function, it is known that pure strategies always suffice for \emph{$\varepsilon$-optimality}~\cite[Theorem~4]{CDGH10} (the problem appears open for optimality if optimal strategies exist).
	This means that pure strategies suffice for many reasonable objectives.
	Still, in some cases, using \emph{randomized} strategies allows for lesser memory requirements~\cite{CAH04,Hor09,MPR20}.
	Investigating whether extensions to our results dealing with randomized strategies hold would therefore be valuable, but a first limit to such extensions is given by the example of Section~\ref{sec:noLiftRandomized}.
	\item Our main results in Sections~\ref{sec:1to2} and~\ref{sec:1p} deal with preference relations in which \emph{both} players have pure AIFM optimal strategies in two-player games, but not when a \emph{single} player has such.
	Sufficient conditions for pure memoryless optimal strategies for a single player were given in~\cite{GK14}, and an elegant characterization for memoryless optimal strategies in \emph{deterministic} games was given in~\cite{Ohl23}.
	Whether an interesting characterization can be obtained in stochastic games (already for memoryless strategies, but also for AIFM strategies) remains open.
	\item Even though we give ways to prove that a memory skeleton suffices in one-player or two-player games, our work does not provide a way to infer a sufficient memory skeleton (minimal or not).
	In deterministic games, there are works giving ways to compute minimal memory requirements, but usually for specific classes of objectives~\cite{DJW97,Hor09,CFH14}.
	For instance, even though we believe that the notions of monotony and selectivity from Section~\ref{sec:1p} bring insight, we leave as future work the question of whether they are ``decidable'' for reasonable classes of objectives.
	This could be a first step in providing a way to compute minimal memory skeletons.
\end{itemize}

\bibliography{fm}

\newcommand{\etalchar}[1]{$^{#1}$}
\begin{thebibliography}{CKWW20}

\bibitem[AR17]{AR17}
Benjamin Aminof and Sasha Rubin.
\newblock First-cycle games.
\newblock {\em Inf. Comput.}, 254:195--216, 2017.
\newblock \href {https://doi.org/10.1016/j.ic.2016.10.008} {\path{doi:10.1016/j.ic.2016.10.008}}.

\bibitem[BBE10]{BBE10}
Tom{\'{a}}s Br{\'{a}}zdil, V{\'{a}}clav Brozek, and Kousha Etessami.
\newblock One-counter stochastic games.
\newblock In Kamal Lodaya and Meena Mahajan, editors, {\em {IARCS} Annual Conference on Foundations of Software Technology and Theoretical Computer Science, {FSTTCS} 2010, December 15-18, 2010, Chennai, India}, volume~8 of {\em LIPIcs}, pages 108--119. Schloss Dagstuhl -- Leibniz-Zentrum f{\"{u}}r Informatik, 2010.
\newblock URL: \url{http://drops.dagstuhl.de/opus/portals/extern/index.php?semnr=10007}, \href {https://doi.org/10.4230/LIPIcs.FSTTCS.2010.108} {\path{doi:10.4230/LIPIcs.FSTTCS.2010.108}}.

\bibitem[BDOR20]{BDOR20}
Thomas Brihaye, Florent Delgrange, Youssouf Oualhadj, and Mickael Randour.
\newblock Life is random, time is not: {M}arkov decision processes with window objectives.
\newblock {\em Log. Methods Comput. Sci.}, 16(4), 2020.
\newblock \href {https://doi.org/10.23638/LMCS-16(4:13)2020} {\path{doi:10.23638/LMCS-16(4:13)2020}}.

\bibitem[BFL{\etalchar{+}}08]{BFLMS08}
Patricia Bouyer, Ulrich Fahrenberg, Kim~G. Larsen, Nicolas Markey, and Jir{\'{\i}} Srba.
\newblock Infinite runs in weighted timed automata with energy constraints.
\newblock In Franck Cassez and Claude Jard, editors, {\em Formal Modeling and Analysis of Timed Systems, 6th International Conference, {FORMATS} 2008, Saint Malo, France, September 15-17, 2008. Proceedings}, volume 5215 of {\em Lecture Notes in Computer Science}, pages 33--47. Springer, 2008.
\newblock \href {https://doi.org/10.1007/978-3-540-85778-5_4} {\path{doi:10.1007/978-3-540-85778-5_4}}.

\bibitem[BFMM11]{BFMM11}
Alessandro Bianco, Marco Faella, Fabio Mogavero, and Aniello Murano.
\newblock Exploring the boundary of half-positionality.
\newblock {\em Ann. Math. Artif. Intell.}, 62(1-2):55--77, 2011.
\newblock \href {https://doi.org/10.1007/s10472-011-9250-1} {\path{doi:10.1007/s10472-011-9250-1}}.

\bibitem[BFRR17]{DBLP:journals/iandc/BruyereFRR17}
V{\'{e}}ronique Bruy{\`{e}}re, Emmanuel Filiot, Mickael Randour, and Jean{-}Fran{\c{c}}ois Raskin.
\newblock Meet your expectations with guarantees: Beyond worst-case synthesis in quantitative games.
\newblock {\em Inf. Comput.}, 254:259--295, 2017.
\newblock \href {https://doi.org/10.1016/j.ic.2016.10.011} {\path{doi:10.1016/j.ic.2016.10.011}}.

\bibitem[BHM{\etalchar{+}}17]{BHMRZ17}
Patricia Bouyer, Piotr Hofman, Nicolas Markey, Mickael Randour, and Martin Zimmermann.
\newblock Bounding average-energy games.
\newblock In Javier Esparza and Andrzej~S. Murawski, editors, {\em Foundations of Software Science and Computation Structures -- 20th International Conference, {FOSSACS} 2017, Held as Part of the European Joint Conferences on Theory and Practice of Software, {ETAPS} 2017, Uppsala, Sweden, April 22-29, 2017, Proceedings}, volume 10203 of {\em Lecture Notes in Computer Science}, pages 179--195, 2017.
\newblock \href {https://doi.org/10.1007/978-3-662-54458-7_11} {\path{doi:10.1007/978-3-662-54458-7_11}}.

\bibitem[BHR16]{BHR16}
V{\'{e}}ronique Bruy{\`{e}}re, Quentin Hautem, and Mickael Randour.
\newblock Window parity games: an alternative approach toward parity games with time bounds.
\newblock In Domenico Cantone and Giorgio Delzanno, editors, {\em Proceedings of the Seventh International Symposium on Games, Automata, Logics and Formal Verification, GandALF 2016, Catania, Italy, 14-16 September 2016}, volume 226 of {\em {EPTCS}}, pages 135--148, 2016.
\newblock \href {https://doi.org/10.4204/EPTCS.226.10} {\path{doi:10.4204/EPTCS.226.10}}.

\bibitem[BHRR19]{BHRR19}
V{\'{e}}ronique Bruy{\`{e}}re, Quentin Hautem, Mickael Randour, and Jean{-}Fran{\c{c}}ois Raskin.
\newblock Energy mean-payoff games.
\newblock In Wan Fokkink and Rob van Glabbeek, editors, {\em 30th International Conference on Concurrency Theory, {CONCUR} 2019, August 27-30, 2019, Amsterdam, the Netherlands}, volume 140 of {\em LIPIcs}, pages 21:1--21:17. Schloss Dagstuhl -- Leibniz-Zentrum f{\"{u}}r Informatik, 2019.
\newblock URL: \url{http://www.dagstuhl.de/dagpub/978-3-95977-121-4}, \href {https://doi.org/10.4230/LIPIcs.CONCUR.2019.21} {\path{doi:10.4230/LIPIcs.CONCUR.2019.21}}.

\bibitem[BK08]{BK08}
Christel Baier and Joost{-}Pieter Katoen.
\newblock {\em Principles of model checking}.
\newblock {MIT} Press, 2008.

\bibitem[BLO{\etalchar{+}}22]{BLORV22}
Patricia Bouyer, St{\'{e}}phane {Le Roux}, Youssouf Oualhadj, Mickael Randour, and Pierre Vandenhove.
\newblock Games where you can play optimally with arena-independent finite memory.
\newblock {\em Log. Methods Comput. Sci.}, 18(1), 2022.
\newblock \href {https://doi.org/10.46298/lmcs-18(1:11)2022} {\path{doi:10.46298/lmcs-18(1:11)2022}}.

\bibitem[BMR{\etalchar{+}}18]{DBLP:journals/acta/BouyerMRLL18}
Patricia Bouyer, Nicolas Markey, Mickael Randour, Kim~G. Larsen, and Simon Laursen.
\newblock Average-energy games.
\newblock {\em Acta Informatica}, 55(2):91--127, 2018.
\newblock \href {https://doi.org/10.1007/s00236-016-0274-1} {\path{doi:10.1007/s00236-016-0274-1}}.

\bibitem[BORV21]{BORV21Conf}
Patricia Bouyer, Youssouf Oualhadj, Mickael Randour, and Pierre Vandenhove.
\newblock Arena-independent finite-memory determinacy in stochastic games.
\newblock In Serge Haddad and Daniele Varacca, editors, {\em 32nd International Conference on Concurrency Theory, {CONCUR} 2021, August 24-27, 2021, Virtual Conference}, volume 203 of {\em LIPIcs}, pages 26:1--26:18. Schloss Dagstuhl -- Leibniz-Zentrum f{\"{u}}r Informatik, 2021.
\newblock \href {https://doi.org/10.4230/LIPIcs.CONCUR.2021.26} {\path{doi:10.4230/LIPIcs.CONCUR.2021.26}}.

\bibitem[BRR17]{DBLP:conf/icalp/BerthonRR17}
Rapha{\"{e}}l Berthon, Mickael Randour, and Jean{-}Fran{\c{c}}ois Raskin.
\newblock Threshold constraints with guarantees for parity objectives in {M}arkov decision processes.
\newblock In Ioannis Chatzigiannakis, Piotr Indyk, Fabian Kuhn, and Anca Muscholl, editors, {\em 44th International Colloquium on Automata, Languages, and Programming, {ICALP} 2017, July 10-14, 2017, Warsaw, Poland}, volume~80 of {\em LIPIcs}, pages 121:1--121:15. Schloss Dagstuhl -- Leibniz-Zentrum f{\"{u}}r Informatik, 2017.
\newblock \href {https://doi.org/10.4230/LIPIcs.ICALP.2017.121} {\path{doi:10.4230/LIPIcs.ICALP.2017.121}}.

\bibitem[CD12]{CD12b}
Krishnendu Chatterjee and Laurent Doyen.
\newblock Energy parity games.
\newblock {\em Theor. Comput. Sci.}, 458:49--60, 2012.
\newblock \href {https://doi.org/10.1016/j.tcs.2012.07.038} {\path{doi:10.1016/j.tcs.2012.07.038}}.

\bibitem[CD16]{CD16}
Krishnendu Chatterjee and Laurent Doyen.
\newblock Perfect-information stochastic games with generalized mean-payoff objectives.
\newblock In Martin Grohe, Eric Koskinen, and Natarajan Shankar, editors, {\em Proceedings of the 31st Annual {ACM/IEEE} Symposium on Logic in Computer Science, {LICS}'16, New York, NY, USA, July 5-8, 2016}, pages 247--256. {ACM}, 2016.
\newblock \href {https://doi.org/10.1145/2933575.2934513} {\path{doi:10.1145/2933575.2934513}}.

\bibitem[CdAH04]{CAH04}
Krishnendu Chatterjee, Luca de~Alfaro, and Thomas~A. Henzinger.
\newblock Trading memory for randomness.
\newblock In {\em 1st International Conference on Quantitative Evaluation of Systems {(QEST} 2004), 27-30 September 2004, Enschede, The Netherlands}, pages 206--217. {IEEE} Computer Society, 2004.
\newblock URL: \url{https://ieeexplore.ieee.org/xpl/conhome/9341/proceeding}, \href {https://doi.org/10.1109/QEST.2004.1348035} {\path{doi:10.1109/QEST.2004.1348035}}.

\bibitem[CDGH10]{CDGH10}
Krishnendu Chatterjee, Laurent Doyen, Hugo Gimbert, and Thomas~A. Henzinger.
\newblock Randomness for free.
\newblock In Petr Hlinen{\'{y}} and Anton{\'i}n Kucera, editors, {\em Mathematical Foundations of Computer Science 2010, 35th International Symposium, {MFCS} 2010, Brno, Czech Republic, August 23-27, 2010. Proceedings}, volume 6281 of {\em Lecture Notes in Computer Science}, pages 246--257. Springer, 2010.
\newblock \href {https://doi.org/10.1007/978-3-642-15155-2_23} {\path{doi:10.1007/978-3-642-15155-2_23}}.

\bibitem[CFH14]{CFH14}
Thomas Colcombet, Nathana{\"{e}}l Fijalkow, and Florian Horn.
\newblock Playing safe.
\newblock In Venkatesh Raman and S.~P. Suresh, editors, {\em 34th International Conference on Foundation of Software Technology and Theoretical Computer Science, {FSTTCS} 2014, December 15-17, 2014, New Delhi, India}, volume~29 of {\em LIPIcs}, pages 379--390. Schloss Dagstuhl -- Leibniz-Zentrum f{\"{u}}r Informatik, 2014.
\newblock \href {https://doi.org/10.4230/LIPIcs.FSTTCS.2014.379} {\path{doi:10.4230/LIPIcs.FSTTCS.2014.379}}.

\bibitem[CFK{\etalchar{+}}12]{CFKSTU12}
Taolue Chen, Vojtech Forejt, Marta~Z. Kwiatkowska, Aistis Simaitis, Ashutosh Trivedi, and Michael Ummels.
\newblock Playing stochastic games precisely.
\newblock In Maciej Koutny and Irek Ulidowski, editors, {\em 23rd International Conference on Concurrency Theory, {CONCUR} 2012, Newcastle upon Tyne, UK, September 4-7, 2012. Proceedings}, volume 7454 of {\em Lecture Notes in Computer Science}, pages 348--363. Springer, 2012.
\newblock \href {https://doi.org/10.1007/978-3-642-32940-1_25} {\path{doi:10.1007/978-3-642-32940-1_25}}.

\bibitem[CFK{\etalchar{+}}13]{CFKSW13}
Taolue Chen, Vojtech Forejt, Marta~Z. Kwiatkowska, Aistis Simaitis, and Clemens Wiltsche.
\newblock On stochastic games with multiple objectives.
\newblock In Krishnendu Chatterjee and Jir{\'{\i}} Sgall, editors, {\em Mathematical Foundations of Computer Science 2013 -- 38th International Symposium, {MFCS} 2013, Klosterneuburg, Austria, August 26-30, 2013. Proceedings}, volume 8087 of {\em Lecture Notes in Computer Science}, pages 266--277. Springer, 2013.
\newblock \href {https://doi.org/10.1007/978-3-642-40313-2_25} {\path{doi:10.1007/978-3-642-40313-2_25}}.

\bibitem[CH12]{CH12}
Krishnendu Chatterjee and Thomas~A. Henzinger.
\newblock A survey of stochastic {\(\omega\)}-regular games.
\newblock {\em J. Comput. Syst. Sci.}, 78(2):394--413, 2012.
\newblock \href {https://doi.org/10.1016/j.jcss.2011.05.002} {\path{doi:10.1016/j.jcss.2011.05.002}}.

\bibitem[Cha12]{Cha12}
Krishnendu Chatterjee.
\newblock The complexity of stochastic {M}{\"{u}}ller games.
\newblock {\em Inf. Comput.}, 211:29--48, 2012.
\newblock \href {https://doi.org/10.1016/j.ic.2011.11.004} {\path{doi:10.1016/j.ic.2011.11.004}}.

\bibitem[CHP07]{CHP07}
Krishnendu Chatterjee, Thomas~A. Henzinger, and Nir Piterman.
\newblock Generalized parity games.
\newblock In Helmut Seidl, editor, {\em Foundations of Software Science and Computational Structures, 10th International Conference, {FOSSACS} 2007, Held as Part of the Joint European Conferences on Theory and Practice of Software, {ETAPS} 2007, Braga, Portugal, March 24-April 1, 2007, Proceedings}, volume 4423 of {\em Lecture Notes in Computer Science}, pages 153--167. Springer, 2007.
\newblock \href {https://doi.org/10.1007/978-3-540-71389-0_12} {\path{doi:10.1007/978-3-540-71389-0_12}}.

\bibitem[CJH04]{CJH04}
Krishnendu Chatterjee, Marcin Jurdzi\'nski, and Thomas~A. Henzinger.
\newblock Quantitative stochastic parity games.
\newblock In J.~Ian Munro, editor, {\em Proceedings of the Fifteenth Annual {ACM-SIAM} Symposium on Discrete Algorithms, {SODA} 2004, New Orleans, Louisiana, USA, January 11-14, 2004}, pages 121--130. {SIAM}, 2004.
\newblock URL: \url{http://dl.acm.org/citation.cfm?id=982792.982808}.

\bibitem[CKK17]{DBLP:journals/lmcs/ChatterjeeKK17}
Krishnendu Chatterjee, Zuzana Kret{\'{\i}}nsk{\'{a}}, and Jan Kret{\'{\i}}nsk{\'{y}}.
\newblock Unifying two views on multiple mean-payoff objectives in {M}arkov decision processes.
\newblock {\em Log. Methods Comput. Sci.}, 13(2), 2017.
\newblock \href {https://doi.org/10.23638/LMCS-13(2:15)2017} {\path{doi:10.23638/LMCS-13(2:15)2017}}.

\bibitem[CKWW20]{CKWW20}
Krishnendu Chatterjee, Joost{-}Pieter Katoen, Maximilian Weininger, and Tobias Winkler.
\newblock Stochastic games with lexicographic reachability-safety objectives.
\newblock In Shuvendu~K. Lahiri and Chao Wang, editors, {\em Computer Aided Verification -- 32nd International Conference, {CAV} 2020, Los Angeles, CA, USA, July 21-24, 2020, Proceedings, Part {II}}, volume 12225 of {\em Lecture Notes in Computer Science}, pages 398--420. Springer, 2020.
\newblock \href {https://doi.org/10.1007/978-3-030-53291-8_21} {\path{doi:10.1007/978-3-030-53291-8_21}}.

\bibitem[Con92]{Con92}
Anne Condon.
\newblock The complexity of stochastic games.
\newblock {\em Inf. Comput.}, 96(2):203--224, 1992.
\newblock \href {https://doi.org/10.1016/0890-5401(92)90048-K} {\path{doi:10.1016/0890-5401(92)90048-K}}.

\bibitem[CP19]{DBLP:conf/concur/ChatterjeeP19}
Krishnendu Chatterjee and Nir Piterman.
\newblock Combinations of qualitative winning for stochastic parity games.
\newblock In Wan~J. Fokkink and Rob van Glabbeek, editors, {\em 30th International Conference on Concurrency Theory, {CONCUR} 2019, August 27-30, 2019, Amsterdam, the Netherlands}, volume 140 of {\em LIPIcs}, pages 6:1--6:17. Schloss Dagstuhl -- Leibniz-Zentrum f{\"{u}}r Informatik, 2019.
\newblock \href {https://doi.org/10.4230/LIPIcs.CONCUR.2019.6} {\path{doi:10.4230/LIPIcs.CONCUR.2019.6}}.

\bibitem[CRR14]{CRR14}
Krishnendu Chatterjee, Mickael Randour, and Jean{-}Fran{\c{c}}ois Raskin.
\newblock Strategy synthesis for multi-dimensional quantitative objectives.
\newblock {\em Acta Inf.}, 51(3-4):129--163, 2014.
\newblock \href {https://doi.org/10.1007/s00236-013-0182-6} {\path{doi:10.1007/s00236-013-0182-6}}.

\bibitem[DJW97]{DJW97}
Stefan Dziembowski, Marcin Jurdzinski, and Igor Walukiewicz.
\newblock How much memory is needed to win infinite games?
\newblock In {\em Proceedings, 12th Annual {IEEE} Symposium on Logic in Computer Science, {LICS} 1997, Warsaw, Poland, June 29 -- July 2, 1997}, pages 99--110. {IEEE} Computer Society, 1997.
\newblock \href {https://doi.org/10.1109/LICS.1997.614939} {\path{doi:10.1109/LICS.1997.614939}}.

\bibitem[DKQR20]{DBLP:conf/tacas/DelgrangeKQR20}
Florent Delgrange, Joost{-}Pieter Katoen, Tim Quatmann, and Mickael Randour.
\newblock Simple strategies in multi-objective {MDP}s.
\newblock In Armin Biere and David Parker, editors, {\em Tools and Algorithms for the Construction and Analysis of Systems -- 26th International Conference, {TACAS} 2020, Held as Part of the European Joint Conferences on Theory and Practice of Software, {ETAPS} 2020, Dublin, Ireland, April 25-30, 2020, Proceedings, Part {I}}, volume 12078 of {\em Lecture Notes in Computer Science}, pages 346--364. Springer, 2020.
\newblock \href {https://doi.org/10.1007/978-3-030-45190-5_19} {\path{doi:10.1007/978-3-030-45190-5_19}}.

\bibitem[Dur19]{Dur19}
Rick Durrett.
\newblock {\em Probability: Theory and Examples}.
\newblock Cambridge Series in Statistical and Probabilistic Mathematics. Cambridge University Press, 5th edition, 2019.
\newblock \href {https://doi.org/10.1017/9781108591034} {\path{doi:10.1017/9781108591034}}.

\bibitem[EM79]{EM79}
Andrzej Ehrenfeucht and Jan Mycielski.
\newblock Positional strategies for mean payoff games.
\newblock {\em Int. Journal of Game Theory}, 8(2):109--113, 1979.
\newblock \href {https://doi.org/10.1007/BF01768705} {\path{doi:10.1007/BF01768705}}.

\bibitem[Gim07]{Gim07}
Hugo Gimbert.
\newblock Pure stationary optimal strategies in {M}arkov decision processes.
\newblock In Wolfgang Thomas and Pascal Weil, editors, {\em {STACS} 2007, 24th Annual Symposium on Theoretical Aspects of Computer Science, Aachen, Germany, February 22-24, 2007, Proceedings}, volume 4393 of {\em Lecture Notes in Computer Science}, pages 200--211. Springer, 2007.
\newblock \href {https://doi.org/10.1007/978-3-540-70918-3_18} {\path{doi:10.1007/978-3-540-70918-3_18}}.

\bibitem[GK14]{GK14}
Hugo Gimbert and Edon Kelmendi.
\newblock Submixing and shift-invariant stochastic games.
\newblock {\em CoRR}, abs/1401.6575, 2014.
\newblock \href {https://arxiv.org/abs/1401.6575} {\path{arXiv:1401.6575}}.

\bibitem[GZ05]{GZ05}
Hugo Gimbert and Wies{\l}aw Zielonka.
\newblock Games where you can play optimally without any memory.
\newblock In Mart{\'i}n Abadi and Luca de~Alfaro, editors, {\em 16th International Conference on Concurrency Theory, {CONCUR} 2005, San Francisco, CA, USA, August 23-26, 2005, Proceedings}, volume 3653 of {\em Lecture Notes in Computer Science}, pages 428--442. Springer, 2005.
\newblock \href {https://doi.org/10.1007/11539452_33} {\path{doi:10.1007/11539452_33}}.

\bibitem[GZ09]{GZ09}
Hugo Gimbert and Wies{\l}aw Zielonka.
\newblock {Pure and Stationary Optimal Strategies in Perfect-Information Stochastic Games with Global Preferences}.
\newblock Unpublished, 2009.
\newblock URL: \url{https://hal.archives-ouvertes.fr/hal-00438359}.

\bibitem[Hor09]{Hor09}
Florian Horn.
\newblock Random fruits on the {Z}ielonka tree.
\newblock In Susanne Albers and Jean{-}Yves Marion, editors, {\em 26th International Symposium on Theoretical Aspects of Computer Science, {STACS} 2009, February 26-28, 2009, Freiburg, Germany, Proceedings}, volume~3 of {\em LIPIcs}, pages 541--552. Schloss Dagstuhl -- Leibniz-Zentrum f{\"{u}}r Informatik, Germany, 2009.
\newblock \href {https://doi.org/10.4230/LIPIcs.STACS.2009.1848} {\path{doi:10.4230/LIPIcs.STACS.2009.1848}}.

\bibitem[Kop06]{Kop06}
Eryk Kopczy{\'n}ski.
\newblock Half-positional determinacy of infinite games.
\newblock In Michele Bugliesi, Bart Preneel, Vladimiro Sassone, and Ingo Wegener, editors, {\em Automata, Languages and Programming, 33rd International Colloquium, {ICALP} 2006, Venice, Italy, July 10-14, 2006, Proceedings, Part {II}}, volume 4052 of {\em Lecture Notes in Computer Science}, pages 336--347. Springer, 2006.
\newblock \href {https://doi.org/10.1007/11787006_29} {\path{doi:10.1007/11787006_29}}.

\bibitem[Kop08]{KopThesis}
Eryk Kopczy{\'n}ski.
\newblock {\em Half-positional Determinacy of Infinite Games}.
\newblock PhD thesis, Warsaw University, 2008.

\bibitem[LP18]{LP18}
St{\'{e}}phane {Le Roux} and Arno Pauly.
\newblock Extending finite-memory determinacy to multi-player games.
\newblock {\em Inf. Comput.}, 261(Part):676--694, 2018.
\newblock \href {https://doi.org/10.1016/j.ic.2018.02.024} {\path{doi:10.1016/j.ic.2018.02.024}}.

\bibitem[LPR18]{LPR18}
St{\'{e}}phane {Le Roux}, Arno Pauly, and Mickael Randour.
\newblock Extending finite-memory determinacy by {B}oolean combination of winning conditions.
\newblock In Sumit Ganguly and Paritosh~K. Pandya, editors, {\em 38th {IARCS} Annual Conference on Foundations of Software Technology and Theoretical Computer Science, {FSTTCS} 2018, December 11-13, 2018, Ahmedabad, India}, volume 122 of {\em LIPIcs}, pages 38:1--38:20. Schloss Dagstuhl -- Leibniz-Zentrum f{\"{u}}r Informatik, 2018.
\newblock URL: \url{http://www.dagstuhl.de/dagpub/978-3-95977-093-4}, \href {https://doi.org/10.4230/LIPIcs.FSTTCS.2018.38} {\path{doi:10.4230/LIPIcs.FSTTCS.2018.38}}.

\bibitem[Mar75]{Mar75}
Donald~A. Martin.
\newblock {B}orel determinacy.
\newblock {\em Annals of Mathematics}, pages 363--371, 1975.

\bibitem[Mos84]{Mos84}
Andrzej~W. Mostowski.
\newblock Regular expressions for infinite trees and a standard form of automata.
\newblock In Andrzej Skowron, editor, {\em Computation Theory -- Fifth Symposium, Zabor{\'{o}}w, Poland, December 3-8, 1984, Proceedings}, volume 208 of {\em Lecture Notes in Computer Science}, pages 157--168. Springer, 1984.
\newblock \href {https://doi.org/10.1007/3-540-16066-3_15} {\path{doi:10.1007/3-540-16066-3_15}}.

\bibitem[MPR20]{MPR20}
Benjamin Monmege, Julie Parreaux, and Pierre{-}Alain Reynier.
\newblock Reaching your goal optimally by playing at random with no memory.
\newblock In Igor Konnov and Laura Kov{\'{a}}cs, editors, {\em 31st International Conference on Concurrency Theory, {CONCUR} 2020, September 1-4, 2020, Vienna, Austria (Virtual Conference)}, LIPIcs, pages 26:1--26:21. Schloss Dagstuhl -- Leibniz-Zentrum f{\"{u}}r Informatik, 2020.
\newblock \href {https://doi.org/10.4230/LIPIcs.CONCUR.2020.26} {\path{doi:10.4230/LIPIcs.CONCUR.2020.26}}.

\bibitem[MR22]{MR22}
James C.~A. Main and Mickael Randour.
\newblock Different strokes in randomised strategies: Revisiting {K}uhn's theorem under finite-memory assumptions.
\newblock In Bartek Klin, S{\l}awomir Lasota, and Anca Muscholl, editors, {\em 33rd International Conference on Concurrency Theory, {CONCUR} 2022, September 12-16, 2022, Warsaw, Poland}, volume 243 of {\em LIPIcs}, pages 22:1--22:18. Schloss Dagstuhl -- Leibniz-Zentrum f{\"{u}}r Informatik, 2022.
\newblock \href {https://doi.org/10.4230/LIPIcs.CONCUR.2022.22} {\path{doi:10.4230/LIPIcs.CONCUR.2022.22}}.

\bibitem[MSTW17]{MSTW17}
Richard Mayr, Sven Schewe, Patrick Totzke, and Dominik Wojtczak.
\newblock {MDP}s with energy-parity objectives.
\newblock In {\em 32nd Annual {ACM/IEEE} Symposium on Logic in Computer Science, {LICS} 2017, Reykjavik, Iceland, June 20-23, 2017}, pages 1--12. {IEEE} Computer Society, 2017.
\newblock URL: \url{https://ieeexplore.ieee.org/xpl/conhome/7999337/proceeding}, \href {https://doi.org/10.1109/LICS.2017.8005131} {\path{doi:10.1109/LICS.2017.8005131}}.

\bibitem[MSTW21]{MSTW21}
Richard Mayr, Sven Schewe, Patrick Totzke, and Dominik Wojtczak.
\newblock Simple stochastic games with almost-sure energy-parity objectives are in {NP} and co{NP}.
\newblock In Stefan Kiefer and Christine Tasson, editors, {\em Foundations of Software Science and Computation Structures -- 24th International Conference, {FOSSACS} 2021, Held as Part of the European Joint Conferences on Theory and Practice of Software, {ETAPS} 2021, Luxembourg City, Luxembourg, March 27 -- April 1, 2021, Proceedings}, volume 12650 of {\em Lecture Notes in Computer Science}, pages 427--447. Springer, 2021.
\newblock \href {https://doi.org/10.1007/978-3-030-71995-1_22} {\path{doi:10.1007/978-3-030-71995-1_22}}.

\bibitem[Ohl23]{Ohl23}
Pierre Ohlmann.
\newblock Characterizing positionality in games of infinite duration over infinite graphs.
\newblock {\em Theoreti{CS}}, 2, 2023.
\newblock \href {https://doi.org/10.46298/theoretics.23.3} {\path{doi:10.46298/theoretics.23.3}}.

\bibitem[Osb04]{Osb04}
Martin~J. Osborne.
\newblock {\em An introduction to game theory}.
\newblock Oxford University Press, 2004.

\bibitem[Put94]{Put94}
Martin~L. Puterman.
\newblock {\em {M}arkov Decision Processes: Discrete Stochastic Dynamic Programming}.
\newblock Wiley Series in Probability and Statistics. Wiley, 1994.
\newblock \href {https://doi.org/10.1002/9780470316887} {\path{doi:10.1002/9780470316887}}.

\bibitem[Ran13]{rECCS}
Mickael Randour.
\newblock Automated synthesis of reliable and efficient systems through game theory: A case study.
\newblock In {\em Proc. of ECCS 2012}, Springer Proceedings in Complexity XVII, pages 731--738. Springer, 2013.
\newblock \href {https://doi.org/10.1007/978-3-319-00395-5_90} {\path{doi:10.1007/978-3-319-00395-5_90}}.

\bibitem[RRS15]{DBLP:conf/vmcai/RandourRS15}
Mickael Randour, Jean{-}Fran{\c{c}}ois Raskin, and Ocan Sankur.
\newblock Variations on the stochastic shortest path problem.
\newblock In Deepak D'Souza, Akash Lal, and Kim~G. Larsen, editors, {\em Verification, Model Checking, and Abstract Interpretation -- 16th International Conference, {VMCAI} 2015, Mumbai, India, January 12-14, 2015. Proceedings}, volume 8931 of {\em Lecture Notes in Computer Science}, pages 1--18. Springer, 2015.
\newblock \href {https://doi.org/10.1007/978-3-662-46081-8_1} {\path{doi:10.1007/978-3-662-46081-8_1}}.

\bibitem[RRS17]{RRS17}
Mickael Randour, Jean{-}Fran{\c{c}}ois Raskin, and Ocan Sankur.
\newblock Percentile queries in multi-dimensional {M}arkov decision processes.
\newblock {\em Formal Methods Syst. Des.}, 50(2-3):207--248, 2017.
\newblock \href {https://doi.org/10.1007/s10703-016-0262-7} {\path{doi:10.1007/s10703-016-0262-7}}.

\bibitem[Sha53]{Sha53}
L.~S. Shapley.
\newblock Stochastic games.
\newblock {\em Proceedings of the National Academy of Sciences}, 39(10):1095--1100, 1953.
\newblock \href {https://doi.org/10.1073/pnas.39.10.1095} {\path{doi:10.1073/pnas.39.10.1095}}.

\bibitem[Tho08]{Tho08}
Wolfgang Thomas.
\newblock Church's problem and a tour through automata theory.
\newblock In Arnon Avron, Nachum Dershowitz, and Alexander Rabinovich, editors, {\em Pillars of Computer Science, Essays Dedicated to Boris (Boaz) Trakhtenbrot on the Occasion of His 85th Birthday}, volume 4800 of {\em Lecture Notes in Computer Science}, pages 635--655. Springer, 2008.
\newblock \href {https://doi.org/10.1007/978-3-540-78127-1_35} {\path{doi:10.1007/978-3-540-78127-1_35}}.

\bibitem[VCD{\etalchar{+}}15]{DBLP:journals/iandc/VelnerC0HRR15}
Yaron Velner, Krishnendu Chatterjee, Laurent Doyen, Thomas~A. Henzinger, Alexander~Moshe Rabinovich, and Jean{-}Fran{\c{c}}ois Raskin.
\newblock The complexity of multi-mean-payoff and multi-energy games.
\newblock {\em Inf. Comput.}, 241:177--196, 2015.
\newblock \href {https://doi.org/10.1016/j.ic.2015.03.001} {\path{doi:10.1016/j.ic.2015.03.001}}.

\end{thebibliography}
\bibliographystyle{alphaurl}

\newpage
\appendix

\section{Proof of Lemma~\ref{lem:memProd}}\label{sec:memProdProof}
We restate and prove Lemma~\ref{lem:memProd} about the links between strategies with memory on an arena and memoryless strategies on a product arena.

\memProd*

\begin{proof}
	This proof goes through multiple steps, which all rely on establishing a correspondence between properties of $(\arena, \initStates)$ and $\prodAS{(\arena, \initStates)}{\memSkel}$.
	We first establish a bijection $\pathFun$ between their finite histories and then a bijection $\stratFun$ between their strategies.
	This is sufficient to show that the $\UCol_\pref^\stratType$ operators are preserved through $\stratFun$, which shows that $\stratType$-optimality is preserved through $\stratFun$.
	It is then left to show that $\stratFun(\strat_1)$ corresponds to $\memNxt$.

	We first define a bijection \[\pathFun\colon \Hists(\arena, \initStates)\to \Hists(\prodAS{(\arena, \initStates)}{\memSkel}).\]
	Let $\hist = \usualHist\in\Hists(\arena, \initStates)$.
	We set $\memState_0 = \memInit$, and for $1\le j\le n$, $\memState_j = \memUpd(\memState_{j-1}, \col{\st_{j-1},\action_j})$.
	We define $\pathFun(\hist) = (\st_0, \memState_0)\action_1(\st_1, \memState_1)\ldots\action_n(\st_n, \memState_n)$.
	Notice that $\colHat{\pathFun(\hist)} = \colHat{\hist}$.
	Furthermore, $\pathFun$ is bijective; as the initial state of the memory $\memInit$ is fixed and the memory skeleton is deterministic, the memory states added to $\hist$ to obtain $\pathFun(\hist)$ are uniquely determined.

	We now show that there is a correspondence between strategies of $\stratsType{i}{\stratType}{\arena, \initStates}$ and strategies of $\stratsType{i}{\stratType}{\prodAS{(\arena, \initStates)}{\memSkel}}$:
	intuitively, augmenting the arena with the skeleton allows some strategies to be played using less memory, but does not fundamentally change each player's possibilities.
	We define a function $\stratFun\colon \stratsType{i}{\stratType}{\arena, \initStates} \to \stratsType{i}{\stratType}{\prodAS{(\arena, \initStates)}{\memSkel}}$.
	For $\stratBis_i \in \stratsType{i}{\stratType}{\arena, \initStates}$, $\hist'\in\Hists_i(\prodAS{(\arena, \initStates)}{\memSkel})$, we define $\stratFun(\stratBis_i)(\hist') = \stratBis_i(\inverse{\pathFun}(\hist'))$ (exploiting that actions are the same in $(\arena, \initStates)$ as in $\prodAS{(\arena, \initStates)}{\memSkel}$).
	Function $\stratFun$ is bijective (for $\stratBis_i'\in\stratsType{i}{\stratType}{\prodAS{(\arena, \initStates)}{\memSkel}}$, its inverse $\inverse{\stratFun}$ can be specified as $\inverse{\stratFun}(\stratBis_i') = \stratBis_i' \circ \pathFun$).
	Moreover, it preserves the pure/randomized and the finite-memory/infinite-memory features of the strategies.

	We observe the following fact%
	\footnote{Remember that our preference relation is defined over distributions over \emph{sequences of colors}, and Equation~\eqref{eq:prcPreserved} compares two such distributions.} 
    about $\stratFun$: for all $\st\in\initStates$, for all $\stratBis_1\in \stratsType{1}{\stratType}{\arena, \initStates}, \stratBis_2 \in\stratsType{2}{\stratType}{\arena,\initStates}$,
	\begin{equation}\label{eq:prcPreserved}
		\prcSolo{\arena}{\st}{\stratBis_1, \stratBis_2} =
		\prcSoloLocalFix{\prodAS{(\arena, \st)}{\memSkel}}{\stratFun(\stratBis_1), \stratFun(\stratBis_2)}.
	\end{equation}
	It can easily be proven by induction that these probability distributions match on all cylinders (as they always induce the same distributions on the actions and on the colors after corresponding histories $\hist$ and $\pathFun(\hist)$), hence they are equal.

	Now let $\stratBis_1\in\stratsType{1}{\stratType}{\arena, \initStates}$.
	We notice that for all $\st\in\initStates$,
	\begin{align*}
		\UCol_\pref^\stratType(\arena, \st, \stratBis_1)
		&= \{ \dist \in \Dist(\colors^\omega, \oalg) \mid \exists \stratBis_2 \in \stratsType{2}{\stratType}{\arena, \initStates},\,
		\prcSolo{\arena}{\st}{\stratBis_1, \stratBis_2} \pref \dist\} \\
		&= \{ \dist \in \Dist(\colors^\omega, \oalg) \mid \exists \stratBis_2 \in \stratsType{2}{\stratType}{\arena, \initStates},\,
		\prcSolo{\prodAS{\arena}{\memSkel}}{(\st,\memInit)}{\stratFun(\stratBis_1), \stratFun(\stratBis_2)} \pref \dist\} \quad\text{by~\eqref{eq:prcPreserved}}\\
		&= \{ \dist \in \Dist(\colors^\omega, \oalg) \mid \exists \stratBis_2' \in \stratsType{2}{\stratType}{\prodAS{(\arena, \initStates)}{\memSkel}},\,
		\prcSolo{\prodAS{\arena}{\memSkel}}{(\st,\memInit)}{\stratFun(\stratBis_1), \stratBis_2'} \pref \dist\} \\
		&= \UCol_\pref^\stratType(\prodAS{(\arena, \initStates)}{\memSkel}, (\st, \memInit), \stratFun(\stratBis_1)),
	\end{align*}
	where the penultimate line holds by bijectivity of $\stratFun$.
	The property holds symmetrically for a strategy $\stratBis_2\in\stratsType{2}{\stratType}{\arena, \initStates}$.
	Thus $\stratType$-optimality of strategies is preserved through $\stratFun$.

	Now remember that $\strat_i \in \stratsType{i}{\stratType}{\arena, \initStates}$ is a strategy encoded by a Mealy machine $(\memSkel, \memNxt)$.
	We notice that $\stratFun(\strat_i)$ corresponds to $\memNxt$ interpreted over the product initialized arena and is thus memoryless.
	By the previous property, we have that $\strat_i$ is $\stratType$-optimal in $\game$ if and only if $\memNxt$ is $\stratType$-optimal in $\game'$.
\end{proof}

\section{Results on splits}\label{sec:split}
We recall here technical results about \emph{split arenas} (Definition~\ref{def:split}) that are already present in~\cite{GZ09} with the slight difference that we consider \emph{initialized} arenas.

Let $(\arena = \arenaFull, \initStates)$ be an initialized arena, $t\in\states_1$ be a state controlled by $\Pone$, and $(\splitArena{\arena}{t}, \splitInitStates{t})$ be the split of $\initArena$ on $t$.

For all $\action\in\actions{t}$, we build a natural bijection between plays of the arena and plays of its split: it is a function
\[
\bijPaths{\action}\colon\Hists(\arena, \initStates) \to \Hists(\splitArena{\arena}{t}, \initStates^\action).
\]
This function simply labels the different states appearing along the play with the right action to make it a play of the split, starting arbitrarily with $\action$: if $t$ has never been visited, it picks action $\action$ by default; if $t$ has been visited, it picks the last action played in $t$.
Formally, let $\hist = \usualHist\in\Hists(\arena, \initStates)$.
We define $\bijPaths{\action}(\hist) = \st_0^{\action'_0}\action_1\st_1^{\action'_1}\ldots\action_n\st_n^{\action'_n}$ where we assume as usual that $t^{\action'_i} = t$, and for $i$, with $0\le i\le n$ such that $\st_i \neq t$,
\[
	\action_i' = \begin{cases}
		\action &\text{if for all $j < i$, $\st_j \neq t$}, \\
		\action_{k+1} &\text{if $k$ is the index of the visit to $t$ preceding $\st_i$ in $\hist$}.
	\end{cases}
\]
The history $\bijPaths{\action}(\hist)$ is a history of the split by construction.
Function $\bijPaths{\action}$ has an inverse $(\bijPaths{\action})^{-1}$ which associates to any history of the split starting in $\initStates^\action$ the same history in which all the action labels have been removed.
We can extend function $\bijPaths{\action}$ to a bijection on plays: let
\[
	\bijPlays{\action}\colon \Plays(\arena, \initStates) \to \Plays(\splitArena{\arena}{t}, \initStates^\action)
\]
be the function such that for $\play = \st_0\action_1\st_1\action_2\st_2\ldots\in\Plays(\arena, \initStates)$, if $\hist_n = \st_0\action_1\st_1\ldots\action_n\st_n$ is a prefix of $\play$, then $\bijPlays{\action}(\play) = \lim_{n\to\infty} \bijPaths{\action}(\hist_n)$.

For $i\in\{1, 2\}$, for all $\action\in\actions{t}$, we build a natural bijection between strategies of the arena and strategies of its split: it is a function
\[
\bijSplit{i}{\action} \colon \stratsType{i}{\rand}{\arena, \initStates} \to \stratsType{i}{\rand}{\splitArena{\arena}{t}, \initStates^\action}.
\]
For $\strat_i\in\stratsType{i}{\rand}{\arena, \initStates}$ a strategy of $\player{i}$, we define
$
	\bijSplit{i}{\action}(\strat_i) = \strat_i \circ (\bijPaths{\action})^{-1}.
$
Similarly, this function has an inverse: for $\strat_i\in\stratsType{i}{\rand}{\splitArena{\arena}{t}, \initStates^\action}$, we define $(\bijSplit{i}{\action})^{-1}(\strat_i) = \strat_i \circ \bijPaths{\action}$.

We prove a few results about these bijections, which correspond to~\cite[Proposition~10 and Lemma~12]{GZ09}.

\begin{lem}\label{lem:bijSplit}
	Let $\action\in\actions{t}$, $\st\in\initStates$, $\strat_1\in\stratsType{1}{\rand}{\arena, \initStates}$, and $\strat_2\in\stratsType{2}{\rand}{\arena, \initStates}$.
	We have
	\[
		\prcSolo{\arena}{\st}{\strat_1, \strat_2} =
		\prcSolo{\splitArena{\arena}{t}}{\st^\action}{\bijSplit{1}{\action}(\strat_1), \bijSplit{2}{\action}(\strat_2)}.
	\]
	Let $\pref$ a preference relation and $\stratType\in\allStrats$ be a type of strategies.
	For $i\in\{1, 2\}$, strategy $\strat_i$ is pure (resp.\ finite-memory) if and only if strategy $\bijSplit{i}{\action}(\strat_i)$ is pure (resp.\ finite-memory).
	For $i\in\{1, 2\}$, $\strat_i$ is $\stratType$-optimal in $(\arena, \initStates, \pref)$ if and only if $\bijSplit{i}{\action}(\strat_i)$ is $\stratType$-optimal in $(\splitArena{\arena}{t}, \initStates^\action, \pref)$.
	Moreover, $(\strat_1, \strat_2)$ is an $\stratType$-NE in $(\arena, \initStates, \pref)$ if and only if $(\bijSplit{1}{\action}(\strat_1), \bijSplit{2}{\action}(\strat_2))$ is an $\stratType$-NE in $(\splitArena{\arena}{t}, \initStates^\action, \pref)$.
\end{lem}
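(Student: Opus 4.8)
The plan is to follow the same template as the proof of Lemma~\ref{lem:memProd}: first transport the induced distributions through the history-level bijection, then read off the preservation of optimality and of Nash equilibria, and finally check that the structural features (pure, finite-memory) of strategies are respected.

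First I would record the elementary fact that $\bijPaths{\action}$ preserves colors, i.e.\ $\colHat{\bijPaths{\action}(\hist)} = \colHat{\hist}$ for every $\hist \in \Hists(\arena, \initStates)$; this is immediate from the construction of $\splitArena{\arena}{t}$, since labelling a state by an action neither creates nor changes any transition or color. The computational core is then to prove, by induction on the length of $\hist$, that
\[
	\pr{\arena}{\st}{\strat_1, \strat_2}{\Cyl{\hist}}
	= \pr{\splitArena{\arena}{t}}{\st^\action}{\bijSplit{1}{\action}(\strat_1), \bijSplit{2}{\action}(\strat_2)}{\Cyl{\bijPaths{\action}(\hist)}}.
\]
Each factor of the defining product matches: the transition probabilities agree because the split only relabels states, and the strategy probabilities agree because $\bijSplit{i}{\action}(\strat_i) = \strat_i\circ(\bijPaths{\action})^{-1}$ evaluated at $\bijPaths{\action}(\hist)$ gives back $\strat_i(\hist)$. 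As the cylinders form a semiring generating the $\sigma$-algebra, Carathéodory uniqueness~\cite[Theorem~A.1.3]{Dur19} lifts this to equality of the two play-distributions along the measurable bijection $\bijPlays{\action}$; pushing forward through $\colHatSolo$ and using color preservation yields $\prcSolo{\arena}{\st}{\strat_1, \strat_2} = \prcSolo{\splitArena{\arena}{t}}{\st^\action}{\bijSplit{1}{\action}(\strat_1), \bijSplit{2}{\action}(\strat_2)}$.

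With this distribution identity in hand, the statements about optimality and equilibria are essentially formal. Since $\bijSplit{2}{\action}$ is a bijection on $\Ptwo$'s strategies and the induced color distributions coincide, the sets $\UCol_\pref^\stratType(\arena, \st, \strat_1)$ and $\UCol_\pref^\stratType(\splitArena{\arena}{t}, \st^\action, \bijSplit{1}{\action}(\strat_1))$ are equal (and symmetrically for $\DCol_\pref^\stratType$ and $\strat_2$), exactly as in Lemma~\ref{lem:memProd}; hence $\stratType$-optimality transfers in both directions. The Nash equilibrium equivalence is then read directly off the three defining inequalities, each of which compares $\prcSolo{}{}{\cdot}$ distributions that are preserved verbatim. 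For all of this to be meaningful at a fixed type $\stratType$, I must know that $\bijSplit{i}{\action}$ restricts to a bijection between the type-$\stratType$ strategies, which is the content of the purity and finite-memory claims.

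The purity claim is immediate, since $\bijSplit{i}{\action}$ precomposes with a bijection on histories and therefore leaves the cardinality of every support unchanged. The finite-memory claim is where I expect the only real subtlety. In the direction from $\arena$ to its split the same skeleton works: one composes $\memNxt$ with the ``forget the label'' projection, since the split state determines the original state and colors are preserved. The converse direction is the delicate one, and is exactly the point flagged before the lemma: a finite-memory strategy on $\splitArena{\arena}{t}$ may behave differently in the distinct copies $\states^\action$, so to replay it in $\arena$ one must recover which action was last played in $t$. I would handle this by enriching the skeleton with a finite component recording the last action taken in $t$, so that finitely many extra memory states suffice. This bookkeeping is the step that genuinely requires care, while everything else sits on top of the distribution identity.
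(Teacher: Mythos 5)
Your proposal is correct and follows essentially the same route as the paper's proof: a cylinder-by-cylinder induction for the distribution identity, extension to the generated $\sigma$-algebra, color preservation of the play bijection to transfer to distributions over color sequences, and the purely formal transfer of optimality and Nash equilibria through the strategy bijections (which the paper states without the explicit $\UCol$/$\DCol$ computation you give). The one point where the paper is slightly more careful is your final step: since memory skeletons read colors rather than actions, a skeleton cannot directly ``record the last action taken in $t$''; the paper resolves this in a footnote by enriching the game graph with a fresh color for each action available at $t$, which is the same idea as your extra finite component, just made compatible with the color-based memory model.
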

\begin{proof}
	We first show the equality of two distributions in $\Dist(\Plays(\arena, \st), \oalg_{(\arena, \st)})$:
	\begin{equation}\label{eq:bijSplit}
		\pr{\arena}{\st}{\strat_1, \strat_2}{\cdot} =
		\pr{\splitArena{\arena}{t}}{\st^\action}{\bijSplit{1}{\action}(\strat_1), \bijSplit{2}{\action}(\strat_2)}{\bijPlays{\action}(\cdot)}.
	\end{equation}
	We prove the equality for cylinders $\Cyl{\hist}$ with $\hist\in\Hists(\arena, \initStates)$.
	Notice that $\bijPlays{\action}(\Cyl{\hist}) = \Cyl{\bijPaths{\action}(\hist)}$.
	Thanks to our construction of functions $\bijSplit{1}{\action}$ and $\bijSplit{2}{\action}$, an easy induction on the length of $\hist$ shows that
	\[
	\pr{\arena}{\st}{\strat_1, \strat_2}{\Cyl{\hist}} =
	\pr{\splitArena{\arena}{t}}{\st^\action}{\bijSplit{1}{\action}(\strat_1), \bijSplit{2}{\action}(\strat_2)}{\Cyl{\bijPaths{\action}(\hist)}}.
	\]
	Since cylinders generate the $\sigma$-algebra, this proves~\eqref{eq:bijSplit}.

	Now notice that the bijection on plays $\bijPlays{\action}$ preserves the sequence of colors seen, i.e., for all $\play\in\Plays(\arena, \initStates)$, $\colHat{\play} = \colHat{\bijPlays{\action}(\play)}$.
	Using the definition of $\prcOperator$, we can therefore conclude that
	\[
	\prcSolo{\arena}{\st}{\strat_1, \strat_2} =
	\prcSolo{\splitArena{\arena}{t}}{\st^\action}{\bijSplit{1}{\action}(\strat_1), \bijSplit{2}{\action}(\strat_2)}.
	\]

	The claims about $\stratType$-optimality and $\stratType$-NE follow from the first one: as $\bijSplit{i}{\action}$ is a bijection that preserves the induced distributions on colors, it also preserves $\stratType$-optimality and $\stratType$-NE\@.

	Bijection $\bijSplit{i}{\action}$ clearly preserves the ``pure'' feature of strategies by construction, in both directions.
	Now if $\strat_i$ is finite-memory, then $\bijSplit{i}{\action}(\strat_i)$ does not need any more memory to play as it has access to the same information and the last action played in $t$.
	In the other direction, if $\bijSplit{i}{\action}(\strat_i)$ is finite-memory, then $\strat_i$ can play with the same memory \emph{plus} extra information about the last action that was played in $t$.
	Therefore, $\strat_i$ might need more memory than $\bijSplit{i}{\action}(\strat_i)$, but that memory stays finite.%
	\footnote{Formally, our memory model is based on colors, and not on actions. However, we can easily enrich the game graph with a new color for each action available at $t$, and a memory skeleton (which reads colors) can then remember the last action played at $t$.} 
\end{proof}

\begin{lem}\label{lem:splitProjection}
	Let $\pref$ be a preference relation and $\stratType\in\allStrats$ be a type of strategies.
	Let $(\arena = \arenaFull, \initStates)$ be an initialized arena, and $(\splitArena{\arena}{t}, \splitInitStates{t})$ be its split on $t$ for some $t\in\states_1$.
	Assume $(\strat_1, \strat_2)$ is an $\stratType$-NE in $(\splitArena{\arena}{t}, \initStates^{\action^*})$ for some $\action^*\in\actions{t}$.
	If $\strat_1$ is pure memoryless, and $\strat_1(t) = \action^*$, then there exists an $\stratType$-NE $(\strat_1', \strat_2')$ in $(\arena, \initStates)$ such that $\strat_1'$ is pure memoryless.
\end{lem}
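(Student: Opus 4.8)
The plan is to transport the equilibrium $(\strat_1,\strat_2)$ from the split back to $\arena$ through the strategy bijections of Lemma~\ref{lem:bijSplit}, and then to repair the only defect of the transported pair, namely that $\Pone$'s component is a priori only finite-memory rather than memoryless. First I would apply Lemma~\ref{lem:bijSplit} with the action $\action^*$: since $(\strat_1,\strat_2)$ is an $\stratType$-NE in $(\splitArena{\arena}{t}, \initStates^{\action^*})$, the pair
\[
	\bigl((\bijSplit{1}{\action^*})^{-1}(\strat_1),\ (\bijSplit{2}{\action^*})^{-1}(\strat_2)\bigr)
\]
is an $\stratType$-NE in $(\arena,\initStates,\pref)$, and both decoded strategies remain of type $\stratType$ since the bijections preserve the pure and finite-memory features. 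Write $\widehat{\strat}_1 = (\bijSplit{1}{\action^*})^{-1}(\strat_1)$ and $\strat_2' = (\bijSplit{2}{\action^*})^{-1}(\strat_2)$; here $\strat_2'$ need not be memoryless (which is allowed, as the statement only constrains $\Pone$), while $\widehat{\strat}_1$ is pure finite-memory but not yet memoryless.

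The key observation is that the two hypotheses on $\strat_1$ force $\widehat{\strat}_1$ to behave memorylessly on every reachable history. Recall $\widehat{\strat}_1(\hist) = \strat_1(\bijPaths{\action^*}(\hist))$, and that $\bijPaths{\action^*}(\hist)$ ends in the merged state $t$ whenever $\histOut{\hist} = t$; since $\strat_1$ is memoryless with $\strat_1(t) = \action^*$, the strategy $\widehat{\strat}_1$ plays $\action^*$ in $t$ after \emph{every} history. Consequently, along any history in which $\action^*$ is played at each visit to $t$, the labelling $\bijPaths{\action^*}$ assigns the tag $\action^*$ to every state, so for such a history ending in $\st$ we get $\widehat{\strat}_1(\hist) = \strat_1(\st^{\action^*})$, depending only on $\st$. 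I would therefore define the genuinely pure memoryless strategy $\strat_1'$ on $\arena$ by $\strat_1'(\st) = \strat_1(\st^{\action^*})$ for all $\st\in\states_1$ (so that $\strat_1'(t) = \action^*$); it is of type $\stratType$ for every $\stratType$, and it agrees with $\widehat{\strat}_1$ on every history along which $\action^*$ is played at each visit to $t$.

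It then remains to verify that $(\strat_1',\strat_2')$ is an $\stratType$-NE. The central step is the distributional identity
\[
	\prcSolo{\arena}{\st}{\strat_1', \strat_2''} = \prcSolo{\arena}{\st}{\widehat{\strat}_1, \strat_2''}
	\quad\text{for every } \strat_2''\in\stratsType{2}{\stratType}{\arena, \initStates} \text{ and all } \st\in\initStates,
\]
which holds because $\strat_1'$ and $\widehat{\strat}_1$ both assign probability $1$ to $\action^*$ in $t$: any cylinder whose history takes a different action in $t$ has probability zero under both pairs, and on the remaining cylinders the two strategies agree on all prefixes, so a routine induction shows the induced pre-measures on cylinders coincide, hence so do the induced distributions on colors. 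Granting this identity, both $\stratType$-NE inequalities for $(\strat_1',\strat_2')$ reduce to those already established for $(\widehat{\strat}_1,\strat_2')$: for the $\Pone$-side, for every $\strat_1''\in\stratsType{1}{\stratType}{\arena,\initStates}$ we have $\prcSolo{\arena}{\st}{\strat_1'', \strat_2'}\pref\prcSolo{\arena}{\st}{\widehat{\strat}_1, \strat_2'} = \prcSolo{\arena}{\st}{\strat_1', \strat_2'}$; for the $\Ptwo$-side, $\prcSolo{\arena}{\st}{\strat_1', \strat_2'} = \prcSolo{\arena}{\st}{\widehat{\strat}_1, \strat_2'}\pref\prcSolo{\arena}{\st}{\widehat{\strat}_1, \strat_2''} = \prcSolo{\arena}{\st}{\strat_1', \strat_2''}$.

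I expect the main obstacle to be precisely this last bookkeeping: justifying rigorously that passing from the finite-memory $\widehat{\strat}_1$ to the memoryless $\strat_1'$ changes no induced distribution. This is exactly where both hypotheses on $\strat_1$ are essential — memorylessness together with $\strat_1(t) = \action^*$ guarantees that only the $\action^*$-labelled copy of the split is ever visited, so the ``last action at $t$'' information that $\widehat{\strat}_1$ would otherwise carry is constantly $\action^*$ and can be discarded. (The same localization to the subarena $\arena_{\action^*}$ is the content of Lemma~\ref{lem:stratSplit}, which could alternatively be invoked to phrase this step.) Everything else is immediate: the transport of the equilibrium and the preservation of strategy types come straight from Lemma~\ref{lem:bijSplit}, and no assumption that mixing is useless is needed.
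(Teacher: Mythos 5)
Your proposal is correct and follows essentially the same route as the paper: transport the equilibrium back through the strategy bijections of Lemma~\ref{lem:bijSplit}, then use memorylessness of $\strat_1$ together with $\strat_1(t)=\action^*$ to see that the $\action^*$-labelling is the only one ever relevant, so $\Pone$'s transported strategy reduces to $\st\mapsto\strat_1(\st^{\action^*})$. If anything, you are slightly more careful than the paper, which directly asserts that the transported strategy is memoryless (implicitly restricting to histories consistent with it), whereas you explicitly substitute the pure memoryless strategy and verify via the cylinder argument that all induced distributions, and hence both equilibrium inequalities, are unchanged.
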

\begin{proof}
	We set $(\strat_1', \strat_2') = ((\bijSplit{1}{\action^*})^{-1}(\strat_1), (\bijSplit{2}{\action^*})^{-1}(\strat_2))$, which is an $\stratType$-NE in $\initArena$ by Lemma~\ref{lem:bijSplit}.
	Moreover, Lemma~\ref{lem:bijSplit} shows that $\strat_1'$ is pure.
	It is left to prove that in this particular case, $\strat_1'$ is memoryless.
	Let $\hist\in\Hists_1(\arena, \initStates)$ be a history consistent with $\strat_1'$.
	We know that if $\histOut{\hist} = t$, then $\strat_1'(\hist) = \action^*$ because this is the only possible action played in $t$ by $\strat_1$ in $\splitArena{\arena}{t}$.
	Now assume $\histOut{\hist} \neq t$.
	Notice that since any action taken at $t$ is necessarily $\action^*$ (and since we start in $\initStates^{\action^*}$), any state appearing along $\bijPaths{\action^*}(\hist)$ (except $t$) is necessarily labeled by $\action^*$.
	Therefore, we have $\strat_1'(\hist) = \strat_1(\bijPaths{\action^*}(\hist)) = \strat_1(\st^{\action^*})$, which only depends on $\st$.
\end{proof}

\begin{lem}\label{lem:stratSplit}
	Let $\pref$ be a preference relation.
	Let $(\arena = \arenaFull, \initStates)$ be an initialized arena, and $(\splitArena{\arena}{t}, \splitInitStates{t})$ be its split on $t$ for some $t\in\states_1$.
	Let $\strat_1\in\stratsType{1}{\PFM}{\splitArena{\arena}{t}, \splitInitStates{t}}$ and $\strat_2\in\stratsType{2}{\rand}{\splitArena{\arena}{t}, \splitInitStates{t}}$.
	Assume $\strat_1$ is pure memoryless, and let $\action^* = \strat_1(t)$.
	Let $(\arena_{\action^*}, \initStates^{\action^*})$ be the initialized subarena of $\initArena$ in which only action $\action^*$ is available in $t$ and states are renamed $\st\mapsto\st^{\action^*}$.
	Let $\strat_1^{\action^*}$ and $\strat_2^{\action^*}$ be the restrictions of $\strat_1$ and $\strat_2$ to $\Hists(\arena_{\action^*}, \initStates^{\action^*})$, which are strategies on $(\arena_{\action^*}, \initStates^{\action^*})$.
	Then for all $\st^{\action^*} \in \initStates^{\action^*}$,
	\[
	\prcSolo{\splitArena{\arena}{t}}{\st^{\action^*}}{\strat_1, \strat_2} = \prcSolo{\arena_{\action^*}}{\st^{\action^*}}{\strat_1^{\action^*}, \strat_2^{\action^*}}.
	\]
\end{lem}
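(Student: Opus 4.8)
The plan is to exploit that a pure memoryless $\strat_1$ with $\strat_1(t) = \action^*$ \emph{confines} every play starting in $\st^{\action^*}$ to the single copy $\arena_{\action^*}$ of the split, and then to match the induced measures cylinder by cylinder before pushing them through $\colHatSolo$. Recall that by Definition~\ref{def:split} the split is $\bigArenaJoin_{\action\in\actions{t}} (\arena_\action, t^\action)$ with all $t^\action$ identified to a single merged state $t$; in particular the $\action^*$-labeled part of $\splitArena{\arena}{t}$ is exactly $\arena_{\action^*}$, the merge only adding at $t$ the extra actions coming from the other copies.

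First I would establish the confinement claim: every history in $\Hists(\splitArena{\arena}{t}, \st^{\action^*})$ consistent with $\strat_1$ visits only states of the form $t$ or $\st^{\action^*}$ with $\st\neq t$, i.e.\ states of $\arena_{\action^*}$. This is a straightforward induction on history length. The base case holds since $\st^{\action^*}$ lies in $\arena_{\action^*}$. For the inductive step, from a state $\st^{\action^*}$ with $\st\neq t$ the available actions and transitions in the split coincide with those of $\arena_{\action^*}$ and, by the merge and renaming conventions, lead to $\action^*$-labeled states or to the merged state $t$; from $t$, strategy $\strat_1$ plays $\action^*$, which by construction of the split enters $\arena_{\action^*}$. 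The extra actions present at $t$ in the split are thus never used. Consequently the reachable part of the split from $\st^{\action^*}$ under $\strat_1$ is precisely $\arena_{\action^*}$, and the histories in $\Hists(\splitArena{\arena}{t}, \st^{\action^*})$ consistent with $\strat_1$ coincide with those in $\Hists(\arena_{\action^*}, \st^{\action^*})$.

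Next I would compare the two pre-measures on cylinders. For a history $\hist = \usualHist \in \Hists(\arena_{\action^*}, \st^{\action^*})$ --- which by the previous step is also a split history --- I would unfold both $\pr{\splitArena{\arena}{t}}{\st^{\action^*}}{\strat_1, \strat_2}{\Cyl{\hist}}$ and $\pr{\arena_{\action^*}}{\st^{\action^*}}{\strat_1^{\action^*}, \strat_2^{\action^*}}{\Cyl{\hist}}$ using the product formula. Each factor matches: the transition probabilities agree because the transition function of the split restricted to states of $\arena_{\action^*}$ equals that of $\arena_{\action^*}$, and the strategic factors agree because $\strat_1^{\action^*}$ and $\strat_2^{\action^*}$ are by definition the restrictions of $\strat_1$ and $\strat_2$ to histories of $\arena_{\action^*}$ (and at $t$ both strategies play $\action^*$). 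Since the cylinders form a semi-ring generating the $\sigma$-algebra, Carath\'eodory's uniqueness theorem (as already invoked for Lemma~\ref{lem:bijSplit}) upgrades this equality on cylinders to an equality of the full distributions on plays: viewing $\Plays(\arena_{\action^*}, \st^{\action^*})$ as a subset of $\Plays(\splitArena{\arena}{t}, \st^{\action^*})$, the split distribution is the pushforward of the subarena distribution under this inclusion, since by confinement it assigns probability $0$ to every play that leaves $\arena_{\action^*}$.

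Finally I would push both distributions through $\colHatSolo$. Because the coloring function of the split restricted to states of $\arena_{\action^*}$ coincides with that of $\arena_{\action^*}$, the inclusion of play spaces commutes with $\colHatSolo$, so the induced distributions on $(\colors^\omega, \oalg)$ are equal, yielding $\prcSolo{\splitArena{\arena}{t}}{\st^{\action^*}}{\strat_1, \strat_2} = \prcSolo{\arena_{\action^*}}{\st^{\action^*}}{\strat_1^{\action^*}, \strat_2^{\action^*}}$. The only genuinely delicate point is the confinement argument together with the bookkeeping that the reachable part of the split is an exact copy of $\arena_{\action^*}$; once that is in place, the identity of the measures on cylinders and its extension are routine, and the fact that $\strat_2$ is an arbitrary (possibly randomized, infinite-memory) strategy causes no trouble, as only its values on histories of $\arena_{\action^*}$ are ever relevant.
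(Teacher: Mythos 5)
Your proposal is correct and follows essentially the same route as the paper's proof: the key observation in both is that a pure memoryless $\strat_1$ with $\strat_1(t)=\action^*$ confines every play starting in $\st^{\action^*}$ to the copy $\arena_{\action^*}$, from which the equality of the induced distributions follows. The paper treats the measure equality as immediate from this confinement, whereas you spell out the cylinder-by-cylinder verification and the Carath\'eodory extension; this is just a more detailed rendering of the same argument.
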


\begin{proof}
	Let $\st^{\action^*}\in\initStates^{\action^*}$.
	Notice that any play in $\splitArena{\arena}{t}$ consistent with $\strat_1$ starting in $\st^{\action^*}$ only visits states among $\states^{\action^*}$, since it starts there and every action played in $t$ is $\action^*$.
	This shows that $\strat_1^{\action^*}$ and $\strat_2^{\action^*}$ are indeed well-defined strategies on $(\arena_{\action^*}, \initStates^{\action^*})$.
	Moreover, it shows that
	\[
		\prcSolo{\splitArena{\arena}{t}}{\st^{\action^*}}{\strat_1, \strat_2} = \prcSolo{\arena_{\action^*}}{\st^{\action^*}}{\strat_1^{\action^*}, \strat_2^{\action^*}}
	\]
	since every infinite play stays in the subarena of $\splitArena{\arena}{t}$ corresponding to $\arena_{\action^*}$.
\end{proof}

\section{Missing proofs of Section~\ref{sec:1p}}\label{sec:proofMonSel}
We prove the two technical lemmas from Section~\ref{sec:1p} that were stated without proof.

\colHatDist*
\begin{proof}
	Let $E\in \oalg$ be an event about infinite sequences of colors.
	Then,
	\begin{align*}
		&\colHat{\hist\dist}(E) \\
		&= \hist\dist(\colHatSolo^{-1}(E)) &&\text{by definition of $\colHatSolo$ on distributions}\\
		&= \dist(\{\play\in\Plays(\arena, \histOut{\hist}) \mid \hist\play\in\colHatSolo^{-1}(E)\}) &&\text{by definition of $\hist\dist$}\\
		&= \dist(\{\play\in\Plays(\arena, \histOut{\hist}) \mid \colHat{\hist\play}\in E\}) \\
		&= \dist(\{\play\in\Plays(\arena, \histOut{\hist}) \mid \colHat{\hist}\colHat{\play}\in E\}) \\
		&= (\dist \circ \colHatSolo^{-1})(\{\word'\in\colors^\omega \mid \colHat{\hist}\word'\in E\}) \\
		&= \colHat{\dist}(\{\word'\in\colors^\omega \mid \colHat{\hist}\word'\in E\}) &&\text{by definition of $\colHatSolo$ on distributions}\\
		&= (\colHat{\hist}\colHat{\dist})(E) &&\text{by definition of $\colHat{\hist}\colHat{\dist}$.}\qedhere
	\end{align*}
\end{proof}

\coincides*
\begin{proof}
	Assume $\arena = \arenaFull$.
	Using the definition of $\pr{\arena}{\st}{\blank}{\Cyl{\hist}}$ and the hypothesis, we have
	\[
	\pr{\arena}{\st}{\strat_1}{\Cyl{\hist}}
	= \prod_{i = 1}^{n} \strat_1(\hist_{i-1})(\action_i)\cdot\transProb(\st_{i-1}, \action_i, \st_i)
	= \prod_{i = 1}^{n} \stratBis_1(\hist_{i-1})(\action_i)\cdot\transProb(\st_{i-1}, \action_i, \st_i)
	= \pr{\arena}{\st}{\stratBis_1}{\Cyl{\hist}}
	\]
	which proves the first claim.

	Let $H = \{\Cyl{\hist}\mid \hist\in\Hists(\arena, \st)\text{ and $\strat_1$ coincides with $\stratBis_1$ on $\hist$}\}$.
	By the first claim, $\prSolo{\arena}{\st}{\strat_1}$ and $\prSolo{\arena}{\st}{\stratBis_1}$ are equal on $H$.
	As this class is closed by intersection, by the monotone class lemma, $\prSolo{\arena}{\st}{\strat_1}$ and $\prSolo{\arena}{\st}{\stratBis_1}$ are also equal on the smallest $\sigma$-algebra generated by $H$.

	For $E\in \oalg_{(\arena, \st)}$ an event, we denote by $\complement{E} = \Plays(\arena, \st) \setminus E$ its complement;
	notice that
	\begin{align*}
		\lnot\F t
		&= \complement{\Big( \bigcup_{n\in\IN} \Cyl{\underbrace{\states\setminus\{t\}, \act, \ldots, \states\setminus\{t\}, \act}_{\text{$n$ times ``$\states\setminus\{t\}, \act$''}}, t} \Big)} \\
		&= \bigcap_{n\in\IN} \complement{\Cyl{\underbrace{\states\setminus\{t\}, \act, \ldots, \states\setminus\{t\}, \act}_{\text{$n$ times ``$\states\setminus\{t\}, \act$''}}, t}},
	\end{align*}
	where $\Cyl{\states\setminus\{t\}, \act, \ldots, \states\setminus\{t\}, \act, t}$ refers to the union over all cylinders of histories that go through a certain number of states that are not $t$, and then end in $t$.
	Event $\lnot \F t$ can therefore be expressed with complements and countable intersection of cylinders in $H$, so $\pr{\arena}{\st}{\strat_1}{\lnot\F t} = \pr{\arena}{\st}{\stratBis_1}{\lnot\F t}$, which proves the second claim.

	We now assume that $\pr{\arena}{\st}{\strat_1}{\lnot \F t} > 0$.
	Using the definition of conditional probabilities, to prove that $\pr{\arena}{\st}{\strat_1}{\blank \mid \lnot\F t} = \pr{\arena}{\st}{\stratBis_1}{\blank \mid \lnot\F t}$, it is left to prove that for all events $E\in\oalg_{(\arena, \st)}$, $\pr{\arena}{\st}{\strat_1}{E\cap \lnot \F t} = \pr{\arena}{\st}{\stratBis_1}{E\cap \lnot \F t}$.
	We first prove this equality if $E$ is a cylinder $\Cyl{\hist}$ for some $\hist\in\Hists(\arena, \st)$.
	If $\hist$ visits $t$, then $E\cap \lnot\F t = \emptyset$ and $\pr{\arena}{\st}{\strat_1}{E\cap \lnot \F t} = \pr{\arena}{\st}{\stratBis_1}{E\cap \lnot \F t} = 0$.
	If $\hist$ does not visit $t$, then $E$ is an element of~$H$.
	Therefore, $E\cap \lnot\F t$ can be expressed with complements and countable intersection of elements of~$H$, so $\pr{\arena}{\st}{\strat_1}{E\cap \lnot \F t} = \pr{\arena}{\st}{\stratBis_1}{E\cap \lnot \F t}$.
	Since distributions $\pr{\arena}{\st}{\strat_1}{\blank\cap \lnot \F t}$ and $\pr{\arena}{\st}{\stratBis_1}{\blank\cap \lnot \F t}$ are equal on all cylinders, they are also equal on all the events in $\oalg_{(\arena, \st)}$.
\end{proof}

\end{document}